\begin{document}  %for latex 2.09

\newcommand{\nc}{\newcommand}
\newcommand{\delete}[1]{}
\nc{\dfootnote}[1]{{}}          %{{}}
\nc{\ffootnote}[1]{\dfootnote{#1}}
%\nc{\mfootnote}[1]{{}}        % Use this to suppress footnotes
\nc{\mfootnote}[1]{\footnote{#1}} % Use this to show footnotes
\nc{\todo}[1]{\tred{To do:} #1} \delete{
\nc{\mlabel}[1]{\label{#1}}  % Use this to suppress names
\nc{\mcite}[1]{\cite{#1}}  % Use this to suppress names
\nc{\mref}[1]{\ref{#1}}  % Use this to suppress names
}

\nc{\mlabel}[1]{\label{#1}  % Use the next two lines to show names
{\hfill \hspace{1cm}{\bf{{\ }\hfill(#1)}}}}
\nc{\mcite}[1]{\cite{#1}{{\bf{{\ }(#1)}}}}  % Use this lines to show names
\nc{\mref}[1]{\ref{#1}{{\bf{{\ }(#1)}}}}  % Use this lines to show names

\nc{\mbibitem}[1]{\bibitem{#1}} % Use this to show number
%\nc{\mbibitem}[1]{\bibitem[\bf #1]{#1}} % Use this to show name
\nc{\mkeep}[1]{\marginpar{{\bf #1}}} % Use this to show marginpar
%\nc{\mkeep}[1]{{}}      % Use this to suppress marginpar

%%%%%%%%%%%%%%%%%%%%%%%% Statements
\newtheorem{theorem}{Theorem}[section]
\newtheorem{prop}[theorem]{Proposition}
\newtheorem{defn}[theorem]{Definition}
\newtheorem{lemma}[theorem]{Lemma}
\newtheorem{coro}[theorem]{Corollary}
\newtheorem{prop-def}[theorem]{Proposition-Definition}
\newtheorem{claim}{Claim}[section]
\newtheorem{remark}[theorem]{Remark}
\newtheorem{propprop}{Proposed Proposition}[section]
\newtheorem{conjecture}{Conjecture}
\newtheorem{exam}[theorem]{Example}
\newtheorem{assumption}{Assumption}
\newtheorem{condition}[theorem]{Assumption}

\renewcommand{\labelenumi}{{\rm(\alph{enumi})}}
\renewcommand{\theenumi}{\alph{enumi}}

\nc{\tred}[1]{\textcolor{red}{#1}}
\nc{\tblue}[1]{\textcolor{blue}{#1}}
\nc{\tgreen}[1]{\textcolor{green}{#1}}
\nc{\tpurple}[1]{\textcolor{purple}{#1}}
\nc{\btred}[1]{\textcolor{red}{\bf #1}}
\nc{\btblue}[1]{\textcolor{blue}{\bf #1}}
\nc{\btgreen}[1]{\textcolor{green}{\bf #1}}
\nc{\btpurple}[1]{\textcolor{purple}{\bf #1}}

\nc{\li}[1]{\textcolor{red}{Li:#1}}
\nc{\cm}[1]{\textcolor{blue}{Chengming: #1}}
\nc{\xiang}[1]{\textcolor{green}{Xiang: #1}}

%%%%%%%%%%%%%%%%%%%%%%% symbols
\nc{\adec}{\check{;}} \nc{\aop}{\alpha}
\nc{\dftimes}{\widetilde{\otimes}} \nc{\dfl}{\succ} \nc{\dfr}{\prec}
\nc{\dfc}{\circ} \nc{\dfb}{\bullet} \nc{\dft}{\star}
\nc{\dfcf}{{\mathbf k}} \nc{\apr}{\ast} \nc{\spr}{\cdot}
\nc{\twopr}{\circ} \nc{\sempr}{\ast}
\nc{\disp}[1]{\displaystyle{#1}}
\nc{\bin}[2]{ (_{\stackrel{\scs{#1}}{\scs{#2}}})}  %binomial coeff
\nc{\binc}[2]{ \left (\!\! \begin{array}{c} \scs{#1}\\
    \scs{#2} \end{array}\!\! \right )}  %binomial coeff
\nc{\bincc}[2]{  \left ( {\scs{#1} \atop
    \vspace{-.5cm}\scs{#2}} \right )}  %binomial coeff
\nc{\sarray}[2]{\begin{array}{c}#1 \vspace{.1cm}\\ \hline
    \vspace{-.35cm} \\ #2 \end{array}}
\nc{\bs}{\bar{S}} \nc{\dcup}{\stackrel{\bullet}{\cup}}
\nc{\dbigcup}{\stackrel{\bullet}{\bigcup}} \nc{\etree}{\big |}
\nc{\la}{\longrightarrow} \nc{\fe}{\'{e}} \nc{\rar}{\rightarrow}
\nc{\dar}{\downarrow} \nc{\dap}[1]{\downarrow
\rlap{$\scriptstyle{#1}$}} \nc{\uap}[1]{\uparrow
\rlap{$\scriptstyle{#1}$}} \nc{\defeq}{\stackrel{\rm def}{=}}
\nc{\dis}[1]{\displaystyle{#1}} \nc{\dotcup}{\,
\displaystyle{\bigcup^\bullet}\ } \nc{\sdotcup}{\tiny{
\displaystyle{\bigcup^\bullet}\ }} \nc{\hcm}{\ \hat{,}\ }
\nc{\hcirc}{\hat{\circ}} \nc{\hts}{\hat{\shpr}}
\nc{\lts}{\stackrel{\leftarrow}{\shpr}}
\nc{\rts}{\stackrel{\rightarrow}{\shpr}} \nc{\lleft}{[}
\nc{\lright}{]} \nc{\uni}[1]{\tilde{#1}} \nc{\wor}[1]{\check{#1}}
\nc{\free}[1]{\bar{#1}} \nc{\den}[1]{\check{#1}} \nc{\lrpa}{\wr}
\nc{\curlyl}{\left \{ \begin{array}{c} {} \\ {} \end{array}
    \right .  \!\!\!\!\!\!\!}
\nc{\curlyr}{ \!\!\!\!\!\!\!
    \left . \begin{array}{c} {} \\ {} \end{array}
    \right \} }
\nc{\leaf}{\ell}       % number of leafs
\nc{\longmid}{\left | \begin{array}{c} {} \\ {} \end{array}
    \right . \!\!\!\!\!\!\!}
\nc{\ot}{\otimes} \nc{\sot}{{\scriptstyle{\ot}}}
\nc{\otm}{\overline{\ot}} \nc{\ora}[1]{\stackrel{#1}{\rar}}
\nc{\ola}[1]{\stackrel{#1}{\la}}%${\Bbb Z}$
\nc{\pltree}{\calt^\pl} \nc{\epltree}{\calt^{\pl,\NC}}
\nc{\rbpltree}{\calt^r} \nc{\scs}[1]{\scriptstyle{#1}}
\nc{\mrm}[1]{{\rm #1}}
\nc{\dirlim}{\displaystyle{\lim_{\longrightarrow}}\,}
\nc{\invlim}{\displaystyle{\lim_{\longleftarrow}}\,}
\nc{\mvp}{\vspace{0.5cm}} \nc{\svp}{\vspace{2cm}}
\nc{\vp}{\vspace{8cm}} \nc{\proofbegin}{\noindent{\bf Proof: }}
%\nc{\proofbegin}{\begin{proof}} % AMS command
\nc{\proofend}{$\blacksquare$ \vspace{0.5cm}}
%\nc{\proofend}{\end{proof}} %AMS command
\nc{\freerbpl}{{F^{\mathrm RBPL}}}
\nc{\sha}{{\mbox{\cyr X}}}  %used to be \cyr
\nc{\ncsha}{{\mbox{\cyr X}^{\mathrm NC}}} \nc{\ncshao}{{\mbox{\cyr
X}^{\mathrm NC,\,0}}}
\nc{\shpr}{\diamond}    %Shuffle product
\nc{\shprm}{\overline{\diamond}}    %Shuffle product
\nc{\shpro}{\diamond^0}    %Shuffle product
\nc{\shprr}{\diamond^r}     %product on controlled trees
\nc{\shpra}{\overline{\diamond}^r} \nc{\shpru}{\check{\diamond}}
\nc{\catpr}{\diamond_l} \nc{\rcatpr}{\diamond_r}
\nc{\lapr}{\diamond_a} \nc{\sqcupm}{\ot} \nc{\lepr}{\diamond_e}
\nc{\vep}{\varepsilon} \nc{\labs}{\mid\!} \nc{\rabs}{\!\mid}
\nc{\hsha}{\widehat{\sha}} \nc{\lsha}{\stackrel{\leftarrow}{\sha}}
\nc{\rsha}{\stackrel{\rightarrow}{\sha}} \nc{\lc}{\lfloor}
\nc{\rc}{\rfloor} \nc{\tpr}{\sqcup} \nc{\nctpr}{\vee}
\nc{\plpr}{\star} \nc{\rbplpr}{\bar{\plpr}} \nc{\sqmon}[1]{\langle
#1\rangle} \nc{\forest}{\calf} \nc{\ass}[1]{\alpha({#1})}
\nc{\altx}{\Lambda_X} \nc{\vecT}{\vec{T}} \nc{\onetree}{\bullet}
\nc{\Ao}{\check{A}} \nc{\seta}{\underline{\Ao}}
\nc{\deltaa}{\overline{\delta}} \nc{\trho}{\tilde{\rho}}

%%%%%%%%%%%%%%%%%%%%% roman fonts, in alphabetic order
\nc{\mmbox}[1]{\mbox{\ #1\ }} \nc{\ann}{\mrm{ann}}
\nc{\Aut}{\mrm{Aut}} \nc{\can}{\mrm{can}} \nc{\twoalg}{{two-sided
algebra}\xspace} \nc{\colim}{\mrm{colim}} \nc{\Cont}{\mrm{Cont}}
\nc{\rchar}{\mrm{char}} \nc{\cok}{\mrm{coker}} \nc{\dtf}{{R-{\rm
tf}}} \nc{\dtor}{{R-{\rm tor}}}
\renewcommand{\det}{\mrm{det}}
\nc{\depth}{{\mrm d}} \nc{\Div}{{\mrm Div}} \nc{\End}{\mrm{End}}
\nc{\Ext}{\mrm{Ext}} \nc{\Fil}{\mrm{Fil}} \nc{\Frob}{\mrm{Frob}}
\nc{\Gal}{\mrm{Gal}} \nc{\GL}{\mrm{GL}} \nc{\Hom}{\mrm{Hom}}
\nc{\hsr}{\mrm{H}} \nc{\hpol}{\mrm{HP}} \nc{\id}{\mrm{id}}
\nc{\im}{\mrm{im}} \nc{\incl}{\mrm{incl}} \nc{\length}{\mrm{length}}
\nc{\LR}{\mrm{LR}} \nc{\mchar}{\rm char} \nc{\NC}{\mrm{NC}}
\nc{\mpart}{\mrm{part}} \nc{\pl}{\mrm{PL}} \nc{\ql}{{\QQ_\ell}}
\nc{\qp}{{\QQ_p}} \nc{\rank}{\mrm{rank}} \nc{\rba}{\rm{RBA }}
\nc{\rbas}{\rm{RBAs }} \nc{\rbpl}{\mrm{RBPL}} \nc{\rbw}{\rm{RBW }}
\nc{\rbws}{\rm{RBWs }} \nc{\rcot}{\mrm{cot}}
\nc{\rest}{\rm{controlled}\xspace} \nc{\rdef}{\mrm{def}}
\nc{\rdiv}{{\rm div}} \nc{\rtf}{{\rm tf}} \nc{\rtor}{{\rm tor}}
\nc{\res}{\mrm{res}} \nc{\SL}{\mrm{SL}} \nc{\Spec}{\mrm{Spec}}
\nc{\tor}{\mrm{tor}} \nc{\Tr}{\mrm{Tr}} \nc{\mtr}{\mrm{sk}}

%%%%%%%%%%%%%%%%%% bold face
\nc{\ab}{\mathbf{Ab}} \nc{\Alg}{\mathbf{Alg}}
\nc{\Algo}{\mathbf{Alg}^0} \nc{\Bax}{\mathbf{Bax}}
\nc{\Baxo}{\mathbf{Bax}^0} \nc{\RB}{\mathbf{RB}}
\nc{\RBo}{\mathbf{RB}^0} \nc{\BRB}{\mathbf{RB}}
\nc{\Dend}{\mathbf{DD}} \nc{\bfk}{{\bf k}} \nc{\bfone}{{\bf 1}}
\nc{\base}[1]{{a_{#1}}} \nc{\detail}{\marginpar{\bf More detail}
    \noindent{\bf Need more detail!}
    \svp}
\nc{\Diff}{\mathbf{Diff}} \nc{\gap}{\marginpar{\bf
Incomplete}\noindent{\bf Incomplete!!}
    \svp}
\nc{\FMod}{\mathbf{FMod}} \nc{\mset}{\mathbf{MSet}}
\nc{\rb}{\mathrm{RB}} \nc{\Int}{\mathbf{Int}}
\nc{\Mon}{\mathbf{Mon}}
%\nc{\remark}{\noindent{\bf Remark: }}
\nc{\remarks}{\noindent{\bf Remarks: }}
\nc{\OS}{\mathbf{OS}} %free operated semigroup
\nc{\Rep}{\mathbf{Rep}} \nc{\Rings}{\mathbf{Rings}}
\nc{\Sets}{\mathbf{Sets}} \nc{\DT}{\mathbf{DT}}

%%%%%%%%%%%%%%%%%%%Bbb fonts
\nc{\BA}{{\mathbb A}} \nc{\CC}{{\mathbb C}} \nc{\DD}{{\mathbb D}}
\nc{\EE}{{\mathbb E}} \nc{\FF}{{\mathbb F}} \nc{\GG}{{\mathbb G}}
\nc{\HH}{{\mathbb H}} \nc{\LL}{{\mathbb L}} \nc{\NN}{{\mathbb N}}
\nc{\QQ}{{\mathbb Q}} \nc{\RR}{{\mathbb R}} \nc{\TT}{{\mathbb T}}
\nc{\VV}{{\mathbb V}} \nc{\ZZ}{{\mathbb Z}}

%%%%%%%%%%%%%%%%%%% cal fonts

\nc{\calao}{{\mathcal A}} \nc{\cala}{{\mathcal A}}
\nc{\calc}{{\mathcal C}} \nc{\cald}{{\mathcal D}}
\nc{\cale}{{\mathcal E}} \nc{\calf}{{\mathcal F}}
\nc{\calfr}{{{\mathcal F}^{\,r}}} \nc{\calfo}{{\mathcal F}^0}
\nc{\calfro}{{\mathcal F}^{\,r,0}} \nc{\oF}{\overline{F}}
\nc{\calg}{{\mathcal G}} \nc{\calh}{{\mathcal H}}
\nc{\cali}{{\mathcal I}} \nc{\calj}{{\mathcal J}}
\nc{\call}{{\mathcal L}} \nc{\calm}{{\mathcal M}}
\nc{\caln}{{\mathcal N}} \nc{\calo}{{\mathcal O}}
\nc{\calp}{{\mathcal P}} \nc{\calr}{{\mathcal R}}
\nc{\calt}{{\mathcal T}} \nc{\caltr}{{\mathcal T}^{\,r}}
\nc{\calu}{{\mathcal U}} \nc{\calv}{{\mathcal V}}
\nc{\calw}{{\mathcal W}} \nc{\calx}{{\mathcal X}}
\nc{\CA}{\mathcal{A}}

%%%%%%%%%%%%%%%%%%  frak fonts
\nc{\fraka}{{\mathfrak a}} \nc{\frakB}{{\mathfrak B}}
\nc{\frakb}{{\mathfrak b}} \nc{\frakd}{{\mathfrak d}}
\nc{\oD}{\overline{D}} \nc{\frakF}{{\mathfrak F}}
\nc{\frakg}{{\mathfrak g}} \nc{\frakm}{{\mathfrak m}}
\nc{\frakM}{{\mathfrak M}} \nc{\frakMo}{{\mathfrak M}^0}
\nc{\frakp}{{\mathfrak p}} \nc{\frakS}{{\mathfrak S}}
\nc{\frakSo}{{\mathfrak S}^0} \nc{\fraks}{{\mathfrak s}}
\nc{\os}{\overline{\fraks}} \nc{\frakT}{{\mathfrak T}}
\nc{\oT}{\overline{T}}
%\nc{\frakx}{{\mathfrak x}}
\nc{\frakX}{{\mathfrak X}} \nc{\frakXo}{{\mathfrak X}^0}
\nc{\frakx}{{\mathbf x}}
%\nc{\frakTxo}{{\frakTx}^0}
\nc{\frakTx}{\frakT}      %All rooted trees, correspond to \ncsha(X)
\nc{\frakTa}{\frakT^a}        % rooted trees for \ncsha(A)
\nc{\frakTxo}{\frakTx^0}   % rooted trees for \ncshao(X)
\nc{\caltao}{\calt^{a,0}}   % rooted trees for \ncshao(A)
\nc{\ox}{\overline{\frakx}} \nc{\fraky}{{\mathfrak y}}
\nc{\frakz}{{\mathfrak z}} \nc{\oX}{\overline{X}}

\font\cyr=wncyr10

\nc{\redtext}[1]{\textcolor{red}{#1}}

%%%%%%%%%%%%%%%%%%%%%%%%%%%%%%%%%%%%%%%%%%%%%%%%%%%%%%%%%%%%%%%%%%

\title[Special symplectic Lie groups and hypersymplectic
Lie groups]{Special symplectic Lie groups and hypersymplectic Lie groups}

\author{Xiang Ni}
\address{Chern Institute of Mathematics \& LPMC, Nankai
University, Tianjin 300071, P.R.
China}\email{xiangn$_-$math@yahoo.cn}

\author{Chengming Bai}
\address{Chern Institute of Mathematics \& LPMC, Nankai University, Tianjin 300071, P.R. China}
         \email{baicm@nankai.edu.cn}

%\date{\today}

%\begin{document}

\begin{abstract}
A special symplectic Lie group is a triple $(G,\omega,\nabla)$ such
that $G$ is a  finite-dimensional real Lie group and $\omega$ is a
left invariant symplectic form on $G$ which is parallel with respect
to a left invariant affine structure $\nabla$. In this paper
starting from a special symplectic Lie group we show  how to
``deform" the standard Lie group structure on the (co)tangent bundle
through the left invariant affine structure $\nabla$ such that the
resulting Lie group admits families of left invariant
hypersymplectic structures and thus becomes a hypersymplectic Lie
group. We consider the affine cotangent extension problem and then
introduce notions of post-affine structure and post-left-symmetric
algebra which is the underlying algebraic structure of a special
symplectic Lie algebra. Furthermore, we give a kind of double
extensions of special symplectic Lie groups in terms of
post-left-symmetric algebras.
%By a similar study, we investigate the structure of a very special para-K\"ahler Lie algebra which is a
%special symplectic Lie algebra satisfying certain additional conditions.
\end{abstract}

\subjclass[2000]{17B05, 17D25, 22E20}

\keywords{Special symplectic Lie group, hypersymplectic Lie group,
post-affine structure}

\baselineskip=16pt

%MSC:17B05, 17D25, 22E20

\maketitle

\tableofcontents

\setcounter{section}{0} {\ } \vspace{-1cm}

\section{Introduction}
A {\it hypersymplectic structure} on a $4n$-dimensional smooth
manifold $M$ is a triple $\{J,E,g\}$ satisfying the following
conditions:
\begin{enumerate}
\item
$J,E$ are two endomorphisms of the tangent bundle of $M$ such that
$$J^2=-{\rm id},\quad E^2={\rm id},\quad JE=-EJ.$$
\item $g$ is a neutral metric (that is, of signature
$(2n,2n)$) satisfying
$$g(X,Y)=g(JX,JY)=-g(EX,EY),\quad \forall X,Y\in\Gamma(TM).$$
\item
The associated 2-forms
$$\omega_1(X,Y)=g(JX,Y),\quad \omega_2(X,Y)=g(EX,Y),\quad
\omega_3(X,Y)=g(JEX,Y),$$ are closed.
\end{enumerate}
Hypersymplecitc structures were introduced by Hitchin~(\cite{Hit}).
They are the {\it split-quaternion} analogues of hyperk\"{a}hler
structures, where the base manifold carries a {\it hypercomplex}
structure, i.e., a pair $\{J_1,J_2\}$ of commuting complex
structures, and are sometimes called {\it neutral hyperk\"{a}hler}
structures~(\cite{Ka}) or {\it parahyperk\"{a}hler}
structures~(\cite{ABCV}).

Manifolds carrying a hypersymplectic structure have a rich geometry,
such as the (neutral) metric is K\"{a}hler and Ricci flat and its
holonomy group is contained in $Sp(2n, \mathbb{R})$~(\cite{Hit}).
Moreover they are {\it neutral Calabi-Yau}~(\cite{FPPS}). Recently,
hypersymplectic structures are attractive in theoretical physics
since they play a role in string theory~(\cite{OV}). On the other
hand, the relationships between hypersymplectic structures and
integrable systems were considered in~\cite{BM}. The quotient
construction for hypersymplectic manifolds has been studied
in~\cite{Hit} and later in~\cite{DS}. In~\cite{Ka}, Kamada
classified all the hypersymplectic structures on primary Kodaira
surfaces. Examples on 2-step nilmanifolds in higher dimensions were
obtained in~\cite{FPPS}. On the other hand, hypersymplectic
structures on solvable Lie groups have been studied in~\cite{An,
AnD}.

In this paper, we focus on the homogeneous context of
hypersymplectic geometry. More precisely, we study how to construct
hypersymplectic Lie groups. By definition, a {\it hypersymplectic
Lie group} is a real Lie group endowed with a left invariant
hypersymplectic structure. Our starting data is a special symplectic
Lie group. A {\it special symplectic Lie group} is a triple
$(G,\omega,\nabla)$ such that $G$ is a  finite-dimensional real Lie
group and $\omega$ is a left invariant symplectic form on $G$ which
is parallel with respect to a left invariant affine structure
$\nabla$. Recall that an {\it affine structure} on an
$n$-dimensional smooth manifold is specified by a flat and torsion
free connection $\nabla$. If $G$ is a Lie group, an affine structure
is called {\it left invariant} if for each $g\in G$ the
left-multiplication by $g$, $L_g:G\to G$, is an automorphism of the
affine structure. It is known for a long time that left-invariant
affine structures on a connected and simply connected Lie group are
in one-to-one correspondence with the so-called {\it left-symmetric
algebraic structures} on its Lie algebra (see the beginning of
Section~\ref{se:conspe}). Furthermore, special symplectic Lie groups
are also related to the central extensions of Poisson brackets of
hydrodynamic type (\cite{BN}).

In general, we can define a {\it special symplectic manifold} to be
a symplectic manifold with a flat and torsion free connection
$\nabla$ which is parallel with respect to the symplectic form
$\omega$. Note that there is also a notion of special symplectic
manifold which, in addition, involves a complex structure $J$
satisfying the condition that $\nabla J$ is symmetric
(see~\cite{ACD}). Both of these two special symplectic geometries
are the natural underlying geometry of the so-called {\it special
K\"{a}hler geometry} which is quite active recently due to its
relationships with supersymmetric field theories and algebraic
integrable systems (see a survey article~\cite{Co} and the
references therein).

Starting from a special symplectic Lie group, we will show how to
``deform" the standard Lie group structure on its (co)tangent bundle
through the left invariant affine structure $\nabla$ such that the
resulting Lie group admits families of left invariant
hypersymplectic structures and thus become a hypersymplectic Lie
group.

So it is necessary to consider how to construct special symplectic
Lie groups. In order to find some interesting examples of special
symplectic Lie groups, we consider the {\it affine cotangent
extension problem}. To solve such a problem we introduce notions of
{\it post-affine structure} and (its corresponding) {\it
post-left-symmetric algebra}.

We show that the post-left-symmetric algebra is the natural
underlying algebraic structure of a special symplectic Lie algebra
(that is, the Lie algebra of a special symplectic Lie group).
Moreover, there is a commuting diagram which exhibits the role of
post-left-symmetric algebras (see Remark~\ref{re:explain}):
\begin{equation}\begin{matrix} \mbox{left-symmetric algebras}
&\stackrel{}{\leftarrow} & \mbox{post-left-symmetric algebras} \cr
\uparrow\small{} &&\uparrow{} \cr
 \mbox{symplectic Lie algebras} &\stackrel{}{\leftarrow}
& \mbox{special symplectic Lie algebras}\cr
\end{matrix}\label{eq:diagram}
\end{equation}

Furthermore, using post-affine structures and post-left-symmetric
algebras, we provide a kind of constructions of special symplectic
Lie groups (algebras): double extensions. In particular, the theorem
of ``Drinfeld's double" enables us to construct certain interesting
examples of special symplectic Lie groups. Moreover, the double
extensions of special symplectic Lie algebras are the natural
underlying structures of {\it flat} hypersymplectic Lie algebras
(see Remark~\ref{re:flathy}). We also introduce a notion of a very
special para-K\"ahler Lie algebra and investigate its structure.

\medskip

This paper is organized as follows. In section 2, we recall some
basic facts on hypersymplectic Lie groups. We also introduce the
notion of special symplectic Lie group. In section 3, we study the
construction of hypersymplectic Lie groups from tangent and
cotangent bundles of special symplectic Lie groups respectively. In
section 4, we consider the affine cotangent extension problem. We
introduce notions of post-affine structure and post-left-symmetric
algebra and give a kind of double extensions of special symplectic
Lie groups. By a similar study, we investigate the structure of a
very special para-K\"ahler Lie algebra which is an example of
special para-K\"ahler Lie algebra in an appendix.

\smallskip

\noindent {\bf Conventions: }In this paper, the base field is taken
to be $\RR$ of real numbers unless otherwise specified. This is the
field from which we take all the constants and over which we take
all the algebras, vector spaces, linear maps, Lie groups and
manifolds, etc. All algebras, vector spaces, Lie groups and
manifolds are assumed to be finite-dimensional, although many
results still hold in infinite-dimensional case. Let $V$ be a vector
space and $V^*$ be its dual space. We let $\langle,\rangle$ be the
natural pairing between $V$ and $V^*$. In this paper, all the Lie
groups are assumed to be connected and simply connected.

%On the other hand, for any two vector spaces $V_1$ and $V_2$, an
%element in the direct sum $V_1\oplus V_2$ of $V_1$ and $V_2$ is
%denoted by $u_1+u_2$ or $(u_1,u_2)$, where $u_1\in V_1,u_2\in V_2$.

\section{Preliminaries}

In this section, we always let $G$ be a Lie group with Lie algebra
$\frak{g}$.
\subsection{Special symplectic Lie groups}
A {\it left invariant symplectic form} on $G$ (or a {\it symplectic
form} on $\frak{g}$) is defined as a left invariant nondegenerate
2-form which is {\it closed}, that is,
\begin{equation}
d\omega(x,y,z)=\omega(x,[y,z])+\omega(y,[z,x])+\omega(z,[x,y])=0,\;\;
\forall x,y,z\in\frak{g}.
\end{equation}
In this case $G$ is called a {\it symplectic Lie group}~(\cite{Chu,
LiM}) and $\frak{g}$ is called a {\it symplectic Lie algebra}.

A {\it connection} on the Lie algebra $\frak{g}$ is a bilinear map
$\nabla:\frak{g}\times\frak{g}\to\frak{g}$.  After left translating,
$\nabla$ gives rise to a {\it left invariant connection} on the Lie
group $G$, i.e., each left translation $L_g:G\to G, x\to gx$ is an
affine transformation of $G$. In this case, if $x,y\in\frak{g}$ are
two left invariant vector fields on $G$, then $\nabla_xy$ is also
left invariant. The connection $\nabla$ is called {\it torsion free}
if $\nabla_xy-\nabla_yx=[x,y]$ for all $x,y\in\frak{g}$ and is
called {\it flat} if the curvature $R$ of $\nabla$ is identically
zero, where
$R(x,y)=\nabla_x\nabla_y-\nabla_y\nabla_x-\nabla_{[x,y]}$ for any
$x,y\in\frak{g}$.

 A {\it left invariant affine structure} on $G$ is a left
invariant flat and torsion free linear connection on $G$. An
important class of Lie groups having left invariant affine
structures are the symplectic Lie groups~(\cite{Chu}): suppose that
$G$ is a symplectic Lie group with the symplectic form $\omega$,
then there is a left invariant affine structure $\nabla$ on $G$
defined by
\begin{equation}
\omega([x,y],z)=-\omega(y,\nabla_xz),\quad \forall
x,y,z\in\frak{g}.\label{eq:sycoles11}
\end{equation}
\begin{defn}
{\rm A {\it special symplectic Lie group} is a triple
$(G,\nabla,\omega)$ such that
\begin{enumerate}
\item
$G$ is a Lie group;
\item
$\nabla$ is a left invariant affine structure on $G$;
\item
 $\omega$ is a
left invariant symplectic form on $G$ such that it is parallel with
respect to $\nabla$, that is,
\begin{equation}
\omega(\nabla_xy,z)=\omega(\nabla_xz,y),\quad \forall
x,y,z\in\frak{g}.\label{eq:paralsy}
\end{equation}
\end{enumerate}
The Lie algebra of a special symplectic Lie group is called a {\it
special symplectic Lie algebra}.}
\end{defn}
Note that since $\nabla$ is torsion free, Eq.~(\ref{eq:paralsy})
necessarily implies that $\omega$ is closed. So we have the
following conclusion:

\begin{prop}
A special symplectic Lie group $(G,\nabla,\omega)$ is a
finite-dimensional Lie group $G$ equipped with a left invariant
affine structure $\nabla$ and a left invariant nondegenerate 2-form
$\omega$ such that $\omega$ is parallel with respect to
$\nabla$.\label{pp:desplie}
\end{prop}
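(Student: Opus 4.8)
The plan is to recognize that this proposition is nothing more than a restatement of the definition in which the requirement that $\omega$ be \emph{closed} has been dropped; the whole content therefore reduces to the remark immediately preceding it, namely that torsion-freeness of $\nabla$ together with the parallel condition already forces $\omega$ to be closed. Accordingly, I would first record the two structural identities available to me: the torsion-free identity $[x,y]=\nabla_x y-\nabla_y x$, and the parallel identity $\omega(\nabla_x y,z)=\omega(\nabla_x z,y)$, both valid for all $x,y,z\in\frakg$. I would also note that skew-symmetry $\omega(a,b)=-\omega(b,a)$ is available since $\omega$ is a 2-form.

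Next I would substitute the torsion-free identity into each of the three brackets appearing in $d\omega(x,y,z)=\omega(x,[y,z])+\omega(y,[z,x])+\omega(z,[x,y])$, which expands $d\omega(x,y,z)$ into six terms, each of the form $\pm\omega(\,\cdot\,,\nabla_{\cdot}\,\cdot\,)$. Using skew-symmetry I would move each $\nabla$ into the first argument slot, and then apply the parallel identity to transfer the connection from one remaining argument to the other. A short bookkeeping then shows that the six terms group into three cancelling pairs (for instance, the two terms coming from $\omega(y,\nabla_z x)$ and from $-\omega(x,\nabla_z y)$ cancel once the parallel identity is applied to the latter), so that $d\omega(x,y,z)=0$ for all $x,y,z$. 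Hence $\omega$ is closed.

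Since $\omega$ is assumed nondegenerate and has now been shown to be closed, it is precisely a left invariant symplectic form that is parallel with respect to the affine structure $\nabla$, which is exactly the data listed in the definition; conversely, any special symplectic Lie group in the sense of the definition trivially provides such data, so the two descriptions coincide. I do not anticipate any genuine obstacle: the only mildly delicate point is keeping the signs straight while alternately invoking skew-symmetry and the parallel identity, but this is entirely formal and requires no further geometric input.
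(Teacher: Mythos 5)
Your proposal is correct and is precisely the computation that the paper leaves implicit in the remark preceding the proposition ("since $\nabla$ is torsion free, Eq.~(\ref{eq:paralsy}) necessarily implies that $\omega$ is closed"): expanding each bracket via torsion-freeness and pairing the six resulting terms using skew-symmetry together with the parallel identity does yield $d\omega=0$, and your sample cancellation checks out. There is nothing to add; the two arguments are the same.
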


\begin{exam}{\rm ~(\cite{An})}\quad
{\rm We consider the 2-dimensional special symplectic Lie algebras.
Since the classification of left invariant affine connections (that
is, left-symmetric algebras, see Section 4) on 2-dimensional
connected and simply connected Lie groups (in the sense of
isomorphism over the complex field) was given in \cite{BM1} or
\cite{Bu1}, it is straightforward to get the following 2-dimensional
special symplectic Lie algebras: let $\frak g$ be a 2-dimensional
real Lie algebra with a basis $\{e_1,e_2\}$. Let $\omega$ be the
skew-symmetric bilinear form satisfying
$\omega(e_1,e_2)=-\omega(e_2,e_1)=1$ (that is, $\omega=e_1\wedge
e_2$).
\begin{enumerate}
\item
 $\frak g$ is abelian: there are two left invariant affine
structures $\nabla^1, \nabla^2$ given by
$$\nabla^1_{e_i}e_j=0,\;\;i,j=1,2;\;\;\nabla^2_{e_1}e_1=e_2,\;\;\nabla^2_{e_1}e_2=\nabla^2_{e_2}e_1
=\nabla^2_{e_2}e_2=0,$$ respectively such that both $(\frak
{g},\nabla^1,\omega)$ and $(\frak {g},\nabla^2,\omega)$ are special
symplectic Lie algebras.
\item
$\frak g$ is the non-abelian Lie algebra satisfying $[e_1,e_2]=e_2$:
there are two left invariant affine structures $\nabla^3, \nabla^4$
given by
$$\nabla^3_{e_1}e_1=\nabla^3_{e_1}e_2=0,\;\;\nabla^3_{e_2}e_1=-e_1,\;\;
\nabla^3_{e_2}e_2 =e_2;$$
$$\nabla^4_{e_1}e_1=0,\;\;\nabla^4_{e_1}e_2=\frac{1}{2}e_1,\;\;\nabla^4_{e_2}e_1=-\frac{1}{2}e_1,\;\;
\nabla^4_{e_2}e_2 =e_1+\frac{1}{2}e_2,$$
 respectively such that both $(\frak {g},\nabla^3,\omega)$ and
$(\frak {g},\nabla^4,\omega)$ are special symplectic Lie algebras.
\end{enumerate}
}
\end{exam}

\subsection{Hypersymplectic Lie groups}
A left invariant $(1,1)$ tensor field on $G$ is specified by a
linear endomorphism $N$ of $\frak{g}$. The {\it Nijenhuis torsion}
of $N$ is defined as
\begin{equation}
T(N)(x,y)=[N(x),N(y)]+N^2[x,y]-N([N(x),y]+[x,N(y)]),\quad \forall
x,y\in\mathfrak{g}.
\end{equation}

A {\it left invariant complex structure} on  $G$ (or a {\it complex
structure} on $\frak{g}$) is a linear endomorphism
$J:\frak{g}\to\frak{g}$ such that $J^2=-{\rm id}$ and its Nijenhuis
torsion vanishes, that is, $T(J)=0$.

A {\it left invariant product structure} on  $G$ (or a {\it product
structure} on $\frak{g}$) is a linear endomorphism
$E:\frak{g}\to\frak{g}$ such that $E^2={\rm id}$ (and $E\ne\pm{\rm
id}$) and its Nijenhuis torsion vanishes, that is, $T(E)=0$. Let
$\frak{g}_{+}$ and $\frak{g}_{-}$ be the eigenspaces corresponding
to the eigenvalues $+1$ and $-1$ of $E$, respectively. If ${\rm
dim}\frak{g}_{+}={\rm dim}\frak{g}_{-}$, the product structure $E$
is called a {\it paracomplex structure}. In this case $\frak{g}$ has
even dimension.

A {\it left invariant complex product structure} on the Lie group
$G$ (or a {\it complex product structure} on $\frak{g}$) is a pair
$\{J,E\}$ of a left invariant complex structure $J$ and a left
invariant product structure $E$ satisfying $JE=-EJ$.

Complex product structures on Lie algebras have been studied
in~\cite{AnS} and later in~\cite{NB}. Now we recall their main
properties. The condition $JE=-EJ$ implies that $J$ is an
isomorphism (as vector spaces) between $\frak{g}_{+}$ and
$\frak{g}_{-}$, the eigenspaces corresponding to the eigenvalues
$+1$ and $-1$ of $E$, respectively. Thus, $E$ is in fact a
paracomplex structure on $\frak{g}$. Every complex product structure
on $\frak{g}$ has therefore an associated {\it double Lie algebra}
(or {\it matched pair of Lie algebras})
$(\frak{g},\frak{g}_{+},\frak{g}_{-})$, that is, $\frak{g}_{+}$ and
$\frak{g}_{-}$ are Lie subalgebras of $\frak{g}$ such that
$\frak{g}=\frak{g}_{+}\oplus\frak{g}_{-}$ (as direct sum of vector
spaces). Moreover, we have $\frak{g}_{\mp}=J\frak{g}_{\pm}$, where
$E|_{\frak{g}_{\pm}}=\pm {\rm id}$.

A complex product structure $\{J,E\}$ on $\frak{g}$ determines
uniquely a torsion free connection $\nabla^{CP}$ on $\frak{g}$ such
that $J$ and $E$ are parallel with respect to $\nabla^{CP}$, that
is, $\nabla^{CP}J=\nabla^{CP}E=0$.
\delete{In fact, $\nabla^{CP}$ is
defined as follows: let $(\frak{g},\frak{g}_{+},\frak{g}_{-})$ be
the double Lie algebra associated to the complex product structure
$\{J,E\}$ and let $p_{+}:\frak{g}\to\frak{g}_{+}$ and
$p_{-}:\frak{g}\to\frak{g}_{-}$ be the projections. Then for any
$x,y\in\frak{g}_{+},a,b\in\frak{g}_{-}$, we have
\begin{equation}\hspace{2cm}
\begin{matrix}\nabla^{CP}_xy=-p_{+}(J([x,J(y)])),\;\;
\nabla^{CP}_ab=-p_{-}(J([a,J(b)])),\;\;\cr
\nabla^{CP}_xa=p_{-}([x,a]),\;\;
\nabla^{CP}_ax=p_{+}([a,x]).\cr\end{matrix}\label{eq:connectioncp}
\end{equation}
Moreover, the connection $\nabla^{CP}$ restricts to flat and torsion
free connections on $\frak{g}_{+}$ and $\frak{g}_{-}$, say
$\nabla^{+}$ and $\nabla^{-}$, respectively.}

Now let $\{J,E\}$ be a left invariant complex product structure and
let $g$ be a left-invariant metric on $G$, which is defined by a
nondegenerate symmetric bilinear form $g:\frak{g}\otimes \frak{g}\to
\mathbb{R}$. We will say that $g$ is {\it compatible} with the left
invariant complex product structure if, for all $x,y\in\frak{g}$,
\begin{equation}
g(J(x),J(y))=g(x,y),\quad g(E(x),E(y))=-g(x,y),\quad \forall
x,y\in\frak{g}.\label{eq:compatibleg}
\end{equation}
In terms of $g$ and $\{J,E\}$ we define three left invariant
nondegenerate 2-forms on $G$:
\begin{equation}
\omega_1(x,y)=g(J(x),y),\;\;  \omega_2(x,y)=g(E(x),y),\;\;
\omega_3(x,y)=g(JE(x),y),\;\;\forall x,y\in\frak{g}.
\label{eq:3form}
\end{equation}

\begin{lemma}{\rm ~(\cite{An})}\quad
With the same notations as above, if $g$ is compatible with the
complex product structure $\{J,E\}$, then the 2-forms $\omega_i$
($i=1,2,3$) on $\frak{g}$ given by Eq.~$($\ref{eq:3form}$)$ satisfy
the following properties:
\begin{enumerate}
\item
$\omega_1(x,y)=\omega_1(J(x),J(y))=\omega_1(E(x),E(y))$, for any
$x,y\in\frak{g}$, whence $\omega_1(x,y)=0$ for any
$x\in\frak{g}_{+},y\in\frak{g}_{-}$.
\item
$-\omega_2(x,y)=\omega_2(J(x),J(y))=\omega_2(E(x),E(y))$, for any
$x,y\in\frak{g}$, whence $\omega_2(x,y)=0$ for any
$x,y\in\frak{g}_{+}$ or $x,y\in\frak{g}_{-}$.
\item
$\omega_3(x,y)=-\omega_3(J(x),J(y))=\omega_3(E(x),E(y))$, for any
$x,y\in\frak{g}$, whence $\omega_3(x,y)=0$ for any
$x\in\frak{g}_{+},y\in\frak{g}_{-}$.
\end{enumerate}
\label{le:3formex}
\end{lemma}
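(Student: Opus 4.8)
The plan is to distill from the compatibility condition~\eqref{eq:compatibleg} the single structural fact that $J$ and $E$ are both skew-adjoint with respect to $g$; granting this, each of the six displayed transformation identities collapses to a one-line manipulation using $J^2=-\mathrm{id}$, $E^2=\mathrm{id}$ and $JE=-EJ$, and the three vanishing assertions follow by evaluating on the $\pm 1$-eigenspaces $\frak{g}_{\pm}$ of $E$.

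First I would show $g(J(x),y)=-g(x,J(y))$ and $g(E(x),y)=-g(x,E(y))$ for all $x,y\in\frak{g}$. For $J$, replace $y$ by $J(y)$ in $g(J(x),J(y))=g(x,y)$ and invoke $J^2=-\mathrm{id}$; for $E$, replace $y$ by $E(y)$ in $g(E(x),E(y))=-g(x,y)$ and invoke $E^2=\mathrm{id}$. Thus both $J$ and $E$ are $g$-skew, the minus sign in the second relation of~\eqref{eq:compatibleg} being precisely what keeps $E$ skew rather than symmetric.

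Next I would verify the three pairs of identities directly. For (a), $\omega_1(x,y)=\omega_1(J(x),J(y))$ follows from the $g$-skewness of $J$ together with $J^2=-\mathrm{id}$, while $\omega_1(x,y)=\omega_1(E(x),E(y))$ uses $JE=-EJ$ together with $g(E(\cdot),E(\cdot))=-g(\cdot,\cdot)$. The computations for (b) and (c) are entirely parallel: in each case one pushes the endomorphism defining $\omega_i$ past $J$ or $E$ via the anticommutation relation and then reabsorbs the composite into $g$ through skewness or through the signature sign, the resulting overall sign being exactly the one recorded in the statement.

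Finally, for the ``whence'' clauses I would specialize to eigenvectors. Taking $x\in\frak{g}_{+},\,y\in\frak{g}_{-}$ gives $E(x)=x$ and $E(y)=-y$, so the identity $\omega_1(x,y)=\omega_1(E(x),E(y))=-\omega_1(x,y)$ forces $\omega_1(x,y)=0$, and the identical substitution disposes of $\omega_3$; for $\omega_2$ one instead takes $x,y$ in a common eigenspace, whence $\omega_2(E(x),E(y))=\omega_2(x,y)$ must equal $-\omega_2(x,y)$. The argument is pure linear algebra governed by the three relations among $J$, $E$, $g$, so there is no genuine obstacle; the only delicate point is sign-bookkeeping, and in particular using the skew-adjointness of $E$ (not self-adjointness) consistently, since that is where the signature-induced minus sign in~\eqref{eq:compatibleg} propagates through the whole computation.
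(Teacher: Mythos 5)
Your proof is correct. The paper itself offers no proof of this lemma --- it is quoted from~\cite{An} --- so there is nothing to diverge from; your argument is the standard one, and all the steps check out: the compatibility condition~(\ref{eq:compatibleg}) together with $J^2=-{\rm id}$ and $E^2={\rm id}$ does yield $g(J(x),y)=-g(x,J(y))$ and $g(E(x),y)=-g(x,E(y))$, the six transformation identities then follow by pushing $J$ or $E$ through $JE=-EJ$ and reabsorbing via skewness or via the sign in~(\ref{eq:compatibleg}), and the vanishing statements follow by evaluating on the eigenspaces $\frak{g}_{\pm}$ exactly as you describe (for $\omega_1,\omega_3$ the relation $\omega_i(E(x),E(y))=\omega_i(x,y)$ forces vanishing on $\frak{g}_{+}\times\frak{g}_{-}$, for $\omega_2$ the relation $\omega_2(E(x),E(y))=-\omega_2(x,y)$ forces vanishing on each eigenspace). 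Your sign-bookkeeping, including the point that the neutral signature makes $E$ skew-adjoint rather than self-adjoint, is accurate throughout.
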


\begin{defn}
{\rm Let $G$ be a Lie group with Lie algebra $\frak{g}$. Let
$\{J,E\}$ be a left invariant complex product structure on $G$. Let
$g$ be a left-invariant metric on $G$ which is compatible with
$\{J,E\}$. If the three left invariant nondegenerate 2-forms defined
in Eq.~(\ref{eq:3form}) are closed, that is, they are symplectic
forms, then the triple $\{J,E,g\}$ is called a {\it left invariant
hypersymplectic structure} on $G$ and $G$ is called a {\it
hypersymplectic Lie group}. The Lie algebra of a hypersymplectic Lie
group is called a {\it hypersymplectic Lie algebra}. }
\end{defn}

\begin{prop}
Let $G$ be a Lie group whose Lie algebra is $\frak{g}$. Let
$\{J,E\}$ be a left invariant  complex product structure on $G$.
Suppose that $g$ is a left invariant metric on $G$ which is
compatible with $\{J,E\}$, that is, Eq.~$($\ref{eq:compatibleg}$)$
holds. Define three left invariant nondegenerate 2-forms
$\omega_1,\omega_2,\omega_3:\bigwedge^2\frak{g}\to\mathbb{R}$ by
Eq.~$($\ref{eq:3form}$)$. If $\omega_1$ is closed, then both
$\omega_2$ and $\omega_3$ are closed, too. In this case, $\{J,E,g\}$
is a left invariant hypersymplectic structure on $G$ and $G$ is a
hypersymplectic Lie group.\label{pp:andrada}
\end{prop}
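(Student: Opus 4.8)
The plan is to work at the Lie-algebra level and exploit the double Lie algebra (matched pair) $(\frak{g},\frak{g}_{+},\frak{g}_{-})$ attached to $\{J,E\}$ recalled above: $\frak{g}_{\pm}$ are subalgebras with $\frak{g}=\frak{g}_{+}\oplus\frak{g}_{-}$ and $\frak{g}_{\mp}=J\frak{g}_{\pm}$, and $E|_{\frak{g}_{\pm}}=\pm\,\mathrm{id}$. Closedness of a left invariant $2$-form $\omega$ is the cocycle identity $d\omega(x,y,z)=\omega(x,[y,z])+\omega(y,[z,x])+\omega(z,[x,y])=0$, and since $d\omega$ is alternating it suffices to test it on triples that are homogeneous for the splitting, i.e.\ the four cases: all three in $\frak{g}_{+}$; two in $\frak{g}_{+}$ and one in $\frak{g}_{-}$; one in $\frak{g}_{+}$ and two in $\frak{g}_{-}$; all three in $\frak{g}_{-}$. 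Before starting I would record three structural facts that make terms collapse: (i) compatibility (Eq.~(\ref{eq:compatibleg})) forces $g$ to vanish on $\frak{g}_{+}\times\frak{g}_{+}$ and on $\frak{g}_{-}\times\frak{g}_{-}$, since there $g(u,v)=g(Eu,Ev)=-g(u,v)$; (ii) the vanishing and $J,E$-symmetry properties of $\omega_1,\omega_2,\omega_3$ from Lemma~\ref{le:3formex}, so that $\omega_1,\omega_3$ pair only within each summand while $\omega_2$ pairs only across them; and (iii) the pointwise identities $\omega_3(x,y)=\omega_1(Ex,y)$ and $g(u,v)=-\omega_1(Ju,v)$, both immediate from Eq.~(\ref{eq:3form}).

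For $\omega_3$ the argument is short. On an all-$\frak{g}_{+}$ triple every bracket stays in $\frak{g}_{+}$ and $\omega_3=\omega_1$ there (as $E=\mathrm{id}$), so $d\omega_3=d\omega_1=0$; on an all-$\frak{g}_{-}$ triple $\omega_3=-\omega_1$ and $d\omega_3=-d\omega_1=0$. On a mixed triple I would rewrite each term through $\omega_3(u,\cdot)=\omega_1(Eu,\cdot)$, which reproduces $d\omega_1$ except that the terms whose first argument lies in $\frak{g}_{-}$ acquire a sign flip; the resulting discrepancy is a multiple of $\omega_1$ evaluated on a $\frak{g}_{+}$--$\frak{g}_{-}$ pair, which vanishes by Lemma~\ref{le:3formex}(a). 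Hence $d\omega_3=0$ in every case.

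The real work is $\omega_2$, and this is where I expect the main obstacle. On like-homogeneous triples $\omega_2$ vanishes on all the relevant pairs, so these are trivially closed. For a mixed triple, say $x,y\in\frak{g}_{+}$ and $z\in\frak{g}_{-}$, a term-by-term reduction does not vanish: rewriting via $g(u,v)=-\omega_1(Ju,v)$ gives $d\omega_2(x,y,z)=-\omega_1(Jx,[y,z])-\omega_1(Jy,[z,x])+\omega_1(Jz,[x,y])$, which is not manifestly zero. The idea is then to feed in the hypothesis $d\omega_1=0$ on the three auxiliary mixed triples $(Jx,y,z)$, $(Jy,z,x)$ and $(Jz,x,y)$ to re-express the three summands; collecting the outcome by its first argument and applying the Nijenhuis integrability $T(J)=0$ (in the form $[Ju,v]+[u,Jv]=J[u,v]-J[Ju,Jv]$) together with the vanishing of $\omega_1$ on $\frak{g}_{+}$--$\frak{g}_{-}$ pairs discards the unwanted brackets, and one final application of $d\omega_1=0$ on the homogeneous triple $(x,y,Jz)$ collapses everything to $\omega_1(Jz,w)+\omega_1(z,Jw)$ with $w=[x,y]\in\frak{g}_{+}$. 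This equals $-g(z,w)+g(z,w)=0$ by compatibility $g(Jz,Jw)=g(z,w)$. The remaining mixed case (one entry in $\frak{g}_{+}$, two in $\frak{g}_{-}$) is handled by the symmetric computation.

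The delicate point throughout the $\omega_2$ computation is the bookkeeping of which summand, $\frak{g}_{+}$ or $\frak{g}_{-}$, each bracket lands in: only by tracking this can one apply at the right moment the vanishing of $g$ and of $\omega_1$ on the appropriate pairs, and it is the combination of several instances of $d\omega_1=0$ with the integrability of $J$ that produces the cancellation --- term by term nothing vanishes. Once both $d\omega_2=0$ and $d\omega_3=0$ are established, all three forms are symplectic, so by the definition of hypersymplectic structure $\{J,E,g\}$ is a left invariant hypersymplectic structure and $G$ is a hypersymplectic Lie group.
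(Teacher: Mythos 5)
Your argument is correct, and I checked the computations it sketches: the reduction of $d\omega_3$ to $d\omega_1$ plus a cross-pairing term that dies by Lemma~\ref{le:3formex}(a), and the $\omega_2$ computation on a mixed triple $x,y\in\frak{g}_{+}$, $z\in\frak{g}_{-}$, where $d\omega_1(Jx,y,z)=0$ and $d\omega_1(Jy,z,x)=0$ turn the first two summands into $\omega_1(z,[Jx,y]+[x,Jy])$, integrability of $J$ rewrites this as $\omega_1(z,J[x,y])-\omega_1(z,J[Jx,Jy])$ with the second term killed because $J[Jx,Jy]\in\frak{g}_{+}$, and the survivor cancels against $\omega_1(Jz,[x,y])$ by $g(Jz,Jw)=g(z,w)$. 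However, your route is genuinely different from the paper's. The paper disposes of this proposition by citing Proposition~5 of~\cite{An}; the proof behind that citation (a version of which survives, commented out, in the source) runs through the canonical torsion-free connection $\nabla^{CP}$ determined by the complex product structure: one first shows that $d\omega_1=0$ forces the restrictions of $\omega_1$ to $\frak{g}_{\pm}$ to be parallel for the restricted connections $\nabla^{\pm}$, and then expresses $\omega_2$ and $\omega_3$ through $\omega_1$, $J$ and $E$ to deduce their closedness from that parallelism. You avoid the connection entirely and work only with the cocycle identity on homogeneous triples, the matched-pair decomposition $\frak{g}=\frak{g}_{+}\oplus\frak{g}_{-}$, the Nijenhuis condition $T(J)=0$, and the compatibility of $g$. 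Your version is more elementary and self-contained (it needs no auxiliary structure beyond what is in the statement), at the price of heavier bookkeeping; the connection-based argument is shorter once $\nabla^{CP}$ is available and isolates the reusable fact that closedness of $\omega_1$ is equivalent to parallelism of $\omega_{\pm}$. One small presentational caveat: in your $\omega_3$ mixed-case step, the "discrepancy" must be measured against $+d\omega_1$ when two arguments lie in $\frak{g}_{+}$ and against $-d\omega_1$ when two lie in $\frak{g}_{-}$, so that the leftover term is the one whose bracket stays inside a single summand; grouped that way it is indeed a $\frak{g}_{+}$--$\frak{g}_{-}$ pairing and vanishes, but grouped the other way it is not manifestly zero, so this deserves a line of care in the write-up.
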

\begin{proof}
It follows from Proposition 5 of~\cite{An}.
\end{proof}

\delete{
\begin{proof}
Let $(\frak{g},\frak{g}_{+},\frak{g}_{-})$ be the double Lie algebra
associated to the complex product structure $\{J,E\}$ and let
$\nabla^{CP}$ be the connection defined by $\{J,E\}$. Suppose that
 $\nabla^{+}$ and $\nabla^{-}$ are the restrictions
of $\nabla^{CP}$ to $\frak{g}_{+}$ and $\frak{g}_{-}$ respectively
and $\omega_{+}$ and $\omega_{-}$ are the restrictions of $\omega_1$
to $\frak{g}_{+}$ and $\frak{g}_{-}$ respectively. The we claim
$\omega_{+}$ and $\omega_{-}$ are parallel with respect to
$\nabla^{+}$ and $\nabla^{-}$ respectively, that is,
\begin{equation}
\nabla^{\pm}\omega_{\pm}=0.\label{eq:parallelpm}
\end{equation}
 In fact, for any
$x,y,u\in\frak{g}_{+}$, we have
\begin{eqnarray*}
d\omega_1(x,y,J(u))&=&
\omega_1(x,[y,J(u)])+\omega_1(y,[J(u),x])+\omega_1(J(u),[x,y])\\
&=&\omega_1(x,-\nabla_{J(u)}^{CP}y)+\omega_1(y,\nabla_{J(u)}^{CP}x)\\
&=&\omega_1(J(x),-\nabla_{J(u)}^{-}J(y))+\omega_1(J(y),\nabla_{J(u)}^{-}J(x)).
\end{eqnarray*}
Since $d\omega_1=0$, we have $\nabla^{-}\omega^{-}=0$. Moreover, for
any $x,u,v\in\frak{g}_{+}$, we have
\begin{eqnarray*}
d\omega_1(x,J(u),J(v))&=&\omega_1(x,[J(u),J(v)])+\omega_1(J(u),[J(v),x])+\omega_1(J(v),
[x,J(u)])\\
&=&\omega_1(J(u),-\nabla_x^{CP}J(v))+\omega_1(J(v),\nabla_x^{CP}J(u))\\
&=&\omega_1(u,-\nabla_x^{+}v)+\omega_1(v,\nabla_x^{+}u).
\end{eqnarray*}
Since $d\omega_1=0$, we have $\nabla^{+}\omega^{+}=0$. Moreover, it
is obvious that
\begin{equation}
\omega_2(x,y)=-\omega_1(J(E(x)),y),\quad
\omega_3(x,y)=\omega_1(E(x),y),\quad \forall
x,y\in\frak{g}.\label{eq:23to1}
\end{equation}
So for any $x,y,u\in \frak{g}_{+}$, we have
\begin{eqnarray*}
d\omega_2(x,y,J(u))&=&\omega_2(x,[y,J(u)])+\omega_2(y,[J(u),x])+\omega_2(J(u),[x,y])\\
&=&\omega_2(x,\nabla^{CP}_yJ(u))+\omega_2(y,-\nabla_x^{CP}J(u))+\omega_2(J(u),[x,y])\\
&=&-\omega_1(J(x),\nabla_yJ(u))+\omega_1(J(y),\nabla_x^{CP}J(u))-\omega_1(u,[x,y])\\
&=&-\omega_1(x,\nabla_y^{+}u)+\omega_1(y,\nabla_x^{+}u)-\omega_1(u,[x,y])\\
&=&\omega_{+}(u,-\nabla^{+}_yx)+\omega_{+}(u,\nabla^{+}_xy)-\omega_{+}(u,[x,y])=0,
\end{eqnarray*}
where the third equality and fifth equality follow from
Eq.~(\ref{eq:23to1}) and Eq.~(\ref{eq:parallelpm}), respectively.
Furthermore, for any $x,u,v\in\frak{g}_{+}$, we have
\begin{eqnarray*}
d\omega_2(x,J(u),J(v))&=&\omega_2(x,[J(u),J(v)])+\omega_2(J(u),[J(v),x])+\omega_2(J(v),[x,J(u)])\\
&=&\omega_2(x,[J(u),J(v)])+\omega_2(J(u),\nabla^{CP}_{J(v)}x)+\omega_2(J(v),-\nabla_{J(u)}^{CP}x)\\
&=&-\omega_1(J(x),[J(u),J(v)])-\omega_1(u,\nabla^{CP}_{J(v)}x)+\omega_1(v,\nabla_{J(u)}^{CP}x)\\
&=&-\omega_{-}(J(x),[J(u),J(v)])-\omega_{-}(J(u),\nabla^{-}_{J(v)}J(x))+\omega_{-}(J(v),\nabla^{-}_{J(u)}J(x))\\
&=&-\omega_{-}(J(x),[J(u),J(v)])-\omega_{-}(J(x),\nabla^{-}_{J(v)}J(u))+\omega_{-}(J(x),\nabla^{-}_{J(u)}J(v))=0,
\end{eqnarray*}
where the third equality and fifth equality follow from
Eq.~(\ref{eq:23to1}) and Eq.~(\ref{eq:parallelpm}), respectively. On
the other hand, it is obvious that if $x,y,z\in\frak{g}_{+}$ or
$x,y,z\in\frak{g}_{-}$, we show that $d\omega_2(x,y,z)=0$. So
$d\omega_2=0$. Next we prove that $d\omega_3=0$. In fact, for any
$x,y,u\in\frak{g}_{+}$, we have
\begin{eqnarray*}
d\omega_3(x,y,J(u))&=&\omega_3(x,[y,J(u)])+\omega_3(y,[J(u),x])+\omega_3(J(u),[x,y])\\
&=&\omega_3(x,-\nabla_{J(u)}^{CP}y)+\omega_3(y,\nabla_{J(u)}^{CP}x)\\
&=&\omega_1(J(x),-\nabla_{J(u)}^{CP}J(y))+\omega_1(J(y),\nabla^{CP}_{J(u)}J(x))\\
&=&\omega_{-}(J(y),-\nabla_{J(u)}^{CP}J(x))+\omega_{-}(J(y),\nabla_{J(u)}^{CP}J(x))=0,
\end{eqnarray*}
where the third and the fourth equality follows from
Eq.~(\ref{eq:23to1}) and Eq.~(\ref{eq:parallelpm}), respectively.
Furthermore, for any $x,u,v\in\frak{g}_{+}$, we have that
\begin{eqnarray*}
d\omega_3(x,J(u),J(v))&=&\omega_3(x,[J(u),J(v)])+\omega_3(J(u),[J(v),x])+\omega_3(J(v),[x,J(u)])\\
&=&\omega_3(J(u),-\nabla_x^{CP}J(v))+\omega_3(J(v),\nabla_x^{CP}J(u))\\
&=&\omega_1(J(u),\nabla_{x}^{CP}J(v))-\omega_1(J(v),\nabla_x^{CP}J(u))\\
&=&\omega_{+}(v,\nabla_{x}^{+}u)-\omega_{+}(v,\nabla_{x}^{+}u)=0,
\end{eqnarray*}
where the third and the fourth equality follows from
Eq.~(\ref{eq:23to1}) and Eq.~(\ref{eq:parallelpm}), respectively. On
the hand, by Eq.~(\ref{eq:23to1}), for any $x,y,z\in\frak{g}_{+}$,
we have
\begin{eqnarray*}
d\omega_3(x,y,z)&=&\omega_3(x,[y,z])+\omega_3(y,[z,x])+\omega_3(z,[x,y])\\
&=&\omega_1(x,[y,z])+\omega_1(y,[z,x])+\omega_1(z,[x,y])=0.
\end{eqnarray*}
Similarly, when $x,y,z\in\frak{g}_{-}$, we show that
$d\omega_3(x,y,z)=0$. So $d\omega_3=0$.
\end{proof}
}

At the end of this section, we recall the semidirect sum of a Lie
algebra and a representation: let $\frak{g}$ be a Lie algebra and
$V$ be a vector space. Let $\rho:\frak{g}\to\frak{gl}(V)$ be a
representation of $\frak{g}$. Then the following bracket operation
makes the direct sum of vector spaces $\frak{g}\oplus V$ be a Lie
algebra (\cite{Sc}):
\begin{equation}
[(x,u),(y,v)]=([x,y],\rho(x)v-\rho(y)u),\quad \forall
x,y\in\frak{g},u,v\in V.
\end{equation}
We denote it by $\frak{g}\ltimes_{\rho}V$, which is called the {\it
semidirect sum} of $\frak{g}$ and $V$ .

\section{Constructions of hypersymplectic Lie groups from the tangent and cotangent bundles of special
symplectic Lie groups}

In this section, we study the construction of hypersymplectic Lie
groups from tangent and cotangent bundles of special symplectic Lie
groups respectively.

\subsection{The tangent bundles of special
symplectic Lie groups}\label{se:tan}

Let $G$ be a Lie group whose Lie algebra is $\frak{g}$. Let $TG$ be
the tangent bundle of $G$ which can be identified with
$G\times\frak{g}$. It is possible to identify $G$ with the zero
section in $TG$ and $\frak{g}$ with the fibre over a neutral element
$(0,e)$ of $TG$. Therefore we identify $T_{0,e}(TG)$ with
$\frak{g}\times\frak{g}$. Now suppose that there exists a left
invariant flat and torsion free connection $\nabla$ on $G$. Since
$\nabla$ is flat, the map
\begin{equation}
\rho_{\nabla}:x\to\nabla_x\in\frak{gl}(\frak{g}),\quad \forall
x\in\frak{g},\label{eq:narepre}
\end{equation}
is a (finite-dimensional) representation of $\frak{g}$. Let
$f_{\nabla}:G\to GL(\frak{g})$ be the corresponding representation
of $G$. Then we can endow $TG$ with the following Lie group
structure:
\begin{equation}
(g_1,x_1)\cdot(g_2,x_2)=(g_1g_2,x_1+f_{\nabla}(g_1)x_2),\quad
\forall g_1,g_2\in G,x_1,x_2\in\frak{g}.\label{eq:liestru}
\end{equation}
It is straightforward to check that the Lie bracket corresponding to
the above Lie group multiplication is
\begin{equation}
[(x_1,y_1),(x_2,y_2)]=([x_1,x_2],\nabla_{x_1}y_2-\nabla_{x_2}y_1),\quad
\forall x_1,x_2,y_1,y_2\in\frak{g},
\end{equation}
that is, the Lie algebra of $TG$ (equipped with the Lie group
structure defined by Eq.~(\ref{eq:liestru})) is
$\frak{g}\ltimes_{\rho_{\nabla}}\frak{g}$. Moreover, it is easy to
show that there is a left invariant flat and torsion free connection
$\tilde{\nabla}$ on $TG$ (equipped with the Lie group structure
defined by Eq.~(\ref{eq:liestru})) defined by (cf.~\cite{BaD, DM})
\begin{equation}
\tilde{\nabla}_{(x,z)}(y,w)=(\nabla_xy,\nabla_xw),\quad \forall
x,y,z,w\in\frak{g}.
\end{equation}

\begin{prop}
With the same conditions and notations as above, the Nijenhuis
torsion of the following left invariant $(1,1)$ tensor field defined
on $TG$ vanishes:
\begin{equation}
N_{\lambda_1,\lambda_2,\lambda_3,\lambda_4}(x,y)=(\lambda_1y+\lambda_2x,\lambda_3x+\lambda_4y),\quad
\forall x,y\in\frak{g},\; \lambda_1,\lambda_2,\lambda_3,\lambda_4\in
\RR.
\end{equation}
\label{pp:nijenva}
\end{prop}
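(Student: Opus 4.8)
The plan is to compute $T(N)$ directly from its definition, working inside the Lie algebra $\frak{g}\ltimes_{\rho_\nabla}\frak{g}$ of $TG$ whose bracket is $[(x_1,y_1),(x_2,y_2)]=([x_1,x_2],\nabla_{x_1}y_2-\nabla_{x_2}y_1)$. First I would record that $N=N_{\lambda_1,\lambda_2,\lambda_3,\lambda_4}$ is the \emph{constant} endomorphism of $\frak{g}\oplus\frak{g}$ with block matrix $\bigl(\begin{smallmatrix}\lambda_2&\lambda_1\\\lambda_3&\lambda_4\end{smallmatrix}\bigr)$, each entry a scalar multiple of $\mathrm{id}_{\frak g}$, so that $N^2$ is gotten by squaring this $2\times 2$ matrix. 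Substituting $N(x,y)$ and $N^2(x,y)$ into
$$T(N)((x_1,y_1),(x_2,y_2))=[N\cdot,N\cdot]+N^2[\cdot,\cdot]-N[N\cdot,\cdot]-N[\cdot,N\cdot]$$
and separating the two $\frak{g}$-components reduces the statement to two identities in $\frak{g}$.

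The observation that keeps the calculation under control is that each of the four summands contains exactly one Lie bracket, and each bracket in $\frak{g}\ltimes_{\rho_\nabla}\frak{g}$ involves at most one application of $\nabla$. Hence, after full expansion, every term of $T(N)$ is linear in $\nabla$ and carries a single bracket; in particular no nested brackets and no products $\nabla\nabla$ ever appear. Consequently the only structural identity I expect to need is torsion-freeness in the form $\nabla_ab-\nabla_ba=[a,b]$, together with antisymmetry of the bracket. Flatness of $\nabla$ does not enter the cancellation explicitly; it is used only implicitly, being exactly the condition that makes $\rho_\nabla$ a representation and hence $\frak{g}\ltimes_{\rho_\nabla}\frak{g}$ a Lie algebra in the first place.

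Concretely, I would expand each component and organize the result by the monomials $\lambda_i\lambda_j$. The first component produces only $\lambda_1^2,\lambda_2^2,\lambda_1\lambda_2,\lambda_1\lambda_3,\lambda_1\lambda_4$, and the second only $\lambda_1\lambda_3,\lambda_1\lambda_4,\lambda_2\lambda_3,\lambda_2\lambda_4,\lambda_3\lambda_4,\lambda_4^2$. For each monomial the coefficient is a short combination of terms $[a,b]$ and $\nabla_ab-\nabla_ba$ that collapses to $0$ by torsion-freeness; for instance the $\lambda_1^2$-coefficient of the first component is $[y_1,y_2]-(\nabla_{y_1}y_2-\nabla_{y_2}y_1)=0$, and the $\lambda_2\lambda_3$-coefficient of the second is $(\nabla_{x_1}x_2-\nabla_{x_2}x_1)-[x_1,x_2]=0$. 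Checking all eleven coefficients yields $T(N)=0$.

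The only real difficulty I anticipate is bookkeeping rather than insight. If one wishes to cut down the number of cases, one can first invoke the standard facts $T(N+c\,\mathrm{id})=T(N)$ and $T(cN)=c^2T(N)$ and polarize the (quadratic) dependence of $T$ on $N$ into a symmetric bilinear concomitant; it then suffices to evaluate that concomitant on a basis of the four-dimensional space of admissible $N$'s, which both shortens the verification and makes transparent why the outcome depends on $\nabla$ only through torsion-freeness.
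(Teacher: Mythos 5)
Your proof is correct, but it takes a different route from the paper's. The paper does not expand the Nijenhuis torsion on the semidirect product at all: it first isolates a general fact (Lemma~\ref{le:useful}) that any left invariant $(1,1)$-tensor $N$ which is parallel with respect to some left invariant \emph{torsion free} connection has vanishing Nijenhuis torsion, and then observes that $N_{\lambda_1,\lambda_2,\lambda_3,\lambda_4}$ commutes with the product connection $\tilde{\nabla}_{(x,z)}(y,w)=(\nabla_xy,\nabla_xw)$ on $TG$ -- an essentially trivial check, since $\tilde\nabla_{(x,z)}$ acts diagonally and $N$ has scalar blocks. Your direct expansion, organized by the monomials $\lambda_i\lambda_j$, reaches the same conclusion; I verified that each coefficient does collapse by torsion-freeness exactly as you describe (for instance the $\lambda_1\lambda_3$-coefficient of the second component needs the mixed identities $[y_1,x_2]=\nabla_{y_1}x_2-\nabla_{x_2}y_1$ and $[x_1,y_2]=\nabla_{x_1}y_2-\nabla_{y_2}x_1$, which are still just torsion-freeness in $\frak g$). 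Your structural remarks are also accurate: flatness enters only to make $\rho_\nabla$ a representation, and the cancellation itself uses only $\nabla_ab-\nabla_ba=[a,b]$ -- which is precisely the content of the paper's lemma, unpackaged. What the paper's factorization buys is reuse: the same Lemma~\ref{le:useful} handles the cotangent-bundle case (Proposition~\ref{pp:conijenva}), where the analogue of your brute-force expansion would be noticeably messier because $N$ there involves the isomorphism $\varphi$ and its inverse. What your version buys is self-containedness and an explicit accounting of which identities are used; the polarization shortcut via $T(N+c\,\mathrm{id})=T(N)$ and $T(cN)=c^2T(N)$ is a legitimate further economy, though at that point you have essentially reconstructed the paper's lemma in a different guise.
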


To prove this proposition we need the following  lemma.

\begin{lemma}
Let $G$ be a  Lie group whose Lie algebra is $\frak{g}$. Let $N$ be
a linear transformation of $\frak{g}$ which induces a left invariant
$(1,1)$ tensor field on $G$. If there exists a left invariant
torsion free connection $\nabla$ on $G$ such that $N$ is parallel
with respect to $\nabla$, then the Nijenhuis torsion of $N$
vanishes.\label{le:useful}
\end{lemma}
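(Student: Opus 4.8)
The plan is to reduce everything to the two hypotheses and then let the terms cancel. Torsion-freeness lets me rewrite every left-invariant Lie bracket as $[x,y]=\nabla_x y-\nabla_y x$, and parallelism of $N$ says precisely that $(\nabla_z N)(w)=\nabla_z(Nw)-N(\nabla_z w)=0$, i.e. $\nabla_z\circ N=N\circ\nabla_z$ as operators on $\frak{g}$, for all $z,w\in\frak{g}$. First I would expand each of the four brackets appearing in
\[
T(N)(x,y)=[N(x),N(y)]+N^2[x,y]-N([N(x),y])-N([x,N(y)])
\]
using the torsion-free identity, so that every bracket becomes a difference of two covariant-derivative terms.

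Next I would push $N$ through each covariant-derivative term using $\nabla N=0$. Concretely, $[N(x),N(y)]=\nabla_{N(x)}N(y)-\nabla_{N(y)}N(x)=N(\nabla_{N(x)}y)-N(\nabla_{N(y)}x)$; likewise $N^2[x,y]=N^2(\nabla_x y)-N^2(\nabla_y x)$, while $N([N(x),y])=N(\nabla_{N(x)}y)-N^2(\nabla_y x)$ and $N([x,N(y)])=N^2(\nabla_x y)-N(\nabla_{N(y)}x)$. After this step every summand is one of the four expressions $N(\nabla_{N(x)}y)$, $N(\nabla_{N(y)}x)$, $N^2(\nabla_x y)$, $N^2(\nabla_y x)$.

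Then I would simply collect: each of these four expressions occurs exactly twice with opposite signs, so the sum telescopes and $T(N)(x,y)=0$ for all $x,y\in\frak{g}$. The computation is entirely formal, and there is no real obstacle beyond bookkeeping the signs; the only conceptual point worth recording is that \emph{both} hypotheses are genuinely used and cannot be dropped, since torsion-freeness is exactly what converts the bracket into covariant-derivative terms and parallelism is exactly what lets $N$ and $N^2$ commute past $\nabla$, and without either one uncancelled terms survive.
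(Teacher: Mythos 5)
Your proposal is correct and is essentially the paper's own argument: both proofs convert every bracket into covariant-derivative terms via torsion-freeness, commute $N$ past $\nabla$ using $\nabla N=0$, and observe that the resulting terms cancel. The only cosmetic difference is that you expand all four brackets and telescope, whereas the paper groups $[N(x),N(y)]+N^2[x,y]$ and rewrites it directly as $N([N(x),y]+[x,N(y)])$.
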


\begin{proof}
Since $N$ is parallel with respect to $\nabla$, we have that
$N(\nabla_xy)=\nabla_xN(y)$ for any $x$, $y\in\frak{g}$. Moreover,
since $\nabla$ is  torsion free, we show that
\begin{eqnarray*}
[N(x),N(y)]+N^2([x,y])&=&\nabla_{N(x)}N(y)-\nabla_{N(y)}N(x)+N^2(\nabla_xy)-N^2(\nabla_yx)\\
&=&N(\nabla_{N(x)}y)-N(\nabla_yN(x))+N(\nabla_xN(y))-N(\nabla_{N(y)}x)\\
&=&N([N(x),y]+[x,N(y)]),
\end{eqnarray*}
for any $x,y\in\frak{g}$.
\end{proof}

\noindent {\it Proof of Proposition~\ref{pp:nijenva}.} In fact, for
any $x,y,z,w\in\frak{g}$, we show that
\begin{eqnarray*}
N_{\lambda_1,\lambda_2,\lambda_3,\lambda_4}(\tilde{\nabla}_{(x,z)}(y,w))&=&
N_{\lambda_1,\lambda_2,\lambda_3,\lambda_4}(\nabla_xy,\nabla_xw)\\
&=&(\lambda_1\nabla_xw+\lambda_2\nabla_xy,\lambda_3\nabla_xy+\lambda_4\nabla_xw),\\
\tilde{\nabla}_{(x,z)}(N_{\lambda_1,\lambda_2,\lambda_3,\lambda_4}((y,w)))&=&
\tilde{\nabla}_{(x,z)}(\lambda_1w+\lambda_2y,\lambda_3y+\lambda_4w)\\
&=&(\lambda_1\nabla_xw+\lambda_2\nabla_xy,\lambda_3\nabla_xy+\lambda_4\nabla_xw).
\end{eqnarray*}Hence
$N_{\lambda_1,\lambda_2,\lambda_3,\lambda_4}(\tilde{\nabla}_{(x,z)}(y,w))=
\tilde{\nabla}_{(x,z)}(N_{\lambda_1,\lambda_2,\lambda_3,\lambda_4}((y,w)))$.
So the conclusion follows from Lemma~\ref{le:useful}.\hfill $\Box$

\begin{lemma}{\rm (\cite{NB})} Let $V_1$ and $V_2$ be two vector spaces
of the same dimension and $f:V_1\rightarrow V_2$ be an invertible
linear map. Set $V=V_1\oplus V_2$. Define a family of linear maps
$N_{\lambda_1,\lambda_2,\lambda_3,\lambda_4}:V\rightarrow V$ by
\begin{equation}
N_{\lambda_1,\lambda_2,\lambda_3,\lambda_4}(x,a)=(\lambda_1f^{-1}(a)+\lambda_2
x,\lambda_3f(x)+\lambda_4a),
\end{equation}
where $x\in V_1$, $a\in V_2$, $\lambda_i\in\mathbb R$, $i=1,2,3,4$.

$(1)$ $N_{\lambda_1,\lambda_2,\lambda_3,\lambda_4}^2=-{\rm id}$ if
and only if $\lambda_1\ne 0$,
$\lambda_3=\frac{-1-\lambda_2^2}{\lambda_1}$ and
$\lambda_4=-\lambda_2$. In this case,
$N_{\lambda_1,\lambda_2,\lambda_3,\lambda_4}$ is re-parameterized by
$\lambda,\mu$ with $\lambda,\mu\in \mathbb R$  as follows:
\begin{equation}
J_{\lambda,\mu}(x,a)=N_{\lambda,\mu,{{-1-\mu^2}\over{\lambda}},-\mu}((x,a))=(\lambda
f^{-1}(a)+\mu x, {{-1-\mu^2}\over{\lambda}}f(x)-\mu a),\;\;
\lambda\neq 0,\label{eq:jlambdamucom1}
\end{equation} for any $x\in V_1,a\in V_2$.

$(2)$ $N_{\lambda_1,\lambda_2,\lambda_3,\lambda_4}^2={\rm id}$ and
$N_{\lambda_1,\lambda_2,\lambda_3,\lambda_4}\neq\pm{\rm id}$ if and
only if $N_{\lambda_1,\lambda_2,\lambda_3,\lambda_4}$ belongs to one
of the following cases:
\begin{eqnarray*}
&{\rm (F'1)}&\;\; N_{0,1,k,-1}(x,a)=\pm(
x,k f(x)-a);\\
&{\rm (F'2)}&\;\; N_{\hat{k},1,0,-
1}(x,a)=\pm(\hat{k} f^{-1}(a)+x,-a),\quad \hat{k}\neq 0; \\
&{\rm (F'3)}&\;\; N_{k_1,k_2,{{1-k_2^2}\over{k_1}},-k_2}(x,a)=(k_1
f^{-1}(a)+k_2x,{{1-k_2^2}\over{k_1}} f(x)-k_2 a),\;\; k_2^2\neq
1,\;\; k_1\neq 0,
\end{eqnarray*}
for any $x\in V_1, a\in V_2$, where $k,\hat{k},k_1,k_2\in\mathbb R$.
Moreover, the eigenspaces corresponding to the eigenvalue $\pm1$ of
any linear map appearing in ${\rm (F'1)}$, ${\rm (F'2)}$ and ${\rm
(F'3)}$ have the same dimension.

$(3)$ The linear operators in the cases ${\rm (F'1)}$, ${\rm (F'2)}$
and ${\rm (F'3)}$
 anticommute with $J_{\lambda,\mu}$ given by Eq.~$($\ref{eq:jlambdamucom1}$)$ if and only
 if the parameters satisfy the following conditions:

${\rm (F'1)}$\;\; $k=-\frac{2\mu}{\lambda}$;

${\rm (F'2)}$\;\; $\hat{k}=\frac{2\mu\lambda}{1+\mu^2}$, $\mu\neq
0$;

${\rm (F'3)}$\;\;
$k_2=\frac{\mu}{\lambda}k_1\pm\sqrt{1-\frac{k_1^2}{\lambda^2}}$,
$k_1^2\leq
 \lambda^2$, $k_1\neq 0$ and
 $(\mu^2+1)^2k_1^2\neq 4\mu^2\lambda^2$.

 Moreover, they are
 explicitly given as follows:
\begin{eqnarray*} &{\rm (F1)}&
J_{\lambda,\mu}(x,a)=(\lambda f^{-1}(a)+\mu x,
{{-1-\mu^2}\over{\lambda}}f(x)-\mu a),\\
&&E_{\lambda,\mu}(x,a)=\pm(x,-\frac{2\mu}{\lambda}f(x)-a),\quad \lambda\neq 0;\\
&{\rm (F2)}& J_{\lambda,\mu}(x,a)=(\lambda f^{-1}(a)+\mu x,
{{-1-\mu^2}\over{\lambda}} f(x)-\mu a),\\
&& \hat{E}_{\lambda,\mu}(x,a)=\pm(\frac{2\mu\lambda}{1+\mu^2}
f^{-1}(a)+x,-a),\quad \lambda\neq 0,\quad \mu\neq 0;
\end{eqnarray*}
\begin{eqnarray*}
&{\rm (F3)}& J_{\lambda,\mu}(x,a)=(\lambda f^{-1}(a)+\mu x,
{{-1-\mu^2}\over{\lambda}} f(x)-\mu a),\\
& &E_{k,\lambda,\mu}^{\pm}(x,a)=(k
f^{-1}(a)+(\frac{k\mu}{\lambda}\pm\sqrt{\frac{1-k^2}{\lambda^2}})x,
(\frac{(1-\mu^2)k}{\lambda^2}\mp\frac{2\mu}{\lambda}\sqrt{1-\frac{k^2}{\lambda^2}})f(x)\\
&&\hspace{3cm}+
(\frac{k\mu}{\lambda}\mp\sqrt{1-\frac{k^2}{\lambda^2}})a),k^2\leq
 \lambda^2, \lambda\neq 0, k\neq 0,
(\mu^2+1)^2k^2\neq 4\mu^2\lambda^2,
\end{eqnarray*}
for any $x\in V_1,a\in V_2$ and $k,\lambda,\mu\in\mathbb
R$.\label{le:nbli}
\end{lemma}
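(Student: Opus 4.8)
The plan is to collapse the whole statement onto $2\times2$ real matrix algebra through a single algebra embedding, after which Parts (1)--(3) reduce to short computations with the characteristic polynomial. First I would note that, since $f$ is invertible with $f^{-1}f=\id_{V_1}$ and $ff^{-1}=\id_{V_2}$, the map
\[
\Phi:M_2(\RR)\to\End(V),\qquad
\Phi\!\begin{pmatrix} a & b\\ c & d\end{pmatrix}(x,v)
=\bigl(a\,x+b\,f^{-1}(v),\; c\,f(x)+d\,v\bigr),
\]
is an injective algebra homomorphism onto the subalgebra of all operators of this block shape: when one multiplies two such operators, every occurrence of $f^{-1}f$ or $ff^{-1}$ becomes an identity, so the block composition rule is exactly $2\times2$ matrix multiplication. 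Under $\Phi$ one has $N_{\lambda_1,\lambda_2,\lambda_3,\lambda_4}=\Phi(M)$ with $M=\begin{pmatrix}\lambda_2 & \lambda_1\\ \lambda_3 & \lambda_4\end{pmatrix}$; since $\Phi$ is injective and $\Phi(I)=\id_V$ (here $I$ is the $2\times2$ identity), the identities $N^2=\pm\id$ are equivalent to $M^2=\pm I$, and anticommutation of two such operators is equivalent to anticommutation of the associated matrices.

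For Parts (1) and (2) I would invoke Cayley--Hamilton, $M^2-\mathrm{tr}(M)M+\det(M)I=0$. If $M$ is not scalar then $M,I$ are linearly independent, so $M^2=\varepsilon I$ forces $\mathrm{tr}(M)=0$ and $\det(M)=-\varepsilon$. For $\varepsilon=-1$ this gives $\lambda_4=-\lambda_2$ and $\lambda_2\lambda_4-\lambda_1\lambda_3=1$, i.e. $\lambda_1\lambda_3=-1-\lambda_2^2<0$; hence $\lambda_1\neq0$ and $\lambda_3=(-1-\lambda_2^2)/\lambda_1$, which is precisely Part (1) together with the reparametrisation $J_{\lambda,\mu}$. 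For $\varepsilon=+1$ the condition is $\lambda_4=-\lambda_2$ and $\lambda_1\lambda_3=1-\lambda_2^2$ with $M$ non-scalar; splitting according to whether $\lambda_1=0$, or $\lambda_3=0$, or both are nonzero yields respectively ${\rm (F'1)}$ (forcing $\lambda_2^2=1$), ${\rm (F'2)}$ (forcing $\lambda_2^2=1$), and ${\rm (F'3)}$ (forcing $\lambda_2^2\neq1$), the $\pm$ signs in ${\rm (F'1)},{\rm (F'2)}$ recording the two choices $\lambda_2=\pm1$. The scalar solutions $M=\pm I$ are exactly the excluded cases $N=\pm\id$.

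The equal-dimension claim follows from a trace count. Since $\dim V_1=\dim V_2=n$, one has $\mathrm{tr}_V\Phi(M)=n\,\mathrm{tr}(M)$, and each product structure above has $\mathrm{tr}(M_E)=\lambda_2+\lambda_4=0$. As $E^2=\id$ is diagonalisable with eigenvalues $\pm1$, the difference $\dim V_{+}-\dim V_{-}$ equals $\mathrm{tr}_V(E)=0$, whence $\dim V_{+}=\dim V_{-}=n$.

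For Part (3) I would use the identity, valid for trace-zero $2\times2$ matrices $A,B$ (so that $A^2=-\det(A)I$, etc.), that $AB+BA=\bigl(\det(A)+\det(B)-\det(A+B)\bigr)I$. Writing $M_J$ for the matrix of $J_{\lambda,\mu}$ (with $\det M_J=1$) and $M_E$ for the matrix of the candidate product structure (with $\det M_E=-1$), anticommutation becomes the single scalar equation $\det(M_J+M_E)=0$. For ${\rm (F'1)}$ and ${\rm (F'2)}$ this is linear and returns $k=-2\mu/\lambda$ and $\hat k=2\mu\lambda/(1+\mu^2)$ (with $\mu\neq0$) at once. For ${\rm (F'3)}$ it becomes a quadratic in $k_2$ whose discriminant is proportional to $\lambda^2-k_1^2$, giving $k_2=\tfrac{\mu}{\lambda}k_1\pm\sqrt{1-k_1^2/\lambda^2}$ under $k_1^2\le\lambda^2$, $k_1\neq0$. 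The hard part will be the bookkeeping in ${\rm (F'3)}$: one must check that the degenerate value $k_2=\pm1$, which would drop the operator back into case ${\rm (F'1)}$ or ${\rm (F'2)}$, corresponds exactly to $(\mu^2+1)^2k_1^2=4\mu^2\lambda^2$, so that excluding this keeps the solution genuinely in ${\rm (F'3)}$. The explicit operators $E^{\pm}_{k,\lambda,\mu}$ of case ${\rm (F3)}$ then follow by substituting $k_2$ and $\lambda_3=(1-k_2^2)/k_1$ back into $\Phi(M_E)$; this substitution, together with tracking the two sign choices, is the only genuinely laborious (but routine) step, everything else being immediate from the matrix reduction.
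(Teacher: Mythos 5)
Your proposal is correct and complete. Note that the paper itself gives no proof of this lemma --- it is imported verbatim from the reference [NB] --- so the comparison can only be with the direct componentwise verification that such statements are ordinarily given. Your reduction via the embedding $\Phi:M_2(\mathbb{R})\to\mathrm{End}(V)$ is a genuinely cleaner route: once one checks (as you do) that block composition with all occurrences of $f^{-1}f$ and $ff^{-1}$ collapsed is exactly $2\times 2$ matrix multiplication, Cayley--Hamilton turns Parts (1) and (2) into the single observation that a non-scalar $M$ with $M^2=\varepsilon I$ must have $\mathrm{tr}(M)=0$ and $\det(M)=-\varepsilon$, and the polarized identity $M_JM_E+M_EM_J=(\det M_J+\det M_E-\det(M_J+M_E))I$ for trace-zero matrices turns Part (3) into the single scalar equation $\det(M_J+M_E)=0$, whose three specializations I have checked reproduce $k=-2\mu/\lambda$, $\hat k=2\mu\lambda/(1+\mu^2)$ and the quadratic $\lambda^2k_2^2-2\mu\lambda k_1k_2+k_1^2(1+\mu^2)-\lambda^2=0$ with discriminant proportional to $\lambda^2-k_1^2$; your identification of the degenerate locus $k_2^2=1$ with $(\mu^2+1)^2k_1^2=4\mu^2\lambda^2$ also checks out (from the quadratic, $k_2^2=1$ forces $k_1(1+\mu^2)=2\mu\lambda k_2$, hence $k_1^2(1+\mu^2)^2=4\mu^2\lambda^2$). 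The trace argument for the equal-dimension claim is likewise valid since $E$ is diagonalizable and $\mathrm{tr}_V\Phi(M)=n\,\mathrm{tr}(M)=0$. One byproduct of your computation worth recording: the coefficient of $a$ in the displayed operator $E^{\pm}_{k,\lambda,\mu}$ of case (F3) must be $-k_2=-\frac{k\mu}{\lambda}\mp\sqrt{1-k^2/\lambda^2}$ (the statement as printed has the sign of the $\frac{k\mu}{\lambda}$ term wrong, which would violate $\mathrm{tr}(M_E)=0$), and the radical in the first slot should read $\sqrt{1-k^2/\lambda^2}$ rather than $\sqrt{(1-k^2)/\lambda^2}$; your substitution of $k_2$ and $\lambda_3=(1-k_2^2)/k_1$ produces the corrected expressions.
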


By Proposition~\ref{pp:nijenva} and Lemma~\ref{le:nbli} we have the
following conclusion:

\begin{coro}
With the conditions and notations in Proposition~\ref{pp:nijenva},
there exist three families of left invariant complex product
structures on $TG$ which are given by {\rm (F1)}, {\rm (F2)} and
{\rm (F3)} defined in Lemma~\ref{le:nbli} respectively, where
$V=\frak{g}\ltimes_{\rho_{\nabla}}\frak{g}, V_1=V_2=\frak{g},f={\rm
id}, x,a\in\frak{g}$.\label{co:compro1}
\end{coro}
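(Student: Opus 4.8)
The plan is to assemble the two preceding results. Lemma~\ref{le:nbli} furnishes the purely linear-algebraic data --- that $J_{\lambda,\mu}^2=-{\rm id}$, that the operators of (F'1), (F'2), (F'3) square to ${\rm id}$ with equidimensional eigenspaces, and that the pairs assembled in (F1), (F2), (F3) anticommute --- while Proposition~\ref{pp:nijenva} furnishes the vanishing of the Nijenhuis torsions. The bridge between the abstract lemma and the geometric proposition is the normalization $f={\rm id}$.

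First I would record that, taking $V_1=V_2=\frak{g}$ and $f={\rm id}$ in Lemma~\ref{le:nbli}, the ambient space $V=V_1\oplus V_2$ is exactly the Lie algebra $\frak{g}\ltimes_{\rho_{\nabla}}\frak{g}$ of $TG$, while the family $N_{\lambda_1,\lambda_2,\lambda_3,\lambda_4}(x,a)=(\lambda_1 f^{-1}(a)+\lambda_2 x,\lambda_3 f(x)+\lambda_4 a)$ degenerates into $(\lambda_1 a+\lambda_2 x,\lambda_3 x+\lambda_4 a)$, which is precisely the left invariant $(1,1)$ tensor field appearing in Proposition~\ref{pp:nijenva}. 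Consequently every operator produced by Lemma~\ref{le:nbli} in this situation is one of the tensor fields $N_{\lambda_1,\lambda_2,\lambda_3,\lambda_4}$ covered by Proposition~\ref{pp:nijenva}, whence its Nijenhuis torsion vanishes identically.

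With this in hand the verification is immediate. By Lemma~\ref{le:nbli}(1), $J_{\lambda,\mu}$ (for $\lambda\neq0$) satisfies $J_{\lambda,\mu}^2=-{\rm id}$; combined with $T(J_{\lambda,\mu})=0$ from the first step, $J_{\lambda,\mu}$ is a left invariant complex structure on $TG$. By Lemma~\ref{le:nbli}(2), each operator in (F'1), (F'2), (F'3) satisfies $E^2={\rm id}$, $E\neq\pm{\rm id}$, and has $\pm1$-eigenspaces of equal dimension; since its Nijenhuis torsion also vanishes, each is a left invariant paracomplex (product) structure. Finally, by Lemma~\ref{le:nbli}(3), the specific pairings of (F1), (F2), (F3) satisfy $JE=-EJ$, so each pair $\{J_{\lambda,\mu},E\}$ meets all three requirements in the definition of a left invariant complex product structure. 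Letting the parameters range over their admissible domains then yields the three advertised families on $TG$.

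I do not anticipate any genuine obstacle: once the identification $f={\rm id}$ is made, the statement is a direct bookkeeping combination of Lemma~\ref{le:nbli} (algebraic identities) and Proposition~\ref{pp:nijenva} (Nijenhuis vanishing). The only point deserving explicit attention is that the anticommutation relation $JE=-EJ$ is a linear identity and hence requires no integrability input, so it is supplied entirely by the parameter constraints in Lemma~\ref{le:nbli}(3).
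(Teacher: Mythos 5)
Your argument is exactly the one the paper intends: the corollary is stated as an immediate consequence of Proposition~\ref{pp:nijenva} (Nijenhuis vanishing for all $N_{\lambda_1,\lambda_2,\lambda_3,\lambda_4}$) together with the algebraic identities of Lemma~\ref{le:nbli} under the specialization $V_1=V_2=\frak{g}$, $f={\rm id}$, and the paper offers no further proof. Your write-up is correct and simply makes this bookkeeping explicit.
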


\begin{remark}
{\rm The left invariant complex structure $J_{1,0}$ (on
$\frak{g}\ltimes_{\rho_{\nabla}}\frak{g}$) has already been known
for a long time (cf.~\cite{AnS, BaD, DM}). In this case the product
structure, which was considered in~\cite{AnS, BaD}, is given by
$E(x,y)=(-x,y)$, for any $x,y\in\frak{g}$. }
\end{remark}

Now we suppose that $\omega:\bigwedge^2\frak{g}\to\mathbb{R}$ is a
left invariant symplectic form on $G$ which is parallel with respect
to $\nabla$, that is, $(G,\nabla,\omega)$ is a special symplectic
Lie group. Define a bilinear form
$g:(\frak{g}\ltimes_{\rho_{\nabla}}\frak{g})\otimes(\frak{g}\ltimes_{\rho_{\nabla}}\frak{g})\to\mathbb{R}$
by
\begin{equation}
g((x,z),(y,w))=\omega(x,w)+\omega(y,z),\quad \forall
x,y,z,w\in\frak{g}.\label{eq:metric1}
\end{equation}

It is easy to show that $g$ is symmetric and since $\omega$ is
nondegenerate, $g$ is also nondegenerate. So after left translating,
$g$ becomes a left invariant (neutral) metric on $TG$. Moreover, for
any $x,y,z,w\in\frak{g}$, we have

{\small \begin{eqnarray*}
&&g(N_{\lambda_1,\lambda_2,\lambda_3,\lambda_4}(x,z),(y,w))+
g((x,z),N_{\lambda_1,\lambda_2,\lambda_3,\lambda_4}(y,w))\\
&=&g((\lambda_1z+\lambda_2x,\lambda_3x+\lambda_4z),(y,w))+g((x,z),(\lambda_1w+\lambda_2y,
\lambda_3y+\lambda_4w))\\
&=&\lambda_1\omega(z,w)+\lambda_2\omega(x,w)+\lambda_3\omega(y,x)+\lambda_4\omega(y,z)+
\lambda_3\omega(x,y)+\lambda_4\omega(x,w)+\lambda_1\omega(w,z)+\lambda_2\omega(y,z)\\
&=&(\lambda_2+\lambda_4)\omega(x,w)+(\lambda_2+\lambda_4)\omega(y,z)=(\lambda_2+\lambda_4)g((x,z),(y,w)).
\end{eqnarray*}}
It is easy to show that the left invariant complex product
structures constructed in Corollary~\ref{co:compro1} are all
compatible with respect to $g$ defined by Eq.~(\ref{eq:metric1})
since in these cases $\lambda_2+\lambda_4=0$. Thus, we define three
left invariant 2-forms
$\omega_1,\omega_2,\omega_3:\bigwedge^2(\frak{g}\ltimes_{\rho_{\nabla}}\frak{g})\to\mathbb{R}$
(on $TG$) through Eq.~(\ref{eq:3form}). In particular, for any
$x,y,z,w\in\frak{g}$, we have
\begin{eqnarray*}
\omega_1((x,z),(y,w))&=&g(J_{\lambda,\mu}(x,z),(y,w))=g((\lambda
z+\mu
x,\frac{-1-\mu^2}{\lambda}x-\mu z),(y,w))\\
&=&\lambda\omega(z,w)+\mu\omega(x,w)-\frac{1+\mu^2}{\lambda}\omega(y,x)-\mu\omega(y,z).
\end{eqnarray*}
Furthermore, for any $x,y,z,u,v,w\in\frak{g}$, we have that

{\small \begin{eqnarray*}
&&\omega_1((x,u),[(y,v),(z,w)])+\omega_1((y,v),[(z,w),(x,u)])+\omega_1((z,w),[(x,u),(y,v)])\\
&=&\omega_1((x,u),([y,z],\nabla_yw-\nabla_zv))+\omega_1((y,v),([z,x],\nabla_zu-\nabla_xw))+
\omega_1((z,w),([x,y],\nabla_xv-\nabla_yu))\\
&=&\lambda\omega(u,\nabla_yw)-\lambda\omega(u,\nabla_zv)+\mu\omega(x,\nabla_yw)-
\mu\omega(x,\nabla_zv)-\frac{1+\mu^2}{\lambda}\omega([y,z],x)-\mu\omega([y,z],u)\\
& &+
\lambda\omega(v,\nabla_zu)-\lambda\omega(v,\nabla_xw)+\mu\omega(y,\nabla_zu)-\mu\omega(y,\nabla_xw)
-\frac{1+\mu^2}{\lambda}\omega([z,x],y)-\mu\omega([z,x],v)\\
&
&+\lambda\omega(w,\nabla_xv)-\lambda\omega(w,\nabla_yu)+\mu\omega(z,\nabla_xv)-\mu\omega(z,\nabla_yu)
-\frac{1+\mu^2}{\lambda}\omega([x,y],z)-\mu\omega([x,y],w)\\
&=&\lambda\omega(u,\nabla_yw)-\lambda\omega(w,\nabla_yu)-\lambda\omega(u,\nabla_zv)+
\lambda\omega(v,\nabla_zu)+\mu\omega(x,\nabla_yw)-\mu\omega(y,\nabla_xw)-\\
& &\mu\omega([x,y],w)
-\mu\omega(x,\nabla_zv)-\mu\omega([z,x],v)+\mu\omega(z,\nabla_xv)-\frac{1+\mu^2}{\lambda}\omega([y,z],x)
-\frac{1+\mu^2}{\lambda}\omega([z,x],y)\\
& &-\frac{1+\mu^2}{\lambda}\omega([x,y],z)-\mu\omega([y,z],u)
+\mu\omega(y,\nabla_zu)-\mu\omega(z,\nabla_yu)-\lambda\omega(v,\nabla_xw)+\lambda\omega(w,\nabla_xv)=0.
\end{eqnarray*}}
Therefore, $\omega_1$ is closed. So by Proposition~\ref{pp:andrada}
we obtain the main result of this subsection:

\begin{theorem}
Let $(G,\nabla,\omega)$ be a special symplectic Lie group. Endow the
tangent bundle $TG$ of $G$ with the Lie group structure defined by
Eq.~$($\ref{eq:liestru}$)$. Then there exist three families of left
invariant hypersymplectic structures $\{{\rm (F1)},g\}$, $\{{\rm
(F2)},g\}$ and  $\{{\rm (F3)},g\}$ on $TG$, where $g$ is defined by
Eq.~$($\ref{eq:metric1}$)$ and the complex product structures are
defined in Lemma~\ref{le:nbli}, where
$V=\frak{g}\ltimes_{\rho_{\nabla}}\frak{g}, V_1=V_2=\frak{g},f={\rm
id}, x,a\in\frak{g}$. In these cases, $TG$ becomes a hypersymplectic
Lie group.
\end{theorem}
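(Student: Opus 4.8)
The plan is to assemble the three ingredients already prepared in the preceding discussion and then reduce the whole theorem to a single closedness check by invoking Proposition~\ref{pp:andrada}, which guarantees that once $\omega_1$ is closed the remaining forms $\omega_2,\omega_3$ are automatically symplectic. Concretely, for each of the three families it suffices to verify: (i) that the pair $\{J,E\}$ is a left invariant complex product structure on $TG$; (ii) that $g$ from Eq.~\eqref{eq:metric1} is a left invariant neutral metric compatible with $\{J,E\}$ in the sense of Eq.~\eqref{eq:compatibleg}; and (iii) that the associated $\omega_1 = g(J\,\cdot\,,\cdot)$ is closed.

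For (i), I would simply cite Corollary~\ref{co:compro1}, which already exhibits the three families (F1), (F2), (F3) as complex product structures on the Lie algebra $\frak{g}\ltimes_{\rho_{\nabla}}\frak{g}$ of $TG$, with $V_1=V_2=\frak{g}$ and $f=\mathrm{id}$; note that flatness of $\nabla$ is what makes $\rho_\nabla$ a representation and hence the bracket $[(x,u),(y,v)]=([x,y],\nabla_x v-\nabla_y u)$ a genuine Lie bracket in the first place. For (ii), symmetry and nondegeneracy of $g$ are immediate from those of $\omega$. The compatibility condition~\eqref{eq:compatibleg} is equivalent to $J$ and $E$ being skew-symmetric with respect to $g$: given $J^2=-\mathrm{id}$ and $E^2=\mathrm{id}$, the identity $g(Nx,y)=-g(x,Ny)$ yields $g(Nx,Ny)=-g(x,N^2y)$, which recovers both $g(Jx,Jy)=g(x,y)$ and $g(Ex,Ey)=-g(x,y)$. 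The direct computation $g(N_{\lambda_1,\lambda_2,\lambda_3,\lambda_4}(x,z),(y,w))+g((x,z),N_{\lambda_1,\lambda_2,\lambda_3,\lambda_4}(y,w))=(\lambda_2+\lambda_4)\,g((x,z),(y,w))$ shows this skew-symmetry holds exactly when $\lambda_2+\lambda_4=0$, which is the case for every operator listed in (F1), (F2), (F3); hence $g$ is compatible with each.

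The crux is step (iii), the closedness $d\omega_1=0$, and this is where I expect the main difficulty. I would first write $\omega_1$ explicitly, using $\omega_1=g(J_{\lambda,\mu}\,\cdot\,,\cdot)$, as the combination $\omega_1((x,z),(y,w))=\lambda\,\omega(z,w)+\mu\,\omega(x,w)-\tfrac{1+\mu^2}{\lambda}\omega(y,x)-\mu\,\omega(y,z)$, and then expand the cocycle $d\omega_1$ on three elements of $\frak{g}\ltimes_{\rho_{\nabla}}\frak{g}$ using the bracket above. This produces a large number of terms, and the entire point is that they cancel by virtue of the three defining properties of a special symplectic Lie group. I expect the terms proportional to $\lambda$ to cancel in pairs directly via the parallel condition~\eqref{eq:paralsy}, $\omega(\nabla_x y,z)=\omega(\nabla_x z,y)$; the terms proportional to $\mu$ to cancel using torsion-freeness ($[x,y]=\nabla_x y-\nabla_y x$) together with~\eqref{eq:paralsy} and the skew-symmetry of $\omega$, since these combine to give $\omega([x,y],w)=\omega(x,\nabla_y w)-\omega(y,\nabla_x w)$; and the terms with coefficient $\tfrac{1+\mu^2}{\lambda}$ to vanish by the closedness of $\omega$ on $\frak{g}$, namely $\omega([y,z],x)+\omega([z,x],y)+\omega([x,y],z)=0$. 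The obstacle is purely organizational: the verification is routine but long, and one must group the terms so that each hypothesis is applied where it is genuinely needed rather than losing track of signs.

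Finally, having established $d\omega_1=0$ and the compatibility of $g$ with each complex product structure, Proposition~\ref{pp:andrada} immediately delivers $d\omega_2=d\omega_3=0$. Therefore each triple $\{J,E,g\}$ arising from (F1), (F2), (F3) (with the data $V=\frak{g}\ltimes_{\rho_{\nabla}}\frak{g}$, $V_1=V_2=\frak{g}$, $f=\mathrm{id}$ of Lemma~\ref{le:nbli}) is a left invariant hypersymplectic structure on $TG$, so that $TG$ equipped with the group law~\eqref{eq:liestru} becomes a hypersymplectic Lie group, as claimed.
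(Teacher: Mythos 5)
Your proposal is correct and follows essentially the same route as the paper: cite Corollary~\ref{co:compro1} for the three families of complex product structures, verify compatibility of $g$ via the identity $g(N(x,z),(y,w))+g((x,z),N(y,w))=(\lambda_2+\lambda_4)g((x,z),(y,w))$ with $\lambda_2+\lambda_4=0$, check $d\omega_1=0$ by direct expansion, and invoke Proposition~\ref{pp:andrada} for $\omega_2,\omega_3$. Your description of how the terms in $d\omega_1$ cancel (the $\lambda$-terms by Eq.~(\ref{eq:paralsy}), the $\mu$-terms by torsion-freeness combined with Eq.~(\ref{eq:paralsy}), and the $\tfrac{1+\mu^2}{\lambda}$-terms by closedness of $\omega$) accurately accounts for the paper's computation.
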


\subsection{The cotangent bundles of
special symplectic Lie groups}\label{se:cotan} Let $G$ be a Lie
group whose Lie algebra is $\frak{g}$. Let $T^*G$ be the cotangent
bundle of $G$ which is identified with $G\times\frak{g}^*$. It is
possible to identify $G$ with the zero section in $T^*G$ and
$\frak{g}^*$ with the fibre over a neutral element $(0,e)$ of
$T^*G$. Therefore we identify $T_{0,e}(T^*G)$ with
$\frak{g}\times\frak{g}^*$. Now suppose that there exists a left
invariant flat and torsion free connection $\nabla$ on $G$. Endow
$T^*G$ with the following Lie group structure:
\begin{equation}
(g_1,a^*)\cdot(g_2,b^*)=(g_1g_2,a^*+b^*\circ
f_{\nabla}(g_1^{-1})),\quad \forall g_1,g_2\in G,
a^*,b^*\in\frak{g}^*,\label{eq:coliestru}
\end{equation}
where $f_{\nabla}:G\to GL(\frak{g})$ is the Lie group homomorphism
corresponding to the Lie algebra homomorphism $\rho_{\nabla}$
defined by Eq.~(\ref{eq:narepre}) and $\langle b^*\circ
f_{\nabla}(g_1^{-1}),x\rangle=\langle
b^*,f_{\nabla}(g_1^{-1})x\rangle$, for any $x\in\frak{g}$. Then
according to Boyom~(\cite{Bo1, Bo2}), the Lie bracket corresponding
to the above Lie group multiplication is given by
\begin{equation}
[(x,a^*),(y,b^*)]=([x,y],a^*\circ\nabla_y-b^*\circ\nabla_x),\quad
\forall x,y\in\frak{g}, a^*,b^*\in\frak{g}^*,
\end{equation}
that is, the Lie algebra of $T^*G$ equipped with the Lie group
structure defined by Eq.~(\ref{eq:coliestru}) is
$\frak{g}\ltimes_{\rho_{\nabla}^*}\frak{g}^*$. Moreover, according
to~\cite{Bai1, Bai2}, the following equation defines a left
invariant flat and torsion free connection $\hat{\nabla}$ on $T^*G$
equipped with the Lie group structure defined by
Eq.~(\ref{eq:coliestru}):
\begin{equation}
\hat{\nabla}_{(x,a^*)}(y,b^*)=(\nabla_xy,-b^*\circ\nabla_x),\quad
\forall x\in\frak{g},a^*\in\frak{g}^*.\label{eq:speconn}
\end{equation}
Now suppose that in addition, there exists a left invariant
symplectic form on $G$ with is parallel with respect to $\nabla$,
that is, $(G,\nabla,\omega)$ is a special symplectic Lie group. Then
$\omega$ induces a linear isomorphism
$\varphi:\frak{g}\to\frak{g}^*$ through
\begin{equation}
\omega(x,y)=\langle\varphi(x),y\rangle,\quad \forall
x,y\in\frak{g}.\label{eq:delimap}
\end{equation}
\begin{prop}
With the same conditions and notations as above, the Nijenhuis
torsion of the following left invariant $(1,1)$ tensor field on
$T^*G$ which is equipped with the Lie group structure defined by
Eq.~$($\ref{eq:coliestru}$)$ vanishes:
\begin{equation}
N_{\lambda_1,\lambda_2,\lambda_3,\lambda_4}(x,a^*)=(\lambda_1\varphi^{-1}(a^*)+\lambda_2
x,\lambda_3\varphi(x)+\lambda_4a^*), \forall
x\in\frak{g},a^*\in\frak{g}^*,\forall
\lambda_1,\lambda_2,\lambda_3,\lambda_4\in \RR.
\end{equation}
\label{pp:conijenva}
\end{prop}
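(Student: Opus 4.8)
The plan is to reduce Proposition~\ref{pp:conijenva} to the already-established Proposition~\ref{pp:nijenva} (whose proof, via Lemma~\ref{le:useful}, shows that any left invariant $(1,1)$ tensor parallel with respect to a left invariant torsion free connection has vanishing Nijenhuis torsion). Since the connection $\hat{\nabla}$ defined by Eq.~(\ref{eq:speconn}) is left invariant, flat and torsion free, by Lemma~\ref{le:useful} it suffices to verify the single algebraic identity
\begin{equation}
N_{\lambda_1,\lambda_2,\lambda_3,\lambda_4}(\hat{\nabla}_{(x,a^*)}(y,b^*))=
\hat{\nabla}_{(x,a^*)}(N_{\lambda_1,\lambda_2,\lambda_3,\lambda_4}(y,b^*)),
\quad\forall x,y\in\frak{g},\ a^*,b^*\in\frak{g}^*,
\end{equation}
i.e.\ that $N_{\lambda_1,\lambda_2,\lambda_3,\lambda_4}$ is parallel with respect to $\hat{\nabla}$. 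This mirrors exactly the tangent-bundle argument, with $\frak{g}^*$ in place of the second copy of $\frak{g}$ and $\varphi$ in place of the identity $f$.

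First I would expand the left-hand side: $\hat{\nabla}_{(x,a^*)}(y,b^*)=(\nabla_xy,-b^*\circ\nabla_x)$, and then apply $N_{\lambda_1,\lambda_2,\lambda_3,\lambda_4}$ to this pair, producing a $\frak{g}$-component $\lambda_1\varphi^{-1}(-b^*\circ\nabla_x)+\lambda_2\nabla_xy$ and a $\frak{g}^*$-component $\lambda_3\varphi(\nabla_xy)+\lambda_4(-b^*\circ\nabla_x)$. Next I would expand the right-hand side: $N_{\lambda_1,\lambda_2,\lambda_3,\lambda_4}(y,b^*)=(\lambda_1\varphi^{-1}(b^*)+\lambda_2 y,\ \lambda_3\varphi(y)+\lambda_4 b^*)$, and apply $\hat{\nabla}_{(x,a^*)}$, which sends a pair $(u,c^*)$ to $(\nabla_xu,-c^*\circ\nabla_x)$.

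The crux is to see that the two $\frak{g}^*$-components and the two $\frak{g}$-components agree. This amounts to two compatibility facts between $\varphi$ and $\nabla$ coming from the parallel condition Eq.~(\ref{eq:paralsy}) rewritten through Eq.~(\ref{eq:delimap}): namely $\varphi(\nabla_xy)=-\varphi(y)\circ\nabla_x$, equivalently $\langle\varphi(\nabla_xy),z\rangle=\omega(\nabla_xy,z)=\omega(\nabla_xz,y)$ must be matched against $\langle\varphi(y),\nabla_xz\rangle$; and dually $\varphi^{-1}(-b^*\circ\nabla_x)=\nabla_x\varphi^{-1}(b^*)$. So the main obstacle—really the only substantive point—is to check that the parallelism of $\omega$ with respect to $\nabla$ translates precisely into $N_{\lambda_1,\lambda_2,\lambda_3,\lambda_4}$ commuting with $\hat{\nabla}$; once the sign conventions in Eq.~(\ref{eq:speconn}) are reconciled with Eq.~(\ref{eq:paralsy}), the $\lambda_1$- and $\lambda_3$-terms match using these two identities while the $\lambda_2$- and $\lambda_4$-terms match trivially since $\hat{\nabla}$ acts diagonally on each factor. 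Having established parallelism, the vanishing of the Nijenhuis torsion follows immediately from Lemma~\ref{le:useful}, completing the proof.
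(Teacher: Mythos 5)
Your proposal is correct and follows essentially the same route as the paper's own proof: you verify that $N_{\lambda_1,\lambda_2,\lambda_3,\lambda_4}$ is parallel with respect to the left invariant flat torsion free connection $\hat{\nabla}$ of Eq.~(\ref{eq:speconn}), using the compatibility identity between $\varphi$ and $\nabla$ that encodes $\nabla\omega=0$, and then conclude by Lemma~\ref{le:useful}. As a side remark, the sign you state, $\varphi(\nabla_xy)=-\varphi(y)\circ\nabla_x$, is the one actually needed for the two components to match; the paper's displayed Eq.~(\ref{eq:omegatovar}) drops this minus sign, although its subsequent computation is consistent with the correct version.
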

\begin{proof}
Since $\omega$ is parallel with respect to $\nabla$, for any
$x,y,z\in\frak{g}$, we have
$$\omega(\nabla_xy,z)+\omega(y,\nabla_xz)=0\Leftrightarrow\langle\varphi(\nabla_xy),z\rangle+\langle
\varphi(y),\nabla_xz\rangle=0\Leftrightarrow\langle\varphi(\nabla_xy),z\rangle+
\langle\varphi(y)\circ\nabla_x,z\rangle=0.$$ Hence
\begin{equation}
\varphi(\nabla_xy)=\varphi(y)\circ\nabla_x,\quad\forall
x,y\in\frak{g}.\label{eq:omegatovar}
\end{equation}
On the other hand, for any $x,y\in\frak{g},a^*,b^*\in\frak{g}^*$, we
have
\begin{eqnarray*}
N_{\lambda_1,\lambda_2,\lambda_3,\lambda_4}(\hat{\nabla}_{(x,a^*)}(y,b^*))&=&
N_{\lambda_1,\lambda_2,\lambda_3,\lambda_4}(\nabla_xy,-b^*\circ
\nabla_x)\\
&=&(-\lambda_1\varphi^{-1}(b^*\circ\nabla_x)+\lambda_2\nabla_xy,\lambda_3\varphi(\nabla_xy)-
\lambda_4b^*\circ \nabla_x),\\
\hat{\nabla}_{(x,a^*)}N_{\lambda_1,\lambda_2,\lambda_3,\lambda_4}(y,b^*)&=&
\hat{\nabla}_{(x,a^*)}(\lambda_1\varphi^{-1}(b^*)+\lambda_2y,\lambda_3\varphi(y)+\lambda_4b^*)\\
&=&(\lambda_1\nabla_x\varphi^{-1}(b^*)+\lambda_2\nabla_xy,-\lambda_3\varphi(y)\circ\nabla_x-\lambda_4
b^*\circ\nabla_x).
\end{eqnarray*}
Therefore, using Eq.~(\ref{eq:omegatovar}) we show that
$$N_{\lambda_1,\lambda_2,\lambda_3,\lambda_4}(\hat{\nabla}_{(x,a^*)}(y,b^*))=
\hat{\nabla}_{(x,a^*)}N_{\lambda_1,\lambda_2,\lambda_3,\lambda_4}(y,b^*).$$
Since $\hat{\nabla}$ is torsion free,  the conclusion follows from
Lemma~\ref{le:useful}.
\end{proof}

Combining Proposition~\ref{pp:conijenva} and Lemma~\ref{le:nbli} we
have the following conclusion:

\begin{coro}
With the conditions and notations in Proposition~\ref{pp:conijenva},
there exist three families of left invariant complex product
structures on $T^*G$ which are given by {\rm (F1)}, {\rm (F2)} and
{\rm (F3)} defined in Lemma~\ref{le:nbli}, where
$V=\frak{g}\ltimes_{\rho_{\nabla}^*}\frak{g}^*,
V_1=\frak{g},V_2=\frak{g}^*,f=\varphi,
x\in\frak{g},a\in\frak{g}^*$.\label{co:compro2}
\end{coro}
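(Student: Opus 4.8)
The plan is to combine the integrability supplied by Proposition~\ref{pp:conijenva} with the purely linear-algebraic classification of Lemma~\ref{le:nbli}. Since $\omega$ is nondegenerate, the induced map $\varphi:\frak{g}\to\frak{g}^*$ of Eq.~$($\ref{eq:delimap}$)$ is a linear isomorphism, so Lemma~\ref{le:nbli} applies verbatim with $V=\frak{g}\ltimes_{\rho_{\nabla}^*}\frak{g}^*$, $V_1=\frak{g}$, $V_2=\frak{g}^*$ and $f=\varphi$. Part (1) of that lemma tells us that $J_{\lambda,\mu}$ satisfies $J_{\lambda,\mu}^2=-{\rm id}$; part (2) tells us that each of the operators $E_{\lambda,\mu}$, $\hat{E}_{\lambda,\mu}$ and $E^{\pm}_{k,\lambda,\mu}$ satisfies $E^2={\rm id}$ with $E\neq\pm{\rm id}$ and with equal-dimensional $\pm1$-eigenspaces; and part (3) records exactly the parameter constraints appearing in (F1), (F2), (F3) under which the corresponding $E$ anticommutes with $J_{\lambda,\mu}$.

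What Lemma~\ref{le:nbli} does not address is integrability, i.e. the vanishing of the Nijenhuis torsions $T(J)$ and $T(E)$; this is where the connection $\hat{\nabla}$ enters. The key observation is that every operator appearing in (F1)--(F3) is an instance of the family $N_{\lambda_1,\lambda_2,\lambda_3,\lambda_4}$ of Proposition~\ref{pp:conijenva} for a suitable choice of the four parameters (with $f=\varphi$). Therefore, by Proposition~\ref{pp:conijenva}, the Nijenhuis torsion of each of these operators vanishes. In particular $T(J_{\lambda,\mu})=0$, so $J_{\lambda,\mu}$ is a left invariant complex structure, and $T(E)=0$ for each of the three product candidates, so each is a left invariant product structure; combined with the equal-dimensional eigenspaces from part (2), each such $E$ is in fact a paracomplex structure.

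It then remains only to assemble the pairs. For each of the three families, $J_{\lambda,\mu}$ is a left invariant complex structure, the associated $E$ is a left invariant product structure, and $J_{\lambda,\mu}E=-EJ_{\lambda,\mu}$ holds under the parameter conditions recorded in part (3). By definition this means that $\{J_{\lambda,\mu},E\}$ is a left invariant complex product structure on $T^*G$, and letting the free parameters range produces the three asserted families. I expect no serious obstacle, as the entire argument is the cotangent-bundle analogue of Corollary~\ref{co:compro1}, with Proposition~\ref{pp:conijenva} playing the role of Proposition~\ref{pp:nijenva}. The one point that must not be overlooked is that integrability is genuinely extra information beyond Lemma~\ref{le:nbli}, so Proposition~\ref{pp:conijenva} must be invoked explicitly for each operator rather than reading the complex product structure off the linear algebra alone.
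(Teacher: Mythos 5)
Your proposal is correct and follows exactly the paper's own route: the paper derives Corollary~\ref{co:compro2} by simply combining Proposition~\ref{pp:conijenva} (which gives the vanishing of the Nijenhuis torsion for every operator $N_{\lambda_1,\lambda_2,\lambda_3,\lambda_4}$, hence integrability of each $J$ and $E$ in the families (F1)--(F3)) with the linear-algebraic classification in Lemma~\ref{le:nbli}. You have merely made explicit the assembly step that the paper leaves implicit, so there is nothing to correct.
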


 On the other hand, there exists a natural symmetric
and nondegenerate bilinear form
$g:(\frak{g}\ltimes_{\rho_{\nabla}^*}\frak{g}^*)\otimes
(\frak{g}\ltimes_{\rho_{\nabla}^*}\frak{g}^*)\rightarrow \RR$ on
$\frak{g}\ltimes_{\rho_{\nabla}^*}\frak{g}^*$ which induces a left
invariant (neutral) metric on $T^*G$:
\begin{equation}
g((x,a^*),(y,b^*))=\langle x,b^*\rangle+\langle a^*,y\rangle,\quad
\forall x,y\in\frak{g},a^*,b^*\in\frak{g}^*.\label{eq:metric2}
\end{equation}
Moreover, since $\omega$ is skew-symmetric, for any
$x,y\in\frak{g},a^*,b^*\in\frak{g}^*$, we have that
\begin{eqnarray*}
& &g(N_{\lambda_1,\lambda_2,\lambda_3,\lambda_4}(x,a^*),(y,b^*))+
g((x,a^*),N_{\lambda_1,\lambda_2,\lambda_3,\lambda_4}(y,b^*))\\
&=&g((\lambda_1\varphi^{-1}(a^*)+\lambda_2
x,\lambda_3\varphi(x)+\lambda_4a^*),(y,b^*))+g((x,a^*),(\lambda_1\varphi^{-1}(b^*)
+\lambda_2y,\lambda_3\varphi(y)+\lambda_4b^*))\\
&=&\lambda_1\langle\varphi^{-1}(a^*),b^*\rangle+\lambda_2\langle
x,b^*\rangle+\lambda_3\langle\varphi(x),y\rangle+\lambda_4\langle
a^*,y\rangle+\lambda_3\langle x,\varphi(y)\rangle+\lambda_4\langle
x,b^*\rangle+\\ & &\lambda_1\langle
a^*,\varphi^{-1}(b^*)\rangle+\lambda_2\langle a^*,y\rangle\\
&=&(\lambda_2+\lambda_4)\langle
x,b^*\rangle+(\lambda_2+\lambda_4)\langle
a^*,y\rangle=(\lambda_2+\lambda_4)g((x,a^*),(y,b^*)).
\end{eqnarray*}
It is easy to show that the complex product structures constructed
in Corollary~\ref{co:compro2} are all compatible with respect to $g$
defined by Eq.~(\ref{eq:metric2}) since in these cases
$\lambda_2+\lambda_4=0$. Thus, we define three left invariant
2-forms
$\omega_1,\omega_2,\omega_3:\bigwedge^2(\frak{g}\ltimes_{\rho_{\nabla}^*}\frak{g}^*)\to\mathbb{R}$
(on $T^*G$) through Eq.~(\ref{eq:3form}). In particular, for any
$x,y\in\frak{g},a^*,b^*\in\frak{g}^*$, we have that
\begin{eqnarray*}
\omega_1((x,a^*),(y,b^*))&=&g(J_{\lambda,\mu}(x,a^*),(y,b^*))=g((\lambda\varphi^{-1}(a^*)+\mu
x,\frac{-1-\mu^2}{\lambda}\varphi(x)-\mu a^*),(y,b^*))\\
&=&\lambda\langle\varphi^{-1}(a^*),b^*\rangle+\mu\langle
x,b^*\rangle-\frac{1+\mu^2}{\lambda}\langle\varphi(x),y\rangle-\mu\langle
a^*,y\rangle.
\end{eqnarray*}
Furthermore, for any $x,y,z\in\frak{g},a^*,b^*,c^*\in\frak{g}^*$, we
have \allowdisplaybreaks{\begin{eqnarray*} &
&\omega_1((x,a^*),[(y,b^*),(z,c^*)])+\omega_1((y,b^*),[(z,c^*),(x,a^*)])+\omega_1((z,c^*),[(x,a^*),(y,b^*)])\\
&=&\omega_1((x,a^*),([y,z],-c^*\circ\nabla_y+b^*\circ\nabla_z))+
\omega_1((y,b^*),([z,x],-a^*\circ\nabla_z+c^*\circ\nabla_x))\\
& &+\omega_1((z,c^*),
([x,y],-b^*\circ\nabla_x+a^*\circ\nabla_y))\\
&=&-\lambda\langle\nabla_y\varphi^{-1}(a^*),c^*\rangle+\lambda\langle\nabla_z\varphi^{-1}(a^*),b^*\rangle-
\mu\langle\nabla_yx,c^*\rangle+\mu\langle\nabla_zx,b^*\rangle-
\frac{1+\mu^2}{\lambda}\langle\varphi(x),[y,z]\rangle\\
& &-\mu\langle
a^*,[y,z]\rangle-\lambda\langle\nabla_z\varphi^{-1}(b^*),a^*\rangle+\lambda\langle\nabla_x\varphi^{-1}(b^*),c^*\rangle-
\mu\langle\nabla_zy,a^*\rangle+\mu\langle\nabla_xy,c^*\rangle-\\
& &\frac{1+\mu^2}{\lambda}\langle\varphi(y),[z,x]\rangle-\mu\langle
b^*,[z,x]\rangle-\lambda\langle\nabla_x\varphi^{-1}(c^*),b^*\rangle+\lambda\langle\nabla_y\varphi^{-1}(c^*),a^*\rangle-
\mu\langle\nabla_xz,b^*\rangle\\ & &+\mu\langle\nabla_yz,a^*\rangle-
\frac{1+\mu^2}{\lambda}\langle\varphi(z),[x,y]\rangle-\mu\langle
c^*,[x,y]\rangle\\
&=&\lambda\omega(\nabla_y\varphi^{-1}(a^*),\varphi^{-1}(c^*))-
\lambda\omega(\nabla_y\varphi^{-1}(c^*),\varphi^{-1}(a^*))-\lambda\omega(\nabla_z\varphi^{-1}(a^*),\varphi^{-1}(b^*))
+\\
 & &\lambda\omega(\nabla_z\varphi^{-1}(b^*),\varphi^{-1}(a^*))-\mu\langle\nabla_yx,c^*\rangle+
\mu\langle\nabla_xy,c^*\rangle-\mu\langle
c^*,[x,y]\rangle+\mu\langle\nabla_zx,b^*\rangle-\\ & &\mu\langle
b^*,[z,x]\rangle-\mu\langle\nabla_xz,b^*\rangle-\frac{1+\mu^2}{\lambda}\omega(x,[y,z])-
\frac{1+\mu^2}{\lambda}\omega(y,[z,x])-\frac{1+\mu^2}{\lambda}\omega(z,[x,y])\\
& &-\mu\langle a^*,[y,z]\rangle-
\mu\langle\nabla_zy,a^*\rangle+\mu\langle
\nabla_yz,a^*\rangle-\lambda\omega(\nabla_x\varphi^{-1}(b^*),\varphi^{-1}(c^*))\\
&&+ \lambda\omega(\nabla_x\varphi^{-1}(c^*),\varphi^{-1}(b^*))=0.
\end{eqnarray*}}
Therefore, $\omega_1$ is closed. So by Proposition~\ref{pp:andrada},
 we obtain the main result of this
subsection:

\begin{theorem}
Let $(G,\nabla,\omega)$ be a special symplectic Lie group. Endow the
cotangent bundle $T^*G$ of $G$ with the Lie group structure defined
by Eq.~$($\ref{eq:coliestru}$)$. Then there exist three families of
left invariant hypersymplectic structures $\{{\rm (F1)},g\}$,
$\{{\rm (F2)},g\}$ and  $\{{\rm (F3)},g\}$ on $TG$, where $g$ is
defined by Eq.~$($\ref{eq:metric2}$)$ and the complex product
structures are defined in Lemma~\ref{le:nbli}, where
$V=\frak{g}\ltimes_{\rho_{\nabla}^*}\frak{g}^*,
V_1=\frak{g},V_2=\frak{g}^*,f=\varphi, x\in\frak{g},a\in\frak{g}^*$.
In these cases $T^*G$ becomes a hypersymplectic Lie group.
\end{theorem}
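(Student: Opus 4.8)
The plan is to verify the three hypotheses of Proposition~\ref{pp:andrada} for each of the three families and then apply it verbatim. The ingredients needed are a left invariant complex product structure $\{J,E\}$, a compatible left invariant metric $g$, and the closedness of the associated $2$-form $\omega_1$; once these are in place the conclusion is immediate.

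First I would note that the complex product structures are already in hand: Corollary~\ref{co:compro2} produces the three families (F1), (F2), (F3) on $T^*G$ by specializing Lemma~\ref{le:nbli} to $V=\frak{g}\ltimes_{\rho_{\nabla}^*}\frak{g}^*$, $V_1=\frak{g}$, $V_2=\frak{g}^*$, $f=\varphi$, the construction resting on the vanishing of the Nijenhuis torsion established in Proposition~\ref{pp:conijenva}. Next I would check that $g$ of Eq.~(\ref{eq:metric2}) is a left invariant neutral metric (symmetry is clear, and nondegeneracy follows from that of $\omega$) and that it is compatible with $\{J,E\}$. For the latter, the computation preceding the theorem shows $g(N_{\lambda_1,\lambda_2,\lambda_3,\lambda_4}(x,a^*),(y,b^*))+g((x,a^*),N_{\lambda_1,\lambda_2,\lambda_3,\lambda_4}(y,b^*))=(\lambda_2+\lambda_4)\,g((x,a^*),(y,b^*))$; since every operator in (F1), (F2), (F3) satisfies $\lambda_2+\lambda_4=0$, both $J$ and $E$ are skew-adjoint with respect to $g$. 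Combining skew-adjointness with $J^2=-{\rm id}$ and $E^2={\rm id}$ then yields $g(J(x),J(y))=g(x,y)$ and $g(E(x),E(y))=-g(x,y)$, which is exactly the compatibility condition Eq.~(\ref{eq:compatibleg}).

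The heart of the argument, and the step I expect to be the main obstacle, is establishing that the $2$-form $\omega_1$ defined through Eq.~(\ref{eq:3form}) is closed. I would evaluate $d\omega_1$ on three arbitrary elements of $\frak{g}\ltimes_{\rho_{\nabla}^*}\frak{g}^*$, expand using the explicit Lie bracket of the semidirect sum and the isomorphism $\varphi$, and then reduce everything with the parallel condition rewritten as Eq.~(\ref{eq:omegatovar}), namely $\varphi(\nabla_xy)=\varphi(y)\circ\nabla_x$, together with the closedness of $\omega$ on $\frak{g}$ itself. This is the long but elementary cancellation carried out above; the bookkeeping is delicate because the terms lying in $\frak{g}$ and those lying in $\frak{g}^*$ must be matched through $\varphi$, but no idea beyond the parallelism of $\omega$ and the Jacobi/closedness of $\omega$ on $\frak{g}$ is required.

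Finally, with $\{J,E\}$ a left invariant complex product structure, $g$ a compatible left invariant metric, and $\omega_1$ closed, Proposition~\ref{pp:andrada} immediately gives that $\omega_2$ and $\omega_3$ are also closed. Hence each of $\{{\rm (F1)},g\}$, $\{{\rm (F2)},g\}$, $\{{\rm (F3)},g\}$ is a left invariant hypersymplectic structure on $T^*G$, and therefore $T^*G$, endowed with the Lie group structure of Eq.~(\ref{eq:coliestru}), is a hypersymplectic Lie group.
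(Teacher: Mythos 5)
Your proposal follows exactly the paper's argument: the complex product structures come from Corollary~\ref{co:compro2}, compatibility of $g$ is read off from the identity $g(N(x,a^*),(y,b^*))+g((x,a^*),N(y,b^*))=(\lambda_2+\lambda_4)g((x,a^*),(y,b^*))$ with $\lambda_2+\lambda_4=0$, the closedness of $\omega_1$ is verified by the direct expansion using the semidirect-sum bracket, Eq.~(\ref{eq:omegatovar}) and $d\omega=0$ on $\frak{g}$, and Proposition~\ref{pp:andrada} finishes the job. The only difference is that the paper writes out the cancellation for $d\omega_1$ in full where you describe it; the route and the identities invoked are the same.
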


Note that these hypersymplectic structures constructed in
Section~\ref{se:tan} and Section~\ref{se:cotan} may be isomorphic in
the sense of~\cite{An}.

\section{Constructions of special symplectic Lie groups}\label{se:conspe}
From the study in the previous section, we show that, in order to
get some interesting hypersymplectic Lie groups along our approach,
the first step might be to construct some examples of special
symplectic Lie groups, which is the content of this section.

\smallskip

\noindent {\bf Conventions: }   Let $V$ be a vector space and
$\diamond:V\otimes V\to V$ be a bilinear operation. We use
$L_{\diamond},R_{\diamond}, L_{\diamond}^*$ and $R_{\diamond}^*$ to
denote the following operations:
\begin{equation}
L_{\diamond}(x)y=x\diamond y,\quad R_{\diamond}(x)y=y\diamond
x,\quad \forall x,y\in V.
\end{equation}
\begin{equation}
\langle L_{\diamond}^*(x)a^*,y\rangle=-\langle a^*,x\diamond
y\rangle,\quad \langle R_{\diamond}^*(x)a^*,y\rangle=-\langle
a^*,y\diamond x\rangle,\quad \forall x,y\in V,a^*\in
V^*.\label{eq:starlrno1}
\end{equation}
If $\frak{g}$ is a Lie algebra, then we use ${\rm ad}$ and ${\rm
ad}^*$ to denote the {\it adjoint action} and {\it coadjoint action
respectively}, that is,
\begin{equation}
{\rm ad}(x)y=[x,y],\quad \langle{\rm ad}^*(x)a^*,y\rangle=-\langle
a^*,[x,y]\rangle,\quad \forall x,y\in \frak{g},a^*\in\frak{g}^*.
\end{equation}

We first recall the notion of a left-symmetric algebra.

\begin{defn} {\rm A {\it left-symmetric algebra $($LSA$)$} (or a {\it
pre-Lie algebra}) is a vector space $A$ equipped with a bilinear
operation $(x,y)\to x\cdot y$ satisfying
\begin{equation} (x\cdot y)\cdot z-x\cdot(y\cdot z)=(y\cdot
x)\cdot z-y\cdot(x\cdot z),\;\;\forall x,y,z\in
A.\label{eq:axiomlsa}
\end{equation}
}
\end{defn}
\begin{prop-def} {\rm (\cite{Bu})}\quad Suppose that $(A, \cdot)$ is an LSA.
\begin{enumerate}
\item The commutator
\begin{equation} [x,y] =
x\cdot y - y\cdot x,\;\;\forall x,y\in A,
\end{equation} defines a Lie algebra $\frak g(A)$, which is called the sub-adjacent
 Lie algebra of $A$ and $A$ is called a compatible LSA
 structure on the Lie algebra $\frak g (A)$.
\item
$L_{\cdot}:A\to\frak{gl}(A)$ gives a representation of the Lie
algebra $\frak g(A)$, that is,
\begin{equation}
L_\cdot({[x, y]}) = L_\cdot(x)L_\cdot(y)- L_\cdot(y)L_\cdot(x),\
\forall x, y\in A.
\end{equation}
\end{enumerate}
\label{pp:burde}
\end{prop-def}
Now let $G$ be a connected and simply connected Lie group whose Lie
algebra is $\frak{g}$. Suppose that there exists a compatible LSA
structure on $\frak{g}$. We can define a connection on $\frak{g}$ by
\begin{equation}
\nabla_xy=x\cdot y,\quad \forall x,y\in\frak{g}.\label{eq:delsa}
\end{equation}
Then Eq.~(\ref{eq:axiomlsa}) translates into the flatness of
$\nabla$. So after left translating, it induces a left invariant
affine structure on $G$. Conversely, if $\nabla$ is a left invariant
affine structure, then it is easy to show that Eq.~(\ref{eq:delsa})
defines a compatible LSA structure on $\frak{g}$. Therefore we have
the following one-to-one correspondence~(\cite{Bu}):
\begin{equation}
\xymatrix{ \text{\{left invariant affine structures on $G$\}}
\ar@{<->}[r]& \text{\{compatible LSA structures on $\frak{g}$\}}
}.\label{eq:onetoone}
\end{equation}

\subsection{The motivation}
Our main motivation to construct some examples of special symplectic
Lie groups comes from the following construction of symplectic Lie
groups due to Boyom~(\cite{Bo1, Bo2}): in fact, Boyom observed that
if $G$ is a Lie group whose Lie algebra is $\frak{g}$ and $\nabla$
is a left invariant flat and torsion free connection on $G$, then
the following bilinear form on $\frak{g}\oplus\frak{g}^*$ induces a
left invariant symplectic form on $T^*G$ which is equipped with the
``deformed" Lie group structure defined by Eq.~(\ref{eq:coliestru}):
\begin{equation}
\omega_p((x,a^*),(y,b^*))=-\langle x,b^*\rangle+\langle
a^*,y\rangle,\;\forall
x,y\in\frak{g},a^*,b^*\in\frak{g}^*.\label{eq:syform}
\end{equation}

\begin{prop}
Let $G$ be a  Lie group whose Lie algebra is $\frak{g}$. Suppose
that there exists a left invariant flat and torsion free connection
$\nabla$ on $G$. Then $(T^*G,\hat{\nabla},\omega_p)$ is a special
symplectic Lie group, where $T^*G$ is endowed with the Lie group
structure defined by Eq.~$($\ref{eq:coliestru}$)$, $\hat{\nabla}$ is
defined by Eq.~$($\ref{eq:speconn}$)$ and $\omega_p$ is defined by
Eq.~$($\ref{eq:syform}$)$.\label{pp:motivation}
\end{prop}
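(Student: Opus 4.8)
The plan is to read the claim through Proposition~\ref{pp:desplie}, which reduces the assertion to three verifications: that $T^*G$ with the multiplication of Eq.~$($\ref{eq:coliestru}$)$ is a Lie group, that $\hat\nabla$ is a left invariant affine structure, and that $\omega_p$ is a left invariant \emph{nondegenerate} $2$-form which is \emph{parallel} with respect to $\hat\nabla$. Crucially, in this reformulation one does not have to check closedness of $\omega_p$ separately: since $\hat\nabla$ is torsion free, parallelism of $\omega_p$ already forces $d\omega_p=0$ (this is exactly the observation recorded just before Proposition~\ref{pp:desplie}). The first two points are in fact already established in the discussion preceding the statement: the bracket computed there identifies the Lie algebra of $T^*G$ with $\frak{g}\ltimes_{\rho_{\nabla}^*}\frak{g}^*$, and the connection $\hat\nabla$ of Eq.~$($\ref{eq:speconn}$)$ is flat and torsion free by~\cite{Bai1, Bai2}. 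So the genuine content concerns only $\omega_p$.

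For $\omega_p$ itself, left invariance holds by construction, since it is the left translate of the fixed bilinear form on $\frak{g}\oplus\frak{g}^*$ given by Eq.~$($\ref{eq:syform}$)$. Skew-symmetry is immediate: swapping the two arguments in $\omega_p((x,a^*),(y,b^*))=-\langle x,b^*\rangle+\langle a^*,y\rangle$ reverses the overall sign. Nondegeneracy follows from nondegeneracy of the natural pairing between $\frak{g}$ and $\frak{g}^*$: pairing $(x,a^*)$ against vectors of the form $(y,0)$ recovers $a^*$, and against $(0,b^*)$ recovers $x$, so $(x,a^*)$ in the kernel forces $x=0$ and $a^*=0$.

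The crux is therefore the parallel identity $\omega_p(\hat\nabla_X Y,Z)=\omega_p(\hat\nabla_X Z,Y)$ in the sense of Eq.~$($\ref{eq:paralsy}$)$. Writing $X=(x,a^*)$, $Y=(y,b^*)$, $Z=(z,c^*)$ and substituting $\hat\nabla_{(x,a^*)}(y,b^*)=(\nabla_x y,\,-b^*\circ\nabla_x)$, I would expand both sides using $\langle b^*\circ\nabla_x,z\rangle=\langle b^*,\nabla_x z\rangle$; both then collapse to $-\langle\nabla_x y,c^*\rangle-\langle b^*,\nabla_x z\rangle$, so the identity holds. I expect the only real obstacle to be bookkeeping: keeping straight the composition $b^*\circ\nabla_x$ and the two orders of the dual pairing. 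It is worth noting that the covector slots $a^*$ never enter, because $\hat\nabla$ only sees the $\frak{g}$-component of its subscript; and, unlike the tangent-bundle construction, no symplectic form on $G$ is needed here, so $\omega_p$ is honestly the canonical symplectic form of the deformed $T^*G$. Feeding these three verifications into Proposition~\ref{pp:desplie} yields that $(T^*G,\hat\nabla,\omega_p)$ is a special symplectic Lie group.
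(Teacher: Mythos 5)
Your proposal is correct and follows essentially the same route as the paper: the paper likewise delegates the Lie group structure and the flatness/torsion-freeness of $\hat{\nabla}$ to the preceding discussion and to \cite{Bai1, Bai2}, and then verifies only the parallelism identity $\omega_p(\hat{\nabla}_{(x,a^*)}(y,b^*),(z,c^*))=\omega_p(\hat{\nabla}_{(x,a^*)}(z,c^*),(y,b^*))$ by the same two-line expansion collapsing to $-\langle\nabla_xy,c^*\rangle-\langle b^*,\nabla_xz\rangle$. Your additional explicit checks of skew-symmetry, nondegeneracy and left invariance of $\omega_p$, and the appeal to Proposition~\ref{pp:desplie} to dispense with closedness, are exactly the steps the paper leaves implicit.
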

\begin{proof}
As pointed out in Section~\ref{se:cotan}, by~\cite{Bai1, Bai2},
$\hat{\nabla}$ is a left invariant flat and torsion free connection
on $T^*G$ which is endowed with the Lie group structure defined by
Eq.~(\ref{eq:coliestru}). So we only need to prove that $\omega_p$
is parallel with respect to $\hat{\nabla}$. In fact, for any
$x,y,z\in\frak{g},a^*,b^*,c^*\in\frak{g}^*$, we have that
\begin{eqnarray*}
\omega_p(\hat{\nabla}_{(x,a^*)}(y,b^*),(z,c^*))&=&\omega_p((\nabla_xy,
-b^*\circ\nabla_x),(z,c^*))=-\langle\nabla_xy,c^*\rangle-\langle
b^*,\nabla_xz\rangle,\\
\omega_p(\hat{\nabla}_{(x,a^*)}(z,c^*),(y,b^*))&=&\omega_p((\nabla_xz,-c^*\circ\nabla_x),
(y,b^*))=-\langle\nabla_xz,b^*\rangle-\langle c^*,\nabla_xy\rangle.
\end{eqnarray*}
So $\omega_p(\hat{\nabla}_{(x,a^*)}(y,b^*),(z,c^*))=
\omega_p(\hat{\nabla}_{(x,a^*)}(z,c^*),(y,b^*))$.
\end{proof}

\subsection{Post-left-symmetric algebras and the affine cotangent extension problem}
Proposition~\ref{pp:motivation} motivates us to consider the
following {\it affine cotangent extension problem} (cf.~\cite{Bo2}):

{\it Let $G$ be a Lie group whose Lie algebra is $\frak{g}$. Suppose
that $\nabla$ is a left invariant affine structure on $G$. Then
$\frak{g}$ is equipped with a compatible LSA structure defined by
Eq.~$($\ref{eq:delsa}$)$. Our aim is to find all the LSA structures
on $\frak{g}\times\frak{g}^*$  satisfying the following conditions:
\begin{enumerate}
\item
\begin{equation*}
\xymatrix{ 0\ar@{->}[r]&
\frak{g}^*\ar@{->}^{i}[r]&\frak{g}\times\frak{g}^*\ar@{->}^{p}[r]&\frak{g}\ar@{->}[r]&0
}
\end{equation*}
 is an exact sequence of LSAs, where
$\frak{g}^*$ is equipped with the trivial LSA structure. Here $i$
and $p$ are the canonical inclusion and projection respectively,
that is, $i(a^*)=(0,a^*)\in\frak{g}\times\frak{g}^*$ and
$p((x,a^*))=x\in\frak{g}$ for any $x\in\frak{g},a^*\in\frak{g}^*$.
\item
The bilinear form $\omega_p$ defined by Eq.~$($\ref{eq:syform}$)$
induces a left invariant symplectic form $($on the corresponding Lie
group whose Lie algebra is the sub-adjacent Lie algebra of the LSA
structure on $\frak{g}\times\frak{g}^*$$)$ which is parallel with
respect to the flat and torsion free connection induced by the LSA
structure on $\frak{g}\times\frak{g}^*$.
\end{enumerate}
}

We will show that the solution of the affine cotangent extension
problem is related to the following new algebraic structure.

\begin{defn}
{\rm A {\it post-left-symmetric algebra $($PLSA$)$}
$(A,\prec,\succ)$ is a vector space $A$ equipped with two bilinear
operations $\prec,\succ:A\otimes A\rightarrow A $ such that
$(A,\succ)$ is an LSA, $\prec$ is {\it commutative}, that is,
$x\prec y=y\prec x$, for any $x,y\in A$, and the following
compatibility condition holds:
\begin{equation}
x\succ(y\prec z)=(x\cdot y)\prec z+y\prec(x\cdot z),\quad \forall
x,y,z\in A,\label{eq:compatible}
\end{equation}
where $x\cdot y=x\prec y+x\succ y$. }
\end{defn}

\begin{prop}
Let $A$ be a vector space equipped with two bilinear operations
$\prec,\succ:A\otimes A\to A$. Suppose that $\prec$ is commutative
and $\prec$ and $\succ$ satisfy Eq.~$($\ref{eq:compatible}$)$, where
$x\cdot y=x\prec y+x\succ y$ for any $x,y\in A$. Then $(A,\succ)$ is
an LSA if and only if $(A,\cdot)$ is an LSA. Therefore in this case,
 $(A,\prec,\succ)$ is a PLSA if and only if
$(A,\cdot)$ is an LSA. Furthermore, if $(A,\prec,\succ)$ is a PLSA,
then the sub-adjacent
 Lie algebras of $(A,\cdot)$ and $(A,\succ)$
 coincide.
 \label{pp:pplsa}
\end{prop}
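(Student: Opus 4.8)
The plan is to reduce the first assertion to an identity between the two associators. For a bilinear product $\ast$ on $A$ write $\mathrm{as}_\ast(x,y,z)=(x\ast y)\ast z-x\ast(y\ast z)$; by Eq.~(\ref{eq:axiomlsa}) the algebra $(A,\ast)$ is an LSA precisely when $\mathrm{as}_\ast(x,y,z)=\mathrm{as}_\ast(y,x,z)$ for all $x,y,z$. The whole equivalence hinges on showing
\[
\mathrm{as}_\cdot(x,y,z)-\mathrm{as}_\cdot(y,x,z)=\mathrm{as}_\succ(x,y,z)-\mathrm{as}_\succ(y,x,z),
\]
from which the statement ``$(A,\succ)$ is an LSA $\iff$ $(A,\cdot)$ is an LSA'' follows at once.

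First I would expand $\mathrm{as}_\cdot$ by substituting $x\cdot y=x\prec y+x\succ y$ into every product. This splits $\mathrm{as}_\cdot(x,y,z)$ into eight terms, which I group as $\mathrm{as}_\succ(x,y,z)$ together with three mixed differences
\[
\begin{aligned}
T_1 &= (x\prec y)\prec z - x\prec(y\prec z),\\
T_2 &= (x\prec y)\succ z - x\succ(y\prec z),\\
T_3 &= (x\succ y)\prec z - x\prec(y\succ z).
\end{aligned}
\]
Thus $\mathrm{as}_\cdot(x,y,z)=T_1+T_2+T_3+\mathrm{as}_\succ(x,y,z)$, so the displayed identity reduces to proving $\sum_i(T_i-T_i')=0$, where $T_i'$ denotes $T_i$ with $x$ and $y$ interchanged.

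The heart of the computation is $T_2-T_2'$. Because $\prec$ is commutative, the leading terms $(x\prec y)\succ z$ and $(y\prec x)\succ z$ cancel, leaving $y\succ(x\prec z)-x\succ(y\prec z)$; here I would invoke the compatibility condition~(\ref{eq:compatible}) to rewrite each $\succ$-of-$\prec$ expression as a sum of $\prec$-terms, and then re-expand $x\cdot y$, $y\cdot x$, $x\cdot z$, $y\cdot z$ into their $\prec$ and $\succ$ components. The main obstacle is purely bookkeeping: one must verify that every monomial produced in $T_2-T_2'$ is matched and cancelled by a monomial from $T_1-T_1'$ or $T_3-T_3'$, repeatedly using commutativity of $\prec$ to align terms such as $(x\prec y)\prec z=(y\prec x)\prec z$. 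A careful term-by-term accounting shows all contributions cancel, giving $\sum_i(T_i-T_i')=0$ and hence the associator identity, which proves the first two claims; the second also uses that, under the standing hypotheses, the PLSA axioms amount exactly to ``$\prec$ commutative, Eq.~(\ref{eq:compatible}) holds, and $(A,\succ)$ is an LSA''.

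Finally, for the coincidence of the sub-adjacent Lie algebras I would simply compute $x\cdot y-y\cdot x=(x\prec y-y\prec x)+(x\succ y-y\succ x)$ and note that the first parenthesis vanishes by commutativity of $\prec$, so the commutator bracket of $(A,\cdot)$ equals that of $(A,\succ)$ verbatim.
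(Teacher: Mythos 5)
Your proof is correct and is essentially the computation the paper performs: the identity $\mathrm{as}_\cdot(x,y,z)-\mathrm{as}_\cdot(y,x,z)=\mathrm{as}_\succ(x,y,z)-\mathrm{as}_\succ(y,x,z)$ you establish is exactly what the paper obtains by deriving $[x,y]\prec z=x\succ(y\prec z)-y\succ(x\prec z)-y\prec(x\cdot z)+x\prec(y\cdot z)$ from Eq.~(\ref{eq:compatible}) and adding it to the LSA condition for $\succ$, with commutativity of $\prec$ identifying the two commutators. The remaining claims (the PLSA reformulation and the coincidence of the sub-adjacent Lie algebras) are handled exactly as in the paper.
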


\begin{proof}
By Eq.~(\ref{eq:compatible}), for any $x,y,z\in A$, we have
\begin{equation}
[x,y]\prec z=x\succ(y\prec z)-y\succ(x\prec z)-y\prec(x\cdot
z)+x\prec(y\cdot z),\label{eq:compa}
\end{equation}
where $[x,y]=x\cdot y-y\cdot x$. Since the operation $\prec$ is
commutative, $(A,\succ)$ is an LSA if and only if the following
equation holds:
\begin{equation}
[x,y]\succ z=x\succ (y\succ z)-y\succ(x\succ z),\label{eq:compa1}
\end{equation}
where $[x,y]=x\cdot y-y\cdot x$. By Eq.~(\ref{eq:compa}), we show
that Eq.~(\ref{eq:compa1}) holds if and only if the following
equation holds:
$$[x,y]\cdot z=x\succ(y\cdot z)-y\succ(x\cdot z)-y\prec(x\cdot
z)+x\prec(y\cdot z)=x\cdot(y\cdot z)-y\cdot(x\cdot z).$$ So
$(A,\succ)$ is an LSA if and only if $(A,\cdot)$ is an LSA. Hence in
this case, $(A,\prec,\succ)$ is a PLSA if and only if $(A,\cdot)$ is
an LSA. The last conclusion follows from the fact that the operation
$\prec$ is commutative.
\end{proof}

The last conclusion of Proposition~\ref{pp:pplsa} motivates us to
give the following definitions.

\begin{defn}
{\rm Let $(A,\prec,\succ)$ be a PLSA. We denote the LSA structure
$(A,\cdot)$ by $l(A)$ which is called the {\it associated LSA} of
$(A,\prec,\succ)$. $(A,\prec,\succ)$ is called a {\it compatible
PLSA} on $l(A)$. Moreover, the sub-adjacent Lie algebra
$\frak{g}(A)$ of the two LSAs $(A,\cdot)$ and $(A,\succ)$ is called
{\it the sub-adjacent Lie algebra} of $(A,\prec,\succ)$ and
$(A,\prec,\succ)$ is called a {\it compatible PLSA} on
$\frak{g}(A)$. On the other hand, $(A,\prec,\succ)$ is said to be a
{\it PLSA on the LSA $(A,\succ)$}. }
\end{defn}

At the Lie group level we have

\begin{defn}
{\rm Let $G$ be a Lie group with Lie algebra $\frak{g}$. A {\it left
invariant post-affine structure} on $G$ is given by a pair
$\{\nabla,\tilde{\nabla}\}$ of left invariant flat and torsion free
connections which are compatible in the sense that
\begin{equation}
\nabla_x(\tilde{\nabla}_yz-\nabla_yz)=(\tilde{\nabla}_z-\nabla_z)\tilde{\nabla}_xy+
(\tilde{\nabla}_y-\nabla_y)\tilde{\nabla}_xz,\quad \forall
x,y,z\in\frak{g}.\label{eq:poaffco}
\end{equation}
}
\end{defn}

Let $G$ be a connected and simply connected Lie group whose Lie
algebra is $\frak{g}$. Suppose that $\{\nabla,\tilde{\nabla}\}$ is a
left invariant post-affine structure on $G$. Then it is easy to
check that the following bilinear operations define a compatible
PLSA structure on $\frak{g}$:
\begin{equation}
x\succ y=\nabla_xy,\;\; x\prec y=\tilde{\nabla}_xy-\nabla_xy,\;\;
\forall x,y\in\frak{g}.
\end{equation}
Note that since both $\nabla$ and $\tilde{\nabla}$ are torsion free,
the operation ``$\prec$" defined as above is automatically
commutative. Conversely, if $(\frak{g},\prec,\succ)$ is a compatible
PLSA structure on $\frak{g}$, then, after left translating, the
following connections (on $\frak{g}$) induce a left invariant
post-affine structure on $G$:
\begin{equation}
\nabla_xy=x\succ y,\quad \tilde{\nabla}_xy=x\cdot y=x\prec y+x\succ
y,\quad \forall x,y\in\frak{g}.
\end{equation}
So we have the following one-to-one correspondence:
\begin{equation*}
\xymatrix{ \text{\{left invariant post-affine structures on $G$\}}
\ar@{<->}[r]& \text{\{compatible PLSA structures on $\frak{g}$\}} }.
\end{equation*}

One can compare it with the correspondence formulated by
Eq.~(\ref{eq:onetoone}). In general, we can define a {\it
post-affine structure on an $n$-dimensional smooth manifold $M$} to
be a pair of flat and torsion free connections
$\{\nabla,\tilde{\nabla}\}$ such that Eq.~(\ref{eq:poaffco}) holds,
where $x,y,z$ are vector fields on $M$.

Returning to the affine cotangent extension problem,  it is obvious
that the LSA structure on $\frak{g}\times\frak{g}^*$ can be written
as follows:
\begin{equation}
(x,a^*)\circ(y,b^*)=(x\cdot y,l(x)b^*+r(y)a^*+\varphi(x,y)),\quad
\forall x,y\in\frak{g},a^*,b^*\in\frak{g}^*,\label{eq:affinepro}
\end{equation}
where $l,r:\frak{g}\to\frak{gl}(\frak{g}^*)$ and
$\varphi:\frak{g}\otimes\frak{g}\to\frak{g}^*$ are linear maps.

\begin{lemma}
Let $\frak{g}$ be a Lie algebra with a compatible LSA structure. Let
$l,r:\frak{g}\to\frak{gl}(\frak{g}^*)$ and
$\varphi:\frak{g}\otimes\frak{g}\to\frak{g}^*$ be three linear maps.
Then Eq.~$($\ref{eq:affinepro}$)$ defines an LSA structure on
$\frak{g}\times\frak{g}^*$ if and only if (for any
$x,y,z\in\frak{g}$)
\begin{equation} l(x)l(y)-l(x\cdot y)=l(y)l(x)-l(y\cdot
x),\label{eq:lsabimod1}
\end{equation}
\begin{equation}
l(x)r(y)-r(y)l(x)=r(x\cdot y)-r(y)r(x),\label{eq:lsabimod2}
\end{equation}
\begin{equation}
r(z)\varphi(x,y)+\varphi(x\cdot
y,z)-l(x)\varphi(y,z)-\varphi(x,y\cdot
z)=r(z)\varphi(y,x)+\varphi(y\cdot
x,z)-l(y)\varphi(x,z)-\varphi(y,x\cdot z).\label{eq:vaicohomo}
\end{equation}
\label{le:circlsa}
\end{lemma}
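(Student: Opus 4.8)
The plan is to expand the single defining axiom of a left-symmetric algebra and read off the three stated identities by separating variables. Set $X=(x,a^*)$, $Y=(y,b^*)$, $Z=(z,c^*)$ with $x,y,z\in\frak{g}$ and $a^*,b^*,c^*\in\frak{g}^*$. By Eq.~(\ref{eq:axiomlsa}), the algebra $(\frak{g}\times\frak{g}^*,\circ)$ is an LSA if and only if its associator is symmetric in the first two arguments, that is,
\[
(X\circ Y)\circ Z-X\circ(Y\circ Z)=(Y\circ X)\circ Z-Y\circ(X\circ Z).
\]
First I would substitute the product Eq.~(\ref{eq:affinepro}) and expand both sides componentwise, using that the second component of $X\circ Y$ is $l(x)b^*+r(y)a^*+\varphi(x,y)$.

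The first ($\frak{g}$-valued) component of the associator is $(x\cdot y)\cdot z-x\cdot(y\cdot z)$, which involves no covectors; its symmetry in $x$ and $y$ is exactly the LSA axiom for $(\frak{g},\cdot)$ and so holds by hypothesis, contributing no new condition. All the content therefore sits in the second ($\frak{g}^*$-valued) component. The key observation is that this component is a sum of four pieces: one linear in $c^*$, one linear in $a^*$, one linear in $b^*$, and one independent of the covectors (the $\varphi$-terms). Since $a^*,b^*,c^*$ range over all of $\frak{g}^*$ independently, the symmetry identity holds if and only if it holds separately on each of these four pieces.

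Matching coefficients then gives the result. The coefficient of $c^*$ produces $l(x\cdot y)-l(x)l(y)=l(y\cdot x)-l(y)l(x)$, which is Eq.~(\ref{eq:lsabimod1}); the coefficients of $a^*$ and of $b^*$ each collapse, after the relabeling induced by the swap $x\leftrightarrow y$, $a^*\leftrightarrow b^*$, to the single identity Eq.~(\ref{eq:lsabimod2}); and the covector-free part gives precisely the cocycle-type condition Eq.~(\ref{eq:vaicohomo}). Reading the computation in reverse, these three identities together with the already-available LSA axiom for $\frak{g}$ reassemble into the full associator symmetry, which yields the converse direction at once.

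The computation is entirely routine, so the only real obstacle is bookkeeping. I would be careful that the endomorphisms $l(x),r(y)\in\frak{gl}(\frak{g}^*)$ need not commute, so every composition such as $l(x)r(z)$ or $r(z)l(x)$ must be kept in the correct order; and I would verify the mild consistency point that the $a^*$- and $b^*$-coefficients really do reduce to one and the same equation rather than to two independent ones. Once the operator orderings are tracked correctly, the separation of variables makes the three equations fall out mechanically.
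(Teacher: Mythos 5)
Your proposal is correct and follows essentially the same route as the paper: expand both sides of the associator-symmetry condition for the product of Eq.~(\ref{eq:affinepro}), note that the $\frak{g}$-component reduces to the LSA axiom for $(\frak{g},\cdot)$, and separate the $\frak{g}^*$-component by the independent covectors $a^*,b^*,c^*$ to read off Eqs.~(\ref{eq:lsabimod1})--(\ref{eq:vaicohomo}). Your consistency check that the $a^*$- and $b^*$-coefficients both reduce to Eq.~(\ref{eq:lsabimod2}) (after relabeling) is exactly what the paper's explicit expansion confirms.
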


\begin{proof}
In fact, for any $x,y,z\in\frak{g},a^*,b^*,c^*\in\frak{g}^*$, we
have
\begin{eqnarray*}
& &((x,a^*)\circ(y,b^*))\circ(z,c^*)=(x\cdot
y,l(x)b^*+r(y)a^*+\varphi(x,y))\circ(z,c^*)\\ &=&((x\cdot y)\cdot
z,l(x\cdot
y)c^*+r(z)l(x)b^*+r(z)r(y)a^*+r(z)\varphi(x,y)+\varphi(x\cdot y,z)).
\end{eqnarray*}
On the other hand,
\begin{eqnarray*}
& &(x,a^*)\circ((y,b^*)\circ(z,c^*))\\&=&(x\cdot(y\cdot
z),l(x)l(y)c^*+l(x)r(z)b^*+l(x)\varphi(y,z)+r(y\cdot
z)a^*+\varphi(x,y\cdot z)),\\
& &((y,b^*)\circ(x,a^*))\circ(z,c^*)\\&=&((y\cdot x)\cdot z,l(y\cdot
x)c^*+r(z)l(y)a^*+r(z)r(x)b^*+r(z)\varphi(y,x)+\varphi(y\cdot x,z))\\
& &(y,b^*)\circ((x,a^*)\circ(z,c^*))\\&=&(y\cdot(x\cdot
z),l(y)l(x)c^*+l(y)r(z)a^*+l(y)\varphi(x,z)+r(x\cdot
z)b^*+\varphi(y,x\cdot z)).
\end{eqnarray*}
So it is easy to show that the operation ``$\circ$"  defines an LSA
if and only if Eq.~(\ref{eq:lsabimod1}), Eq.~(\ref{eq:lsabimod2})
and Eq.~(\ref{eq:vaicohomo}) hold.
\end{proof}

Keep the notations above. Define two new bilinear operations
$\prec,\succ:\frak{g}\otimes\frak{g}\to\frak{g}$ by
\begin{equation}
\langle x\prec y,a^*\rangle=-\langle r(x)a^*,y\rangle,\quad x\succ
y=x\cdot y-x\prec y,\quad x,y\in\frak{g},a^*\in\frak{g}^*.
\end{equation}
So we have $r(x)=L_{\prec}^*(x)$.

\begin{theorem}
With the same conditions and notations as above, the affine
cotangent extension problem has a solution if and only if the
following conditions hold:
\begin{enumerate}
\item
$l=L_{\cdot}^*$.
\item
$(\frak{g},\prec,\succ)$ is a PLSA.
\item
$\varphi$ satisfies Eq.~$($\ref{eq:vaicohomo}$)$ and the following
equation:
\begin{equation}
\langle\varphi(x,y),z\rangle=\langle\varphi(x,z),y\rangle,\quad
\forall x,y\in\frak{g}.
\end{equation}
\end{enumerate}
\label{thm:soofpr}
\end{theorem}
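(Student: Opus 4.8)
The plan is to reduce the two defining requirements of the affine cotangent extension problem to the algebraic conditions (a)--(c), using Lemma~\ref{le:circlsa} for the LSA axioms and a direct pairing computation for the parallelism of $\omega_p$. First I would observe that condition~(1) of the problem---exactness of the displayed sequence in the category of LSAs---is automatic for any product of the shape Eq.~(\ref{eq:affinepro}): the projection $p$ is a homomorphism by construction, $\ker p = i(\frak{g}^*)$ is a two-sided ideal since $(x,a^*)\circ(0,b^*)=(0,l(x)b^*)$ and $(0,b^*)\circ(x,a^*)=(0,r(x)b^*)$ both lie in $\frak{g}^*$, and the induced product on $i(\frak{g}^*)$ is trivial because $(0,a^*)\circ(0,b^*)=(0,0)$. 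Hence the only constraint from the LSA requirement is that Eq.~(\ref{eq:affinepro}) genuinely define an LSA, i.e.\ Eqs.~(\ref{eq:lsabimod1})--(\ref{eq:vaicohomo}) of Lemma~\ref{le:circlsa}.

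Next I would unwind condition~(2), the parallelism $\omega_p(X\circ Y,Z)=\omega_p(X\circ Z,Y)$ for $X=(x,a^*)$, $Y=(y,b^*)$, $Z=(z,c^*)$, where the connection is left multiplication by $\circ$. Expanding both sides via Eq.~(\ref{eq:affinepro}) and Eq.~(\ref{eq:syform}) and separating the identity according to which covector $a^*,b^*,c^*$ occurs yields three independent scalar identities. The $a^*$-channel reads $\langle r(y)a^*,z\rangle=\langle r(z)a^*,y\rangle$; since $r(x)=L_{\prec}^*(x)$ this says precisely that $\prec$ is commutative. The $b^*$- and $c^*$-channels both collapse to $\langle l(x)c^*,y\rangle=-\langle c^*,x\cdot y\rangle$, that is $l=L_{\cdot}^*$, which is condition~(a). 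The covector-free channel gives $\langle\varphi(x,y),z\rangle=\langle\varphi(x,z),y\rangle$, the symmetry condition in~(c). Thus parallelism is equivalent to: $\prec$ commutative, $l=L_{\cdot}^*$, and $\varphi$ symmetric.

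It then remains to feed $l=L_{\cdot}^*$ and $r=L_{\prec}^*$ back into the three identities of Lemma~\ref{le:circlsa}. Equation~(\ref{eq:lsabimod1}) becomes $L_{\cdot}^*(x)L_{\cdot}^*(y)-L_{\cdot}^*(y)L_{\cdot}^*(x)=L_{\cdot}^*([x,y])$, which holds automatically since $L_{\cdot}^*$ is the contragredient of the representation $L_{\cdot}$ of $\frak{g}(A)$ furnished by Proposition-Definition~\ref{pp:burde}. The substantive identity is Eq.~(\ref{eq:lsabimod2}): pairing it against $z\in\frak{g}$ and an arbitrary $a^*$ and using $r=L_{\prec}^*$, $l=L_{\cdot}^*$ turns each operator term into a $\prec$- or $\cdot$-product, and the outcome is exactly the PLSA compatibility Eq.~(\ref{eq:compatible}) after substituting $x\succ w=x\cdot w-x\prec w$. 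Finally Eq.~(\ref{eq:vaicohomo}) is carried over verbatim as the first half of~(c). Collecting everything: by Proposition~\ref{pp:pplsa}, since $(A,\cdot)=\frak{g}$ is already an LSA, the pair $(\frak{g},\prec,\succ)$ is a PLSA exactly when $\prec$ is commutative and Eq.~(\ref{eq:compatible}) holds, so the assembled constraints are equivalent to (a), (b) and (c).

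I expect the main obstacle to be the second step: correctly expanding the parallelism identity and cleanly splitting it into the three covector-channels, since the skew-symmetry of $\omega_p$ and the sign conventions of Eq.~(\ref{eq:starlrno1}) must be tracked carefully throughout. The identification of Eq.~(\ref{eq:lsabimod2}) with Eq.~(\ref{eq:compatible}) in the third step is also delicate, but once the dictionary $r=L_{\prec}^*$, $l=L_{\cdot}^*$ is installed it reduces to a routine algebraic rearrangement.
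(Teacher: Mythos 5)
Your proposal is correct and follows essentially the same route as the paper: expand the parallelism identity $\omega_p(X\circ Y,Z)=\omega_p(X\circ Z,Y)$, split it by covector channel to obtain $l=L_{\cdot}^*$, commutativity of $\prec$, and the symmetry of $\varphi$, then substitute $l=L_{\cdot}^*$, $r=L_{\prec}^*$ into Lemma~\ref{le:circlsa} so that Eq.~(\ref{eq:lsabimod1}) is automatic and Eq.~(\ref{eq:lsabimod2}) becomes the PLSA compatibility condition, concluding via Proposition~\ref{pp:pplsa}. Your explicit check that exactness of the sequence is automatic for products of the form Eq.~(\ref{eq:affinepro}) is a small addition the paper leaves implicit, but the argument is otherwise identical.
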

\begin{proof}
In fact, for any $x,y,z\in\frak{g},a^*,b^*,c^*\in\frak{g}^*$, we
have
\begin{eqnarray*}
\omega_p((x,a^*)\circ(y,b^*),(z,c^*))&=&-\langle x\cdot
y,c^*\rangle+\langle l(x)b^*+r(y)a^*+\varphi(x,y),z\rangle\\
\omega_p((x,a^*)\circ(z,c^*),(y,b^*))&=&-\langle x\cdot
z,b^*\rangle+\langle l(x)c^*+r(z)a^*+\varphi(x,z),y\rangle.
\end{eqnarray*}
So
$\omega_p((x,a^*)\circ(y,b^*),(z,c^*))=\omega_p((x,a^*)\circ(z,c^*),(y,b^*))$
if and only if the following conditions hold:
\begin{eqnarray*}
-\langle x\cdot y,c^*\rangle&=&\langle
l(x)c^*,y\rangle\Leftrightarrow l(x)=L_{\cdot}^*(x)\Leftrightarrow
\langle l(x)b^*,z\rangle=-\langle x\cdot
z,b^*\rangle,\\
\langle r(y)a^*,z\rangle&=&\langle r(z)a^*,y\rangle\Leftrightarrow
y\prec z=z\prec y,\\
 \langle\varphi(x,y),z\rangle&=&\langle
\varphi(x,z),y\rangle.
\end{eqnarray*}
On the other hand, since $(\frak{g},\cdot)$ is an LSA,
$l=L_{\cdot}^*$ automatically satisfies Eq.~(\ref{eq:lsabimod1}).
Furthermore, Eq.~(\ref{eq:lsabimod2}) holds if and only if
\begin{eqnarray*}
& &\langle l(x)r(y)a^*-r(y)l(x)a^*,z\rangle=\langle r(x\cdot
y)a^*-r(y)r(x)a^*,z\rangle,\quad \forall x,y,z\in\frak{g},a^*\in\frak{g}^*,\\
&\Leftrightarrow& \langle a^*,y\prec(x\cdot z)-x\cdot(y\prec
z)\rangle=\langle a^*,-(x\cdot y)\prec z-x\prec(y\prec
z)\rangle,\quad \forall x,y,z\in\frak{g},a^*\in\frak{g}^*,\\
&\Leftrightarrow& x\succ(y\prec z)=(x\cdot y)\prec z+y\prec(x\cdot
z),\quad \forall x,y,z\in\frak{g}.
\end{eqnarray*}
So  the conclusion follows from Proposition~\ref{pp:pplsa} and
Lemma~\ref{le:circlsa}.
\end{proof}

It would be interesting to ask whether $\varphi$ has a cohomological
explanation (cf.~\cite{Bo2}).

\begin{defn}
{\rm  Let $(A,\cdot)$ be an LSA and $V$ be a vector space. Let
$l,r:A\rightarrow \frak{gl}(V)$ be two linear maps. $V$ (or
$(V,l,r)$) is called a {\it bimodule} of $(A,\cdot)$ if
Eq.~(\ref{eq:lsabimod1}) and Eq.~(\ref{eq:lsabimod2}) hold.}
\end{defn}

In fact, according to~\cite{Sc}, $(V,l,r)$ is a bimodule of an LSA
$(A,\cdot)$ if and only if the direct sum $A\oplus V$ of the
underlying vector spaces of $A$ and $V$ is turned into an LSA (the
{\it semidirect sum}) by defining multiplication in $A\oplus V$ by
\begin{equation}
(x_1+v_1)\cdot_1(x_2+v_2)=x_1\cdot x_2+(l(x_1)v_2+r(x_2)v_1),\quad
x_1,x_2\in A, v_1, v_2\in V.\end{equation} We denote this
left-symmetric algebraic structure by $A\ltimes_{l,r}V$.

From the proof of Theorem~\ref{thm:soofpr}, we have the following
conclusion:

\begin{coro}
If $(A,\prec,\succ)$ is a PLSA, then $(A^*,L_{\cdot}^*,L_{\prec}^*)$
is a bimodule of $l(A)$.\label{co:bimolsadu}
\end{coro}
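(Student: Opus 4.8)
The plan is to check the two defining identities of a bimodule, Eq.~(\ref{eq:lsabimod1}) and Eq.~(\ref{eq:lsabimod2}), directly for the LSA $l(A)=(A,\cdot)$, taking $l=L_{\cdot}^*$, $r=L_{\prec}^*$ and $V=A^*$. Both identities are equalities of operators on $A^*$, so I would test them by pairing against an arbitrary $z\in A$ and using the convention $\langle L_{\diamond}^*(x)a^*,y\rangle=-\langle a^*,x\diamond y\rangle$ from Eq.~(\ref{eq:starlrno1}) to transpose every adjoint back onto elements of $A$. The point is that each identity then collapses onto an axiom that is already available once $(A,\prec,\succ)$ is known to be a PLSA.

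For Eq.~(\ref{eq:lsabimod1}) with $l=L_{\cdot}^*$, pairing $(L_{\cdot}^*(x)L_{\cdot}^*(y)-L_{\cdot}^*(x\cdot y))a^*$ against $z$ and moving the adjoints off yields $\langle a^*, y\cdot(x\cdot z)+(x\cdot y)\cdot z\rangle$, while the right-hand side yields $\langle a^*, x\cdot(y\cdot z)+(y\cdot x)\cdot z\rangle$. Equality for all $a^*$ and $z$ is exactly the left-symmetry identity (\ref{eq:axiomlsa}) for $(A,\cdot)$. By Proposition~\ref{pp:pplsa} a PLSA has $(A,\cdot)=l(A)$ an LSA, so Eq.~(\ref{eq:lsabimod1}) holds with no further hypotheses.

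For Eq.~(\ref{eq:lsabimod2}) with $l=L_{\cdot}^*$, $r=L_{\prec}^*$, the same pairing turns the left-hand side into $\langle a^*, y\prec(x\cdot z)-x\cdot(y\prec z)\rangle$ and the right-hand side into $\langle a^*, -(x\cdot y)\prec z-x\prec(y\prec z)\rangle$. Rewriting $x\cdot(y\prec z)-x\prec(y\prec z)$ as $x\succ(y\prec z)$, the identity becomes precisely the compatibility condition (\ref{eq:compatible}), namely $x\succ(y\prec z)=(x\cdot y)\prec z+y\prec(x\cdot z)$, which holds by the definition of a PLSA. These are exactly the two reductions already carried out inside the proof of Theorem~\ref{thm:soofpr}, so the corollary is essentially a matter of isolating them.

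I expect no structural obstacle; the only delicate point is the sign bookkeeping. The dual-action convention (\ref{eq:starlrno1}) carries a minus sign, so composing two starred operators produces a double sign flip, and one must keep the three products $\cdot$, $\prec$ and $\succ$ strictly separated while invoking $x\cdot w=x\prec w+x\succ w$ at the right moment. A single misplaced sign would convert Eq.~(\ref{eq:lsabimod2}) into a statement inequivalent to (\ref{eq:compatible}), so the computation must be done with care, but it is otherwise entirely routine.
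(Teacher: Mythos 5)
Your proposal is correct and coincides with the paper's own argument: the paper proves this corollary by pointing back to the proof of Theorem~\ref{thm:soofpr}, where Eq.~(\ref{eq:lsabimod1}) for $l=L_{\cdot}^*$ is observed to reduce to the left-symmetry axiom~(\ref{eq:axiomlsa}) for $(A,\cdot)$ and Eq.~(\ref{eq:lsabimod2}) for $r=L_{\prec}^*$ is shown, by exactly the pairing-and-transposing computation you describe, to be equivalent to the compatibility condition~(\ref{eq:compatible}). Your sign bookkeeping checks out, so nothing further is needed.
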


Now let $(A,\prec,\succ)$ be a PLSA and let $G$ be the connected and
simply connected Lie group corresponding to $\frak{g}(A)$. Then we
have
$\frak{g}(l(A)\ltimes_{L_{\cdot}^*,L_{\prec}^*}A^*)=\frak{g}(A)\ltimes_{L_{\succ}^*}A^*$.
Now we equip $T^*G$ with the Lie group structure defined by
Eq.~(\ref{eq:coliestru}), where the left invariant connection
$\nabla$ is induced by the LSA structure ``$\succ$". Then the Lie
algebra of $T^*G$ is $\frak{g}(A)\ltimes_{L_{\succ}^*}A^*$. If we
set $\varphi=0$, then we have the following consequence of
Theorem~\ref{thm:soofpr}:

\begin{coro}
With the same conditions and notations as above, then the natural
skew-symmetric and nondegenerate bilinear form $\omega_p$  on
$\frak{g}(A)\ltimes_{L_{\succ}^*}A^*$ defined by
Eq.~$($\ref{eq:syform}$)$ induces a left invariant symplectic
structure on $T^*G$ such that it is parallel with respect to the
left invariant affine structure $\tilde{\nabla}$ induced by the
(compatible) LSA structure
$l(A)\ltimes_{L_{\cdot}^*,L_{\prec}^*}A^*$ (on
$\frak{g}(A)\ltimes_{L_{\succ}^*}A^*$), that is,
$(T^*G,\tilde{\nabla},\omega_p)$ is a special symplectic Lie
group.\label{co:plsaspe}
\end{coro}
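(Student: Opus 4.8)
The plan is to obtain this corollary as the specialization of Theorem~\ref{thm:soofpr} to the degenerate case $\varphi=0$. First I would fix the data that enters the affine cotangent extension problem: take the compatible LSA structure on $\frak{g}=\frak{g}(A)$ to be the associated LSA $(A,\cdot)=l(A)$, so that the left invariant connection $\nabla$ used to build the Lie group structure Eq.~(\ref{eq:coliestru}) on $T^*G$ is $\nabla_xy=x\succ y$, and declare the three structure maps of Eq.~(\ref{eq:affinepro}) to be $l=L_{\cdot}^*$, $r=L_{\prec}^*$ and $\varphi=0$. With these choices the multiplication ``$\circ$'' of Eq.~(\ref{eq:affinepro}) becomes exactly $(x,a^*)\circ(y,b^*)=(x\cdot y,\,L_{\cdot}^*(x)b^*+L_{\prec}^*(y)a^*)$, i.e. the semidirect-sum LSA $l(A)\ltimes_{L_{\cdot}^*,L_{\prec}^*}A^*$, whose induced left invariant connection is the $\tilde{\nabla}$ named in the statement.

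Next I would check the three conditions of Theorem~\ref{thm:soofpr} against this data. Condition (a), $l=L_{\cdot}^*$, holds by construction. Condition (b), that $(\frak{g},\prec,\succ)$ is a PLSA, is precisely the standing hypothesis that $(A,\prec,\succ)$ is a PLSA. Condition (c) asks that $\varphi$ satisfy Eq.~(\ref{eq:vaicohomo}) together with $\langle\varphi(x,y),z\rangle=\langle\varphi(x,z),y\rangle$; since $\varphi=0$ both identities reduce to $0=0$ and hold trivially. I would also record that ``$\circ$'' is a genuine LSA: this is guaranteed by Corollary~\ref{co:bimolsadu}, which gives that $(A^*,L_{\cdot}^*,L_{\prec}^*)$ is a bimodule of $l(A)$, so the bimodule/semidirect-sum correspondence turns $l(A)\ltimes_{L_{\cdot}^*,L_{\prec}^*}A^*$ into an LSA, whence $\tilde{\nabla}$ is flat and torsion free, i.e. a left invariant affine structure on $T^*G$.

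Finally I would assemble the conclusion. By Theorem~\ref{thm:soofpr} the affine cotangent extension problem has a solution for this data, which means exactly that the bilinear form $\omega_p$ of Eq.~(\ref{eq:syform}) is parallel with respect to $\tilde{\nabla}$. Since $\omega_p$ is manifestly skew-symmetric and nondegenerate and $\tilde{\nabla}$ is a left invariant affine structure, Proposition~\ref{pp:desplie} then delivers that $(T^*G,\tilde{\nabla},\omega_p)$ is a special symplectic Lie group.

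There is no genuine analytic obstacle here: the substantive computation, namely the equivalence between the parallelness of $\omega_p$ and conditions (a)--(c), was already carried out in the proof of Theorem~\ref{thm:soofpr}, so the argument is essentially bookkeeping. The two points I would be careful about are that $r=L_{\prec}^*$ agrees with the defining convention $\langle x\prec y,a^*\rangle=-\langle r(x)a^*,y\rangle$, and that the subadjacent Lie algebra of the LSA $l(A)\ltimes_{L_{\cdot}^*,L_{\prec}^*}A^*$ is indeed $\frak{g}(A)\ltimes_{L_{\succ}^*}A^*$, the Lie algebra of the deformed $T^*G$; the latter follows from Proposition~\ref{pp:pplsa}, since the commutativity of $\prec$ makes the brackets of $(A,\cdot)$ and $(A,\succ)$ coincide.
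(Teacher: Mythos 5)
Your proposal is correct and is essentially the paper's own argument: the paper states this corollary as the immediate specialization of Theorem~\ref{thm:soofpr} to $\varphi=0$ with $l=L_{\cdot}^*$ and $r=L_{\prec}^*$, exactly as you do, and your additional bookkeeping (the bimodule fact from Corollary~\ref{co:bimolsadu} and the identification of the sub-adjacent Lie algebra of $l(A)\ltimes_{L_{\cdot}^*,L_{\prec}^*}A^*$ with $\frak{g}(A)\ltimes_{L_{\succ}^*}A^*$) matches what the paper records just before the corollary. No gaps.
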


\subsection{Special symplectic Lie groups and post-affine structures}
The following conclusion shows that the PLSA is the natural
underlying algebraic structure of a special symplectic Lie algebra.
\begin{prop}
Let $(G,\nabla,\omega)$ be a special symplectic Lie group whose Lie
algebra is $\frak{g}$. The left invariant affine structure $\nabla$
induces a compatible LSA structure $\cdot$ on $\frak{g}$ by
Eq.~$($\ref{eq:delsa}$)$. Then the following operations define a
compatible PLSA structure on $\frak{g}$:
\begin{equation}
\omega(x\prec y,z)=-\omega(y,z\cdot x),\quad \omega(x\succ
y,z)=\omega(y,[z,x]),\quad \forall
x,y,z\in\frak{g}.\label{eq:complsa}
\end{equation}
Hence it induces a left invariant post-affine structure on
$G$.\label{pp:complsa}
\end{prop}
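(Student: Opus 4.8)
The plan is to verify that the pair $(\prec,\succ)$ defined by Eq.~(\ref{eq:complsa}) satisfies the hypotheses of Proposition~\ref{pp:pplsa}, and then to invoke that proposition. Concretely, I would check three facts: that $\prec$ is commutative, that the decomposition $x\prec y+x\succ y=x\cdot y$ holds so that $\cdot$ is indeed the associated product, and that the compatibility identity Eq.~(\ref{eq:compatible}) holds. Since $(\frak{g},\cdot)$ is already an LSA (it arises from the affine structure $\nabla$ via Eq.~(\ref{eq:delsa})), Proposition~\ref{pp:pplsa} then immediately yields that $(\frak{g},\prec,\succ)$ is a PLSA.

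The consistency and commutativity are short and rely only on skew-symmetry of $\omega$ together with the parallelism Eq.~(\ref{eq:paralsy}). For commutativity, comparing $\omega(x\prec y,z)=-\omega(y,z\cdot x)$ with $\omega(y\prec x,z)=-\omega(x,z\cdot y)$ and using skew-symmetry reduces the claim to $\omega(z\cdot x,y)=\omega(z\cdot y,x)$, which is exactly Eq.~(\ref{eq:paralsy}); nondegeneracy then gives $x\prec y=y\prec x$. For the decomposition, I would compute $\omega(x\prec y+x\succ y,z)=-\omega(y,z\cdot x)+\omega(y,[z,x])$, expand $[z,x]=z\cdot x-x\cdot z$ to obtain $-\omega(y,x\cdot z)$, and note that parallelism plus skew-symmetry give $\omega(x\cdot y,z)=\omega(x\cdot z,y)=-\omega(y,x\cdot z)$, so nondegeneracy yields $x\prec y+x\succ y=x\cdot y$.

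The main work, and the only delicate point, is the compatibility Eq.~(\ref{eq:compatible}). I would prove it by pairing both sides with an arbitrary $w\in\frak{g}$ through $\omega(-,w)$ and appealing to nondegeneracy. On the left, applying the definition of $\succ$ and then of $\prec$ gives $\omega(x\succ(y\prec z),w)=\omega(y\prec z,[w,x])=-\omega(z,[w,x]\cdot y)$. On the right, the two terms give $\omega((x\cdot y)\prec z,w)=-\omega(z,w\cdot(x\cdot y))$ and $\omega(y\prec(x\cdot z),w)=-\omega(x\cdot z,w\cdot y)$; for the latter I would rewrite $\omega(x\cdot z,w\cdot y)=\omega(x\cdot(w\cdot y),z)$ using parallelism Eq.~(\ref{eq:paralsy}) and then apply skew-symmetry to turn it into $-\omega(z,x\cdot(w\cdot y))$. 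Matching the two sides, after stripping the common $\omega(z,-)$, reduces the identity to $[w,x]\cdot y=w\cdot(x\cdot y)-x\cdot(w\cdot y)$, which is precisely the representation property of $L_\cdot$ recorded in Proposition-Definition~\ref{pp:burde}(2) and therefore holds because $(\frak{g},\cdot)$ is an LSA. The careful bookkeeping with skew-symmetry and parallelism here is the crux; everything else is formal.

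Having established that $(\frak{g},\prec,\succ)$ is a PLSA, the final assertion follows from the one-to-one correspondence between compatible PLSA structures on $\frak{g}$ and left invariant post-affine structures on $G$ set up earlier in this section: the connections $\nabla_xy=x\succ y$ and $\tilde{\nabla}_xy=x\cdot y=x\prec y+x\succ y$ constitute the desired left invariant post-affine structure on $G$.
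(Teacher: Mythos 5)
Your proposal is correct and follows essentially the same route as the paper's own proof: both verify commutativity of $\prec$ and the decomposition $x\cdot y=x\prec y+x\succ y$ from skew-symmetry plus the parallelism condition Eq.~(\ref{eq:paralsy}), and both establish the compatibility identity by pairing with an arbitrary $w$ and reducing it to the representation property $[w,x]\cdot y=w\cdot(x\cdot y)-x\cdot(w\cdot y)$ of the LSA $(\frak{g},\cdot)$. Your explicit appeal to Proposition~\ref{pp:pplsa} to deduce that $(\frak{g},\succ)$ is an LSA is exactly the intended shortcut where the paper writes ``similarly.''
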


\begin{proof}
Define a linear map $\varphi:\frak{g}\to\frak{g}^*$ through
Eq.~(\ref{eq:delimap}). Then for any $x,y,z\in\frak{g}$,
\begin{eqnarray*}
\omega(x\prec y,z)&=&-\omega(y,z\cdot x)\Leftrightarrow\langle
\varphi(x\prec y),z\rangle=\langle
R_{\cdot}^*(x)\varphi(y),z\rangle,\\
\omega(x\succ y,z)&=&\omega(y,[z,x])\Leftrightarrow\langle
\varphi(x\succ y),z\rangle =\langle{\rm ad}^*(x)\varphi(y),z\rangle.
\end{eqnarray*}
So the operations $\prec$ and $\succ$ are in fact defined by
$$x\prec y=\varphi^{-1}(R_{\cdot}^*(x)\varphi(y)),\quad x\succ
y=\varphi^{-1}({\rm ad}^*(x)\varphi(y)).$$ Since
$$\omega(x\prec y,z)=-\omega(y,z\cdot x)=-\omega(x,z\cdot
y)=\omega(y\prec x,z),$$  we have $x\prec y=y\prec x$, that is,
$\prec$ is commutative. Moreover, for any $w\in\frak{g}$,
\begin{eqnarray*}
\omega(x\succ(y\prec z),w)&=&\omega(y\prec
z,[w,x])=-\omega(z,[w,x]\cdot y),\\
\omega((x\cdot y)\prec z,w)&=&-\omega(z,w\cdot(x\cdot y)),\\
\omega(y\prec(x\cdot z),w)&=&-\omega(x\cdot z,w\cdot
y)=\omega(z,x\cdot (w\cdot y)).
\end{eqnarray*}
Since $(\frak{g},\cdot)$ is an LSA, we have
$$x\succ(y\prec z)=(x\cdot y)\prec z+y\prec(x\cdot z).$$
Similarly, we show that $(\frak{g},\succ)$ is an LSA and $x\cdot
y=x\prec y+x\succ y$ for any $x,y\in \frak{g}$.
\end{proof}

\begin{exam}
{\rm The 2-dimensional (compatible) PLSAs $(\frak
g,\prec_i,\succ_i)$ corresponding to the special symplectic Lie
algebras $(\frak g,\nabla^i,\omega)$ ($i=1,2,3,4$) in Example 2.2
are given by
\begin{eqnarray*}
&& {\rm (I)}\;\; e_i\prec_1 e_j=0,\;\;e_i\succ_1 e_j=0,\;\;i,j=1,2;\\
&& {\rm (II)}\;\; e_1\prec_2 e_1=e_2,\;\;
e_1\prec_2e_2=e_2\prec_2e_1=e_2\prec_2e_2=0,\;\;
e_i\succ_2e_j=0\;\;i,j=1,2;\\
&& {\rm (III)}\;\; e_1\prec_3
e_1=e_2\prec_3e_2=0,e_1\prec_3e_2=e_2\prec_3e_1= -e_1,\\
&&\hspace{1cm} e_1\succ_3e_1=e_2\succ_3=0, e_1\succ_3
e_2=e_1,e_2\succ_3e_2=e_2;\\
&& {\rm (IV)}\;\; e_1\prec_4
e_1=0,\;e_1\prec_4e_2=e_2\prec_4e_1= -\frac{1}{2}e_1,e_2\prec_4e_2=e_1-\frac{1}{2}e_2,\\
&&\hspace{1cm} e_1\succ_4e_1=e_2\succ_4=0, e_1\succ_4
e_2=e_1,e_2\succ_4e_2=e_2,
\end{eqnarray*}
respectively. }\end{exam}

\begin{remark}
{\rm Since $\omega$ is also a left invariant symplectic form on $G$,
we can define a left invariant affine structure $\nabla$ on $G$
through Eq.~(\ref{eq:sycoles11}). It is obvious that its
corresponding LSA structure coincides with the LSA structure $\succ$
defined by Eq.~(\ref{eq:complsa}). Moreover, it interprets the
commuting diagram introduced by Eq.~(\ref{eq:diagram}). }
\label{re:explain}
\end{remark}

\subsection{Double extensions of special symplectic Lie algebras (groups)}
\label{ss:doubleex} In this subsection we investigate the
``nonabelian extension" of the construction in
Corollary~\ref{co:plsaspe}. More precisely, we consider the
following construction. Let $G$ be a connected and simply connected
Lie group whose Lie algebra is $\frak g$. Suppose that there exists
a left invariant affine structure $\nabla$ on $G$ which induces a
compatible LSA structure ``$\circ_{\frak{g}}$" on $\frak{g}$. Now
assume that there is an LSA structure ``$\circ_{\frak{g}^*}$" on the
dual space $\frak{g}^*$ which induces a left invariant affine
structure $\nabla^*$ on the connected and simply connected Lie group
$G^*$ corresponding to the sub-adjacent Lie algebra of the LSA
$(A^*,\circ_{\frak{g}^*})$. We consider how to construct an LSA
structure ``$\circ$" on a direct sum $\frak{g}\oplus \frak{g}^*$ of
the underlying vector spaces of $\frak{g}$ and $\frak{g}^*$ such
that $(\frak{g},\circ_{\frak{g}})$ and
$(\frak{g}^*,\circ_{\frak{g}^*})$ are subalgebras and the natural
skew-symmetric and nondegenerate bilinear form $\omega_p$ defined by
Eq.~(\ref{eq:syform}) is parallel with respect to the left invariant
affine structure induced by the LSA structure $\circ$ on the
connected and simply connected Lie group $\mathcal{D}(G)$
corresponding to the sub-adjacent Lie algebra of the LSA
$(\frak{g}\oplus \frak{g}^*,\circ)$. Such a construction is called
{\it a double extension of special symplectic Lie algebras}
associated to $(\frak{g},\circ_{\frak{g}})$ and
$(\frak{g}^*,\circ_{\frak{g}^*})$. It is denoted by
$(\frak{g}\bowtie \frak{g}^*,\circ,\omega_p)$. On the Lie group
level it is called {\it a double extension of special symplectic Lie
groups}. It is denoted by $(\mathcal{D}(G),\hat{\nabla},\omega_p)$,
where $\hat{\nabla}$ is the left invariant affine structure on the
``double Lie group" $\mathcal{D}(G)$ induced by the compatible LSA
structure $\circ$ on $\frak{g}\bowtie\frak{g}^*$. Furthermore, in
this case $(\mathcal{D}(G),G,G^*)$ is a {\it local double Lie group}
in the sense of~\cite{LM}.

\begin{remark}
{\rm
\begin{enumerate}
\item
In fact, the notion of double extension of special symplectic Lie
groups is very similar to that of Drinfeld's ``double Poisson-Lie
group"~(\cite{CP, Dr}).
\item
Let $\frak{g}$ be a Lie algebra with a hypersymplectic structure
$\{J,E,g\}$. According to~\cite{An}, the connection $\nabla^{CP}$
determined by the complex product structure $\{J,E\}$ coincides with
the Levi-Civita connection $\nabla^{g}$ of the metric $g$ and the
three symplectic forms $\omega_1,\omega_2$ and $\omega_3$ defined by
Eq.~(\ref{eq:3form}) are all $\nabla^{g}$-parallel. So if the metric
$g$ is flat, then $(\frak{g},\omega_i,\circ)$ ($i=1,2,3$) are all
special symplectic Lie algebras, where $\frak{g}$ is equipped with
the LSA structure $\circ$ determined by the flat and torsion free
connection $\nabla^{g}$. On the other hand, by
Lemma~\ref{le:3formex}, we show that $(\frak{g},\omega_2)$ is a
para-K\"{a}hler Lie algebra, that is, a symplectic Lie algebra with
two Lagrange subalgebras: if $(\frak{g},\frak{g}_{+},\frak{g}_{-})$
is the double Lie algebra associated to the complex product
structure $\{J,E\}$, then the two Lagrange subalgebras are given by
$\frak{g}_{+}$ and $\frak{g}_{-}$ respectively. Furthermore
according to ~\cite{Bai2}, $\frak{g}_{-}$ can be identified as the
dual space of $\frak{g}_{+}$. So if $\frak{g}$ is a flat
hypersymplectic Lie algebra, then $(\frak{g},\circ,\omega_2)$ is a
double extension of special symplectic Lie algebras associated to
$(\frak{g}_{+},\circ_{\frak{g}_{+}})$ and
$(\frak{g}_{-}=\frak{g}_{+}^*,\circ_{\frak{g}_{-}})$, where the LSA
structures $\circ_{\frak{g}_{+}}$ and $\circ_{\frak{g}_{-}}$ are
obtained by the restrictions of the LSA structure $\circ$ on
$\frak{g}_{+}$ and $\frak{g}_{-}$ respectively. So we come to the
following conclusion: the double extensions of special symplectic
Lie algebras are the natural underlying structures of flat
hypersymplectic Lie algebras.
\end{enumerate}
}\label{re:flathy}
\end{remark}

Let us recall the notion of {\it matched pair of LSAs} (\cite{Bai2})
which is parallel to the notions of {\it matched pair of Lie
algebras}, {\it matched pair of (Lie) groups} and {\it matched pair
of Hopf algebras} ~(\cite{LM, Ma, T}).

\begin{prop-def}  Let $(A_1,\cdot_1)$ and $(A_2,\cdot_2)$
be two LSAs. Suppose that there are linear maps
$l_{1},r_{1}:A_1\rightarrow \frak{gl}(A_2)$ and
$l_{2},r_{2}:A_2\rightarrow \frak{gl}(A_1)$ such that
$(A_2,l_{1},r_{1})$ is a bimodule of $A_1$ and $(A_1,l_{2},r_{2})$
is a bimodule of $A_2$ and they satisfy
\begin{equation}
r_{2}(a)([x,y]_1)=r_{2}(l_{1}(y)a)x-r_{2}(l_{1}(x)a)y+
x\cdot_1(r_{2}(a)y)-y\cdot_1(r_{2}(a)x),\label{eq:lsabi1}
\end{equation}
\begin{equation}
l_{2}(a)(x\cdot_1y)=-l_{2}(l_{1}(x)a-r_{1}(x)a)y+
(l_{2}(a)x-r_{2}(a)x)\cdot_1y+r_{2}(r_{1}(y)a)x+
x\cdot_1(l_{2}(a)y),\label{eq:lsabi2}
\end{equation}
\begin{equation}
r_{1}(x)([a,b]_2)=r_{1}(l_{2}(b)x)a-r_{1}(l_{2}(a)x)b+
a\cdot_2(r_{1}(x)b)-b\cdot_2(r_{1}(x)a),\label{eq:lsabi3}
\end{equation}
\begin{equation}
l_{1}(x)(a\cdot_2b)=-l_{1}(l_{2}(a)x-r_{2}(a)x)b+
(l_{1}(x)a-r_{1}(x)a)\cdot_2b+r_{1}(r_{2}(b)x)a+
a\cdot_2(l_{1}(x)b),\label{eq:lsabi4}
\end{equation}
where $x,y\in A_1, a,b\in A_2$ and $[\;,\;]_i$ is the sub-adjacent
Lie bracket of $(A, \cdot_i)$ ($i=1,2$). Then there is an LSA
structure ``$\cdot$" on the vector space $A_1\oplus A_2$ given by
\begin{equation}
(x+a)\cdot(y+b)=(x\cdot_1y+l_{2}(a)y+r_{2}(b)x)+(a\cdot_2b+
l_{1}(x)b+r_{1}(y)a),\;\; \forall x,y\in A_1, a,b\in A_2.
\end{equation}
We denote this LSA by $A_1\bowtie_{l_{1},r_{1}}^{l_{2},r_{2}}A_2$.
Moreover, $(A_1,A_2,l_{1},r_{1},l_{2},r_{2})$ satisfying the above
conditions is called a matched pair of LSAs. On the other hand,
every LSA which is a direct sum of the underlying vector spaces of
two subalgebras can be obtained from the above way.
\label{pd:matlsa}
\end{prop-def}

Our starting point is the following conclusion:

\begin{prop}
Let $(A,\prec,\succ)$ be a PLSA. Suppose that there exists a PLSA
structure on the dual space $A^*$ which we still denote by
``$\prec,\succ$". Then there exists a double extension of special
symplectic Lie algebras associated to $(\frak{g}(A),\cdot)$ and
$(\frak{g}(A^*),\cdot)$ if and only if $(l(A),l(A^*),L_{\cdot}^*$,
$L_{\prec}^*,L_{\cdot}^*,L_{\prec}^*)$ is a matched pair of
LSAs.\label{pp:startpo}
\end{prop}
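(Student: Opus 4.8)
The plan is to reduce the existence of a double extension to a question about matched pairs and then to pin down, via the $\omega_p$-parallel condition, exactly which matched pair can occur. By definition a double extension is an LSA structure $\circ$ on $\frak{g}\oplus\frak{g}^*$ for which $\frak{g}$ and $\frak{g}^*$ are subalgebras carrying the given associated LSA structures $\cdot$, and for which $\omega_p$ is parallel with respect to the connection $U\mapsto U\circ(-)$. By the last assertion of Proposition-Definition~\ref{pd:matlsa}, every LSA structure on $\frak{g}\oplus\frak{g}^*$ having $\frak{g}$ and $\frak{g}^*$ as subalgebras is determined by a matched pair $(l(A),l(A^*),l_1,r_1,l_2,r_2)$, with $(x+a)\circ(y+b)=(x\cdot y+l_2(a)y+r_2(b)x)+(a\cdot b+l_1(x)b+r_1(y)a)$. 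So the task becomes to characterize those matched pairs for which $\omega_p$ is parallel, and to show this happens precisely for the distinguished tuple $(L_{\cdot}^*,L_{\prec}^*,L_{\cdot}^*,L_{\prec}^*)$.

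The heart of the argument is a single computation, entirely parallel to the one in the proof of Theorem~\ref{thm:soofpr}. Writing $U=x+a$, $V=y+b$, $W=z+c$ and expanding both sides of $\omega_p(U\circ V,W)=\omega_p(U\circ W,V)$ using $\omega_p(p+\Phi,z+c)=-\langle p,c\rangle+\langle\Phi,z\rangle$, I would collect the resulting twelve pairings according to which three of the six variables $x,y,z,a,b,c$ each involves. I expect the terms indexed by $\{x,y,c\}$ and $\{x,b,z\}$ to force $l_1=L_{\cdot}^*$, those indexed by $\{a,y,c\}$ and $\{a,b,z\}$ to force $l_2=L_{\cdot}^*$, and those indexed by $\{b,x,c\}$ and $\{y,a,z\}$ to give the symmetry conditions $\langle r_2(b)x,c\rangle=\langle r_2(c)x,b\rangle$ and $\langle r_1(y)a,z\rangle=\langle r_1(z)a,y\rangle$. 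These symmetries say exactly that $r_1$ and $r_2$ are of the form $L_{\prec}^*$ for commutative products on $A$ and $A^*$; conversely the commutativity of $\prec$ built into the PLSA axioms guarantees them. Thus $\omega_p$ is parallel if and only if $(l_1,r_1,l_2,r_2)=(L_{\cdot}^*,L_{\prec}^*,L_{\cdot}^*,L_{\prec}^*)$.

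With this computation both implications follow quickly. For the ``if'' direction, assume the distinguished tuple is a matched pair. Its two bimodule requirements Eq.~(\ref{eq:lsabimod1}) and Eq.~(\ref{eq:lsabimod2}) are automatic: Corollary~\ref{co:bimolsadu} applied to $(A,\prec,\succ)$ shows $(A^*,L_{\cdot}^*,L_{\prec}^*)$ is a bimodule of $l(A)$, and the same corollary applied to the PLSA on $A^*$ shows $(A,L_{\cdot}^*,L_{\prec}^*)$ is a bimodule of $l(A^*)$. Hence Proposition-Definition~\ref{pd:matlsa} produces an LSA $\circ$ with $\frak{g},\frak{g}^*$ as subalgebras, and the computation above makes $\omega_p$ parallel, yielding a double extension. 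For the ``only if'' direction, a given double extension determines, via Proposition-Definition~\ref{pd:matlsa}, a matched pair whose data the parallelism computation forces to be the distinguished tuple, which is therefore a matched pair.

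The main obstacle will be bookkeeping rather than conceptual: organizing the twelve pairings so that each of the six variable-triples is matched cleanly, and confirming that the two symmetry conditions are genuinely equivalent to $r_1=r_2=L_{\prec}^*$ for the prescribed $\prec$ once $l_1,l_2$ and the bimodule constraint Eq.~(\ref{eq:lsabimod2}) --- equivalently the PLSA compatibility of Proposition~\ref{pp:pplsa} --- are imposed. After the distinguished tuple is fixed, the only remaining content in ``being a matched pair'' is the four cross relations Eq.~(\ref{eq:lsabi1})--Eq.~(\ref{eq:lsabi4}), since the bimodule axioms are free by Corollary~\ref{co:bimolsadu}; this is exactly what the statement records.
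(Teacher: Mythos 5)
Your proposal is correct and follows essentially the same route as the paper: both reduce the problem to the matched-pair description of Proposition-Definition~\ref{pd:matlsa} and then expand the parallelism identity $\omega_p(U\circ V,W)=\omega_p(U\circ W,V)$ into pairings. Your bookkeeping by variable-triples is right: $\{x,y,c\}$ and $\{x,b,z\}$ give $l_1=L_{\cdot}^*$, $\{a,y,c\}$ and $\{a,b,z\}$ give $l_2=L_{\cdot}^*$, and $\{b,x,c\}$, $\{y,a,z\}$ give the two symmetry conditions; in the ``if'' direction these all hold for the distinguished tuple precisely because $\prec$ is commutative, which is exactly the paper's computation following Eq.~(\ref{eq:lsatruonmat}). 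The only organizational difference is in the converse: the paper first invokes Proposition~\ref{pp:complsa} to endow the double with a compatible PLSA structure and then reads off $l_A=L_{\cdot}^*$, $r_A=L_{\prec}^*$ from $\omega_p$-identities involving that induced structure, whereas you extract $l_1,r_1,l_2,r_2$ directly from the parallelism identity. Your version is a mild streamlining that avoids one citation but changes nothing essential.

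One caution about the step you flag at the end. The symmetry of $r_1$ together with the bimodule identity Eq.~(\ref{eq:lsabimod2}) only forces $r_1=L_{\prec'}^*$ for \emph{some} commutative product $\prec'$ such that $(A,\prec',\succ')$ with $x\succ'y=x\cdot y-x\prec'y$ is a compatible PLSA on $(A,\cdot)$; it does not single out the prescribed $\prec$, since a given LSA may carry several compatible PLSA structures (for instance $\prec'=0$, $\succ'=\cdot$ is always one). So the literal confirmation you hope for is not available. This is not a defect of your argument relative to the paper's: the published converse has exactly the same ambiguity, tacitly identifying the PLSA structures on $A$ and $A^*$ with those induced on the double by Eq.~(\ref{eq:complsa}). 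Read the ``only if'' direction accordingly (a double extension makes the distinguished tuple a matched pair for the induced $\prec$), and your proof is on the same footing as the paper's.
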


\begin{proof}
In fact, if
$(l(A),l(A^*),L_{\cdot}^*,L_{\prec}^*,L_{\cdot}^*,L_{\prec}^*)$ is a
matched pair of LSAs, then the LSA structure on
$l(A)\bowtie_{L_{\cdot}^*,L_{\prec}^*}^{L_{\cdot}^*,L_{\prec}^*}l(A^*)$
can be written as follows:
\begin{equation}
(x,a^*)\cdot(y,b^*)=(x\cdot
y+L_{\cdot}^*(a^*)y+L_{\prec}^*(b^*)x,a^*\cdot
b^*+L_{\cdot}^*(x)b^*+L_{\prec}^*(y)a^*),\forall x,y\in A,a^*,b^*\in
A^*.\label{eq:lsatruonmat}
\end{equation}
For any $z\in A,c^*\in A^*$, we have
\begin{eqnarray*}
& &\omega_p((x,a^*)\cdot(y,b^*),(z,c^*))\\&=&\omega_p((x\cdot
y+L_{\cdot}^*(a^*)y+L_{\prec}^*(b^*)x,a^*\cdot
b^*+L_{\cdot}^*(x)b^*+L_{\prec}^*(y)a^*),(z,c^*))\\
&=&-\langle x\cdot y,c^*\rangle+\langle y,a^*\cdot c^*\rangle
+\langle x,b^*\prec c^*\rangle +\langle a^*\cdot
b^*,z\rangle-\langle b^*,x\cdot z\rangle-\langle a^*,y\prec
z\rangle,\\ & &\omega_p((x,a^*)\cdot(z,c^*),(y,b^*))\\
&=&\omega_p((x\cdot z+L_{\cdot}^*(a^*)z+L_{\prec}^*(c^*)x,a^*\cdot
c^*+L_{\cdot}^*(x)c^*+L_{\prec}^*(z)a^*),(y,b^*))\\
&=&-\langle x\cdot z,b^*\rangle+\langle z,a^*\cdot b^*\rangle +
\langle x,c^*\prec b^*\rangle+\langle a^*\cdot c^*,y\rangle-\langle
c^*,x\cdot y\rangle -\langle a^*,z\prec y\rangle.
\end{eqnarray*}
Hence $\omega_p((x,a^*)\cdot(y,b^*),(z,c^*))=
\omega_p((x,a^*)\cdot(z,c^*),(y,b^*))$. So $(\frak{g}(A)\bowtie
\frak{g}(A^*),\omega_p)$ is a double extension of special symplectic
Lie algebras associated to $(\frak{g}(A),\cdot)$ and
$(\frak{g}(A^*),\cdot)$.

Conversely, if there exists a double extension of special symplectic
Lie algebras associated to $(\frak{g}(A),\cdot)$ and
$(\frak{g}(A^*),\cdot)$, then by Proposition~\ref{pp:complsa}, there
exists a compatible PLSA structure on $(\frak{g}(A)\bowtie
\frak{g}(A^*),\omega_p)$ given by Eq.~(\ref{eq:complsa}). Moreover,
it is easy to show that $A$ and $A^*$ are post-left-symmetric
subalgebras. Set
$$x\cdot a^*=l_A(x)a^*+r_{A^*}(a^*)x,\quad a^*\cdot
x=l_{A^*}(a^*)x+r_{A}(x)a^*,\quad \forall x\in A,a^*\in A^*,$$ where
$l_A,r_A:A\to\frak{gl}(A^*)$, $l_{A^*},r_{A^*}:A^*\to\frak{gl}(A)$.
Then by Proposition-Definition~\ref{pd:matlsa},
$(l(A),l(A^*),l_A,r_A,l_{A^*},r_{A^*})$ is a matched pair of LSAs.
Moreover, for any $y\in A$, we have
$$\langle r_A(x)a^*,y\rangle=\omega_p(a^*\cdot
x,y)=-\omega_p(y,a^*\cdot x)=\omega_p(x\prec y,a^*)=-\langle
a^*,x\prec y\rangle\Rightarrow r_A=L_{\prec}^*,$$
\begin{eqnarray*}
\langle l_A(x)a^*,y\rangle&=&\omega_p(x\cdot
a^*,y)=\omega_p([x,a^*],y)+\omega_p(a^*\cdot x,y)=\omega_p(x\succ
y,a^*)+\langle r_A(x)a^*,y\rangle\\
&=&-\langle x\succ y,a^*\rangle-\langle a^*,x\prec y\rangle=\langle
y,L_{\cdot}^*(x)a^*\rangle\Rightarrow l_A=L_{\cdot}^*.
\end{eqnarray*}
Similarly or by symmetry of $A$ and $A^*$, we know that
$l_{A^*}=L_{\cdot}^*, r_{A^*}=L_{\prec}^*$. So
$(l(A),l(A^*),L_{\cdot}^*,\\ L_{\prec}^*,L_{\cdot}^*,L_{\prec}^*)$
is a matched pair of LSAs.
\end{proof}
Let $V$ be a vector space. Let $\sigma: V\otimes V \rightarrow
V\otimes V$ be the {\it exchanging operator} defined as
\begin{equation}\sigma(x \otimes y) = y\otimes x,\quad
\forall x, y\in V.
\end{equation}
Let $V_1,V_2$ be two vector spaces and $T:V_1\rightarrow V_2$ be a
linear map. Denote the dual (linear) map by $T^*:V_2^*\rightarrow
V_1^*$ defined by
\begin{equation}
\langle  v_1,T^*(v_2^*)\rangle   =\langle  T(v_1),v_2^*\rangle
,\;\;\forall v_1\in V_1, v_2^*\in V_2^*.
\end{equation}
Note that the notations of $L_\diamond^*$ and $R_\diamond^*$ given
by Eq.~(\ref{eq:starlrno1}) are different from the above notation if
$\frak{gl}(V)$ is regarded as a vector space. Furthermore note that
$(V\otimes V)^*=V^*\otimes V^*$ due to the assumption that $V$ is
finite-dimensional.

\begin{prop}
Let $(A,\prec,\succ,\alpha,\beta)$ be a PLSA $(A,\prec,\succ)$
endowed with two linear maps $\alpha,\beta:A\to A\otimes A$. Suppose
that $\alpha^*,\beta^*:A^*\otimes A^*\to A^*$ induce a PLSA
structure on $A^*$ which we still denote by $(\prec,\succ)$. Then
$(l(A),l(A^*),L_{\cdot}^*,L_{\prec}^*,L_{\cdot}^*,L_{\prec}^*)$ is a
matched pair of LSAs if and only if for any $x,y\in A$, the
following equations hold:
\begin{equation}
\alpha([x,y])=({\rm id}\otimes L_{\cdot}(x)+L_{\cdot}(x)\otimes{\rm
id})\alpha(y)-({\rm id}\otimes L_{\cdot}(y)+L_{\cdot}(y)\otimes{\rm
id})\alpha(x),\label{eq:1colsa1}
\end{equation}
\begin{equation}
(\alpha+\beta)(x\cdot y)=(L_{\succ}(x)\otimes{\rm id}+{\rm
id}\otimes L_{\cdot}(x))(\alpha+\beta)(y)+({\rm id}\otimes
R_{\cdot}(y))\beta(x)-(L_{\prec}(y)\otimes{\rm
id})\alpha(x),\label{eq:1colsa2}
\end{equation}
\begin{equation}\begin{matrix}
(\alpha+\beta-\sigma\alpha-\sigma\beta)(x\prec
y)=-(R_{\prec}(y)\otimes{\rm id})(\sigma\alpha+\sigma\beta)(x)+({\rm
id}\otimes R_{\prec}(y))(\alpha+\beta)(x)+\cr ({\rm id}\otimes
L_{\prec}(x))(\alpha+\beta)(y)-(L_{\prec}(x)\otimes{\rm
id})(\sigma\alpha+\sigma\beta)(y),
\cr\end{matrix}\mbox{}\hspace{0.5cm}\label{eq:1colsa3}
\end{equation}
\begin{equation}
(\alpha+\beta)(x\cdot y)=(L_{\succ}(x)\otimes{\rm id}+{\rm
id}\otimes L_{\cdot}(x))(\alpha+\beta)(y)+({\rm id}\otimes
R_{\cdot}(y))\beta(x)-(R_{\prec}(y)\otimes{\rm
id})\sigma\alpha(x).\label{eq:1colsa4}
\end{equation}
\label{pp:mato1co}
\end{prop}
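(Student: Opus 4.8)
The plan is to read off, from Proposition-Definition~\ref{pd:matlsa}, exactly what extra content the matched-pair condition carries once the bimodule hypotheses are disposed of. Since $(A,\prec,\succ)$ is a PLSA, Corollary~\ref{co:bimolsadu} gives that $(A^*,L_{\cdot}^*,L_{\prec}^*)$ is a bimodule of $l(A)$; applying the same corollary to the PLSA on $A^*$ shows $(A,L_{\cdot}^*,L_{\prec}^*)$ is a bimodule of $l(A^*)$. Thus both bimodule requirements in Proposition-Definition~\ref{pd:matlsa} hold automatically, and $(l(A),l(A^*),L_{\cdot}^*,L_{\prec}^*,L_{\cdot}^*,L_{\prec}^*)$ is a matched pair of LSAs if and only if the four compatibility identities (\ref{eq:lsabi1})--(\ref{eq:lsabi4}) hold, read with $A_1=l(A)$, $A_2=l(A^*)$, $l_1=l_2=L_{\cdot}^*$ and $r_1=r_2=L_{\prec}^*$. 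So the proposition reduces to translating these four identities into the tensor equations (\ref{eq:1colsa1})--(\ref{eq:1colsa4}).

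The engine of the translation is a duality dictionary between the operations on $A^*$ and the cooperations $\alpha,\beta$ on $A$. Since $\prec$ and $\cdot=\prec+\succ$ on $A^*$ are induced by $\alpha^*$ and $\alpha^*+\beta^*$, one records, for $x\in A$ and $a^*,b^*\in A^*$,
\begin{equation*}
\langle L_{\prec}^*(a^*)x,b^*\rangle=-\langle a^*\otimes b^*,\alpha(x)\rangle,\qquad \langle L_{\cdot}^*(a^*)x,b^*\rangle=-\langle a^*\otimes b^*,(\alpha+\beta)(x)\rangle,
\end{equation*}
whereas the actions $L_{\cdot}^*(x)$ and $L_{\prec}^*(x)$ of $x\in A$ on $A^*$ are dualized through the operations $\cdot$ and $\prec$ of $A$ itself, and it is these that generate the operators $L_{\cdot}(x),R_{\cdot}(y),L_{\succ}(x),L_{\prec}(y),R_{\prec}(y)$ appearing on the right-hand sides. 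With this dictionary I would treat each of (\ref{eq:lsabi1})--(\ref{eq:lsabi4}) as an equality in $A$ (for the first two) or in $A^*$ (for the last two), pair it against one further test element, and then strip all free arguments in $A^*$. Because each identity is demanded for \emph{all} dual arguments, the pairing-and-stripping is an equivalence, which is what produces the ``if and only if'' in both directions at once.

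The left-hand sides then fall out immediately and fix the one-to-one correspondence: $L_{\prec}^*(a^*)[x,y]$ yields $\alpha([x,y])$, the left side of (\ref{eq:1colsa1}); $L_{\cdot}^*(a^*)(x\cdot y)$ yields $(\alpha+\beta)(x\cdot y)$, the left sides of (\ref{eq:1colsa2}) and (\ref{eq:1colsa4}); and $L_{\prec}^*(x)[a,b]_2$, using $[a,b]_2=a\cdot_2 b-b\cdot_2 a$, yields $(\alpha+\beta-\sigma\alpha-\sigma\beta)(x\prec y)$, the left side of (\ref{eq:1colsa3}). The hard part will be the bookkeeping on the right-hand sides: each composite action such as $r_2(l_1(y)a^*)x$ or $r_1(r_2(b)x)a$ must be unwound through the dictionary and its output placed into the correct tensor factor with the correct sign. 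The exchange operator $\sigma$ is exactly where I expect the delicacy, and it enters only in (\ref{eq:1colsa3}) and (\ref{eq:1colsa4}), where the two free arguments are covectors $a,b\in A^*$: because the product on $A^*$ is noncommutative, the term $a\cdot_2 b$ deposits $a,b$ in slots $(1,2)$ while $b\cdot_2 a$ (from the bracket $[a,b]_2$ or from a swapped composite) deposits them in $(2,1)$, so relabelling to a common order converts the latter into $\sigma\alpha,\sigma\beta$. Along the way I would invoke the commutativity of $\prec$ together with the LSA and PLSA axioms on both $A$ and $A^*$ to cancel redundant terms, so that the four stripped identities collapse precisely to (\ref{eq:1colsa1})--(\ref{eq:1colsa4}) in the order (\ref{eq:lsabi1})$\leftrightarrow$(\ref{eq:1colsa1}), (\ref{eq:lsabi2})$\leftrightarrow$(\ref{eq:1colsa2}), (\ref{eq:lsabi3})$\leftrightarrow$(\ref{eq:1colsa3}), (\ref{eq:lsabi4})$\leftrightarrow$(\ref{eq:1colsa4}).
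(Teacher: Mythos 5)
Your proposal follows the paper's proof exactly: both dispose of the bimodule hypotheses via Corollary~\ref{co:bimolsadu} (applied to $A$ and to $A^*$), reduce via Proposition-Definition~\ref{pd:matlsa} to the four compatibility identities (\ref{eq:lsabi1})--(\ref{eq:lsabi4}), and convert each into (\ref{eq:1colsa1})--(\ref{eq:1colsa4}) respectively by pairing against a test element and dualizing, with the same matching of left-hand sides (the paper writes out only the case (\ref{eq:lsabi3})$\Leftrightarrow$(\ref{eq:1colsa3}) explicitly). The only small inaccuracy is your remark that the PLSA axioms are needed to ``cancel redundant terms'': the translation is a purely formal dualization using only $\cdot=\prec+\succ$ and the definitions of $L^*_{\cdot}$, $L^*_{\prec}$, so no axioms enter at that stage.
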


\begin{proof}
By Corollary~\ref{co:bimolsadu} and
Proposition-Definition~\ref{pd:matlsa}, we need to prove that
Eq.~(\ref{eq:lsabi1})-Eq.~(\ref{eq:lsabi4}) are equivalent to
Eq.~(\ref{eq:1colsa1})-Eq.~(\ref{eq:1colsa4}) respectively
 in the case that
 $$A_1=A,\; A_2=A^*,\; {\rm and}\;
 l_1=L_{\cdot}^*,\; r_1=L_{\prec}^*,\; l_2=L_{\cdot}^*,\; r_2=L_{\prec}^*.$$
As an example we give an explicit proof of the fact that
Eq.~(\ref{eq:lsabi3}) holds if and only if Eq.~(\ref{eq:1colsa3})
holds. In fact, in this case, Eq.~(\ref{eq:lsabi3}) becomes
$$L_{\prec}^*(x)[a^*,b^*]=L_{\prec}^*(L_{\cdot}^*(b^*)x)a^*-L_{\prec}^*(L_{\cdot}^*(a^*)x)b^*+
a^*\cdot(L_{\prec}^*(x)b^*)-b^*\cdot(L_{\prec}^*(x)a^*),$$ for any
$x\in A^*,a^*,b^*\in A^*$. Let the both sides of the above equation
act on an arbitrary element $y\in A$. Then we get
\begin{eqnarray*}
-\langle[a^*,b^*],x\prec y\rangle&=&-\langle
b^*\cdot(R_{\prec}^*(y)a^*),x\rangle +\langle
a^*\cdot(R_{\prec}^*(y)b^*),x\rangle+\langle
a^*\cdot(L_{\prec}^*(x)b^*),y\rangle\\ & &-\langle
b^*\cdot(L_{\prec}^*(x)a^*),y\rangle,
\end{eqnarray*}
which is equivalent to the following equation
\begin{eqnarray*}
& &\langle a^*\otimes
b^*,(\alpha+\beta-\sigma\alpha-\sigma\beta)(x\prec
y)\rangle\\&=&\langle-(R_{\prec}(y)\otimes{\rm
id})(\sigma\alpha+\sigma\beta)(x)+({\rm id}\otimes
R_{\prec}(y))(\alpha+\beta)(x)+({\rm id}\otimes
L_{\prec}(x))(\alpha+\beta)(y)\\ & &-(L_{\prec}(x)\otimes{\rm
id})(\sigma\alpha+\sigma\beta)(y),a^*\otimes b^*\rangle.
\end{eqnarray*}
It exactly gives Eq.~(\ref{eq:1colsa3}).
\end{proof}

\begin{defn}{\rm
\begin{enumerate}
\item
Let $V$ be a vector space and $\alpha,\beta:V\to V\otimes V$ be two
linear maps. Then $(V,\alpha,\beta)$ is called a {\it
post-left-symmetric coalgebra $($PLSCA$)$} if $R_1,R_2$ and $R_3$
are all zero, where  $R_i:V\to V\otimes V\otimes V$ ($i=1,2,3$) are
three linear maps defined as follows (for any $x\in V$):
\begin{equation}
R_1(x)=\alpha(x)-\sigma\alpha(x),\label{eq:der1}
\end{equation}
\begin{equation}
R_2(x)=({\rm id}\otimes\alpha)\beta(x)-((\alpha+\beta)\otimes{\rm
id})\alpha(x)-(\sigma\otimes{\rm id})({\rm
id}\otimes(\alpha+\beta))\alpha(x),\label{eq:der2}
\end{equation}
\begin{equation}
R_3(x)=(\beta\otimes{\rm id})\beta(x)-(\sigma\otimes{\rm
id})(\beta\otimes{\rm id})\beta(x)-({\rm
id}\otimes\beta)\beta(x)+(\sigma\otimes{\rm id})({\rm
id}\otimes\beta)\beta(x).\label{eq:der3}
\end{equation}
It is obvious that $(V,\alpha,\beta)$ is a PLSCA if and only if
$(\alpha^*,\beta^*)$ induces a PLSA structure on $V^*$.
\item
Let  $(A,\prec,\succ,\alpha,\beta)$ be a PLSA with two linear maps
$\alpha,\beta:A\to A\otimes A$ such that $(A,\alpha,\beta)$ is a
PLSCA. If in addition, $\alpha$ and $\beta$ satisfy
Eq.~(\ref{eq:1colsa1})-Eq.~(\ref{eq:1colsa4}), then
$(A,\prec,\succ,\alpha,\beta)$ is called a {\it post-left-symmetric
bialgebra $($PLSBA$)$}.
\end{enumerate}
}\end{defn}

PLSBAs can be put into the framework of the {\it generalized
bialgebras} in the sense of Loday~(\cite{Lo}).

\begin{defn}
{\rm A post-left-symmetric bialgebra $(A,\prec,\succ,\alpha,\beta)$
is called {\it coboundary} if $\alpha$ and $\beta$ are given by the
following equations:
\begin{equation}
\alpha(x)=({\rm id}\otimes L_{\cdot}(x)+L_{\cdot}(x)\otimes{\rm
id})r,\label{eq:coboundary1}
\end{equation}
\begin{equation}
\beta(x)=(-{\rm id}\otimes{\rm ad}(x)-L_{\succ}(x)\otimes{\rm
id})r,\label{eq:coboundary2}
\end{equation}
where $x\in A, r\in A\otimes A$.}
\end{defn}

\begin{defn}
{\rm Let $(A,\prec,\succ,\alpha,\beta)$ be a coboundary
post-left-symmetric bialgebra. Then by Proposition~\ref{pp:startpo}
and Proposition~\ref{pp:mato1co}, there exists a double extension of
special symplectic Lie algebras
$(\frak{g}(A)\bowtie\frak{g}(A^*),\omega_p)$ associated to
$(\frak{g}(A),\cdot)$ and $(\frak{g}(A^*),\cdot)$, where
$\alpha^*,\beta^*:A^*\otimes A^*\to A^*$ induce a PLSA structure on
$A^*$ and we still denote the LSA structure of $l(A^*)$ by $\cdot$.
In this case, we say $(\frak{g}(A)\bowtie\frak{g}(A^*),\omega_p)$ is
a {\it double extension of special symplectic Lie algebras on $A$}.}
\label{de:douexon}
\end{defn}

\begin{prop}
Let $(A,\prec,\succ)$ be a PLSA. Let $\alpha,\beta:A\to A\otimes A$
be two linear maps defined by Eq.~$($\ref{eq:coboundary1}$)$ and
Eq.~$($\ref{eq:coboundary2}$)$ respectively. Then $\alpha$ and
$\beta$ satisfy
Eq.~$($\ref{eq:1colsa1}$)$-Eq.~$($\ref{eq:1colsa4}$)$ if and only if
the following two equations are satisfied:
\begin{equation}
(L_{\prec}(x\prec y)\otimes{\rm id}+{\rm id}\otimes L_{\prec}(x\prec
y)-L_{\prec}(y)\otimes L_{\prec}(x)-L_{\prec}(x)\otimes
L_{\prec}(y))(r-\sigma(r))=0.\label{eq:lecolsa1}
\end{equation}
\begin{equation}
(R_{\prec}(y)\otimes{\rm id})({\rm id}\otimes
L_{\cdot}(x)+L_{\cdot}(x)\otimes{\rm
id})(r-\sigma(r))=0.\label{eq:lecolsa2}
\end{equation}
for any $x,y\in A$.\label{pp:lecolsa}
\end{prop}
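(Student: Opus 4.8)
The plan is to substitute the coboundary expressions (\ref{eq:coboundary1}) and (\ref{eq:coboundary2}) into each of Eqs.~(\ref{eq:1colsa1})--(\ref{eq:1colsa4}) and to reduce every one of them, via the structure identities of the PLSA, either to a tautology or to one of (\ref{eq:lecolsa1}), (\ref{eq:lecolsa2}). Before starting I would record the elementary identities that drive all the cancellations: commutativity of $\prec$ gives $L_\prec=R_\prec$; by definition $L_\cdot=L_\succ+L_\prec$, $R_\cdot=R_\succ+R_\prec$ and ${\rm ad}(x)=L_\cdot(x)-R_\cdot(x)$; by Proposition-Definition~\ref{pp:burde} applied to $(A,\cdot)$ and to $(A,\succ)$ both $L_\cdot$ and $L_\succ$ are representations of the common sub-adjacent Lie algebra $\frak{g}(A)$ (Proposition~\ref{pp:pplsa}); and the LSA axiom (\ref{eq:axiomlsa}) together with the compatibility (\ref{eq:compatible}) will be used to merge ``$\succ$'' and ``$\prec$'' contributions into ``$\cdot$'' contributions. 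I would also note the two simplifications $\sigma\alpha(x)=({\rm id}\otimes L_\cdot(x)+L_\cdot(x)\otimes{\rm id})\sigma(r)$, so that $\alpha(x)-\sigma\alpha(x)=({\rm id}\otimes L_\cdot(x)+L_\cdot(x)\otimes{\rm id})(r-\sigma(r))$ is precisely the vector appearing in (\ref{eq:lecolsa2}), and $(\alpha+\beta)(x)=(L_\prec(x)\otimes{\rm id}+{\rm id}\otimes R_\cdot(x))r$, which shows that (\ref{eq:1colsa3}) depends on $r$ only through $\alpha+\beta$.

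First I would dispose of (\ref{eq:1colsa1}). Setting $\rho(x)={\rm id}\otimes L_\cdot(x)+L_\cdot(x)\otimes{\rm id}$, which is a representation of $\frak{g}(A)$ on $A\otimes A$ because $L_\cdot$ is, the map $\alpha(x)=\rho(x)r$ is a $1$-coboundary, and the required identity $\alpha([x,y])=\rho(x)\alpha(y)-\rho(y)\alpha(x)$ is automatic from $\rho([x,y])=[\rho(x),\rho(y)]$. Thus (\ref{eq:1colsa1}) holds for every $r$ and imposes no condition.

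Next I would compare (\ref{eq:1colsa2}) and (\ref{eq:1colsa4}): they share the same left-hand side and the same first two terms on the right, differing only in the last term. Using $L_\prec=R_\prec$ and the $\sigma$-identity for $\alpha$ above, the difference of the two last terms is exactly $-(R_\prec(y)\otimes{\rm id})(\alpha(x)-\sigma\alpha(x))$, which is (up to sign) the left-hand side of (\ref{eq:lecolsa2}). Hence (\ref{eq:lecolsa2}) holds if and only if (\ref{eq:1colsa2}) and (\ref{eq:1colsa4}) are equivalent, after which it suffices to keep one of them, say (\ref{eq:1colsa2}).

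It remains to substitute the coboundary formulas into (\ref{eq:1colsa2}) and (\ref{eq:1colsa3}) and to show that, modulo the identities above, both become equivalent to (\ref{eq:lecolsa1}). This bookkeeping is the main obstacle: each expansion produces many terms of the form $(P\otimes Q)r$ with $P,Q\in\{L_\cdot,L_\succ,L_\prec,R_\cdot,R_\prec,{\rm ad}\}$, and one must check that every term acting on $r$ itself cancels, leaving only the $\sigma$-symmetric operator of (\ref{eq:lecolsa1}) acting on the antisymmetric part $r-\sigma(r)$. The delicate step is grouping the mixed $L_\succ$/$L_\prec$/$R_\cdot$ contributions so that the compatibility condition (\ref{eq:compatible}) (and the LSA axiom for $\cdot$) can be applied to collapse the $\succ$- and $\prec$-pieces; the antisymmetrized left-hand side $(1-\sigma)(\alpha+\beta)(x\prec y)$ of (\ref{eq:1colsa3}) is what produces the term $L_\prec(x\prec y)\otimes{\rm id}+{\rm id}\otimes L_\prec(x\prec y)$ in (\ref{eq:lecolsa1}). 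Since each of these reductions is an equivalence, the converse direction follows at once, establishing the stated ``if and only if''.
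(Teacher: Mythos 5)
Your overall architecture is close to the paper's, and two of your observations are genuinely good: Eq.~(\ref{eq:1colsa1}) is indeed automatic because $\alpha$ is a coboundary for the tensor-product representation built from $L_{\cdot}$, and your computation that the right-hand sides of Eq.~(\ref{eq:1colsa2}) and Eq.~(\ref{eq:1colsa4}) differ exactly by $(R_{\prec}(y)\otimes{\rm id})({\rm id}\otimes L_{\cdot}(x)+L_{\cdot}(x)\otimes{\rm id})(r-\sigma(r))$ is correct and gives a cleaner route to the equivalence of Eq.~(\ref{eq:1colsa4}) with Eq.~(\ref{eq:lecolsa2}) than the paper's ``similarly''. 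But there is a concrete error in the remaining step: you assert that Eq.~(\ref{eq:1colsa2}) reduces to Eq.~(\ref{eq:lecolsa1}). It does not; it is an identity, valid for every $r$. Indeed $(\alpha+\beta)(x)=(L_{\prec}(x)\otimes{\rm id}+{\rm id}\otimes R_{\cdot}(x))r$, and on substituting this into Eq.~(\ref{eq:1colsa2}) the mixed terms cancel and the remaining operators match the left-hand side because of the two operator identities $L_{\prec}(x\cdot y)=L_{\succ}(x)L_{\prec}(y)-L_{\prec}(y)L_{\cdot}(x)$ (a restatement of Eq.~(\ref{eq:compatible})) and $R_{\cdot}(x\cdot y)=L_{\cdot}(x)R_{\cdot}(y)-R_{\cdot}(y)\,{\rm ad}(x)$ (a restatement of Eq.~(\ref{eq:axiomlsa})). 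This is what the paper means by saying it is obvious that $\alpha$ and $\beta$ satisfy Eq.~(\ref{eq:1colsa1}) and Eq.~(\ref{eq:1colsa2}). Had your claim been correct, Eq.~(\ref{eq:lecolsa1}) would hold for every $r$ and the proposition would be listing a vacuous condition, so this is not a cosmetic slip.

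The second, larger gap is that the one equivalence that actually produces Eq.~(\ref{eq:lecolsa1}), namely Eq.~(\ref{eq:1colsa3}) $\Leftrightarrow$ Eq.~(\ref{eq:lecolsa1}), is never carried out: you describe it as ``bookkeeping'' and ``the main obstacle'' and stop there. That substitution is the entire content of the proof --- the paper devotes its whole displayed computation to it --- and it is not a formality: one must expand $(1-\sigma)(\alpha+\beta)(x\prec y)$ and the four terms on the right of Eq.~(\ref{eq:1colsa3}), separate the part acting on $r-\sigma(r)$ from the part acting on $r$, and verify that the latter cancels identically by the PLSA axioms, leaving exactly the operator of Eq.~(\ref{eq:lecolsa1}) applied to $r-\sigma(r)$. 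Until that computation is done, and the misattribution above is corrected so that Eq.~(\ref{eq:1colsa2}) is recognized as a tautology and Eq.~(\ref{eq:1colsa3}) as the source of Eq.~(\ref{eq:lecolsa1}), the argument remains a plan rather than a proof.
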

\begin{proof}
It is obvious that $\alpha$ and $\beta$ satisfy
Eq.~(\ref{eq:1colsa1}) and Eq.~(\ref{eq:1colsa2}). Moreover,
substituting Eq.~(\ref{eq:coboundary1}) and
Eq.~(\ref{eq:coboundary2}) into Eq.~(\ref{eq:1colsa3}), we get (for
any $x,y\in A$)
\begin{eqnarray*}
& &({\rm id}\otimes R_{\cdot}(x\prec y)+L_{\prec}(x\prec
y)\otimes{\rm id})r-(R_{\cdot}(x\prec y)\otimes{\rm id}+{\rm
id}\otimes
L_{\prec}(x\prec y))\sigma(r)\\
&=&-(R_{\prec}(y)\otimes{\rm id})(R_{\cdot}(x)\otimes{\rm id}+{\rm
id}\otimes L_{\prec}(x))\sigma(r)+({\rm id}\otimes
R_{\prec}(y))({\rm id}\otimes R_{\cdot}(x)+L_{\prec}(x)\otimes{\rm
id})r\\ & &+({\rm id}\otimes L_{\prec}(x))({\rm id}\otimes
R_{\cdot}(y)+L_{\prec}(y)\otimes{\rm id})r-(L_{\prec}(x)\otimes{\rm
id})(R_{\cdot}(y)\otimes{\rm id}+{\rm id}\otimes
L_{\prec}(y))\sigma(r),
\end{eqnarray*}
which is equivalent to
\begin{eqnarray*}
& &({\rm id}\otimes R_{\cdot}(x\prec y)+L_{\prec}(x\prec
y)\otimes{\rm id}-R_{\cdot}(x\prec y)\otimes{\rm id}-{\rm id}\otimes
L_{\prec}(x\prec y))r+\\ & &(R_{\cdot}(x\prec y)\otimes{\rm id}+{\rm
id}\otimes L_{\prec}(x\prec y))(r-\sigma(r))\\
&=&(R_{\prec}(y)\otimes{\rm id})(R_{\cdot}(x)\otimes{\rm id}+{\rm
id}\otimes L_{\prec}(x))(r-\sigma(r))-(R_{\prec}(y)\otimes{\rm
id})(R_{\cdot}(x)\otimes{\rm id}+\\ & &{\rm id}\otimes
L_{\prec}(x))r+({\rm id}\otimes R_{\prec}(y))({\rm id}\otimes
R_{\cdot}(x)+L_{\prec}(x)\otimes{\rm id})r+({\rm id}\otimes
L_{\prec}(x))({\rm id}\otimes R_{\cdot}(y)\\
& &+L_{\prec}(y)\otimes{\rm id})r-(L_{\prec}(x)\otimes{\rm
id})(R_{\cdot}(y)\otimes{\rm id}+{\rm id}\otimes
L_{\prec}(y))r+(L_{\prec}(x)\otimes{\rm id})(R_{\cdot}(y)\otimes{\rm
id}\\ & &+{\rm id}\otimes L_{\prec}(y))(r-\sigma(r)).
\end{eqnarray*}
It is easy to show that the above equation holds if and only if
Eq.~(\ref{eq:lecolsa1}) holds. So $\alpha$ and $\beta$ satisfy
Eq.~(\ref{eq:1colsa3}) if and only if Eq.~(\ref{eq:lecolsa1}) holds.
Similarly, we  show that $\alpha$ and $\beta$ satisfy
Eq.~(\ref{eq:1colsa4}) if and only if Eq.~(\ref{eq:lecolsa2}) holds.
\end{proof}

Let $V$ be a vector space and $r=\sum_ia_i\otimes b_i\in A\otimes
A$. Set
\begin{equation}
r_{12}=\sum_ia_i\otimes b_i\otimes 1,\quad r_{13}=\sum_{i}a_i\otimes
1\otimes b_i,\quad r_{23}=\sum_i1\otimes a_i\otimes b_i,
\end{equation}
where $1$ is a scale. If in addition, there exists a binary
operation $\diamond: V\otimes V\rightarrow V$ on $V$, then the
operation between two $r$s is in an obvious way. For example,
\begin{equation}
r_{12}\diamond r_{13}=\sum_{i,j}a_i\diamond a_j\otimes b_i\otimes
b_j,\; r_{13}\diamond r_{23}=\sum_{i,j}a_i\otimes a_j\otimes
b_i\diamond b_j,\;r_{12}\diamond r_{23}=\sum_{i,j}a_i\otimes
b_i\diamond  a_j\otimes b_j.
\end{equation}

\begin{prop}
Let $(A,\prec,\succ)$ be a PLSA and $r=\sum_{i}a_i\otimes b_i\in
A\otimes A$. Let $\alpha,\beta:A\to A\otimes A$ be two linear maps
defined by Eq.~$($\ref{eq:coboundary1}$)$ and
Eq.~$($\ref{eq:coboundary2}$)$ respectively. Define three linear
maps $R_1,R_2,R_3:A\to A\otimes A\otimes A$ by
Eq.~$($\ref{eq:der1}$)$, Eq.~$($\ref{eq:der2}$)$ and
Eq.~$($\ref{eq:der3}$)$ respectively. Then for any $x,y\in A$,
\begin{equation}
R_1(x)=(L_{\cdot}(x)\otimes{\rm id}+{\rm id}\otimes
L_{\cdot}(x))(r-\sigma(r)),\label{eq:tranlsaco1}
\end{equation}
\begin{equation}
R_2(x)=-Q_1(x)[[r,r]]_1+\sum_jP_1(x,a_j)(r-\sigma(r))\otimes
b_j,\label{eq:tranlsaco2}
\end{equation}
\begin{equation}
R_3(x)=Q_2(x)[[r,r]]_2+\sum_jP_2(x,a_j)(r-\sigma(r))\otimes
b_j+\sum_jS(a_j)(r-\sigma(r))\otimes [x,b_j],\label{eq:tranlsaco3}
\end{equation}
where
\begin{equation}
Q_1(x)=L_{\succ}(x)\otimes{\rm id}\otimes{\rm id}+{\rm id}\otimes
L_{\cdot}(x)\otimes{\rm id}+{\rm id}\otimes{\rm id}\otimes
L_{\cdot}(x),
\end{equation}
\begin{equation}
Q_2(x)=L_{\succ}(x)\otimes{\rm id}\otimes{\rm id}+{\rm id}\otimes
L_{\succ}(x)\otimes{\rm id}+{\rm id}\otimes{\rm id}\otimes {\rm
ad}(x),
\end{equation}
\begin{equation}
S(x)={\rm ad}(x)\otimes{\rm id}+{\rm id}\otimes L_{\succ}(x),
\end{equation}
\begin{equation}
P_1(x,y)=(R_{\prec}(y)\otimes{\rm id})(L_{\cdot}(x)\otimes{\rm
id}+{\rm id}\otimes L_{\cdot}(x)),
\end{equation}
\begin{equation}
P_2(x,y)={\rm ad}(x\succ y)\otimes{\rm id}+{\rm id}\otimes
L_{\succ}(x\succ y)-(R_{\succ}(y)\otimes{\rm id})({\rm
ad}(x)\otimes{\rm id}+{\rm id}\otimes L_{\succ}(x)),
\end{equation}
\begin{equation}
[[r,r]]_1=r_{13}\cdot r_{23}+r_{12}\cdot r_{23}+r_{12}\prec r_{13},
\end{equation}
\begin{equation}
[[r,r]]_2=r_{12}\succ r_{13}-r_{12}\succ r_{23}-[r_{13},r_{23}].
\end{equation}
\label{pp:tranlsaco}
\end{prop}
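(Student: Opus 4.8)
The three identities are proved by direct substitution of the coboundary formulas Eq.~(\ref{eq:coboundary1}) and Eq.~(\ref{eq:coboundary2}) for $\alpha$ and $\beta$ into the definitions Eq.~(\ref{eq:der1})--Eq.~(\ref{eq:der3}) of $R_1,R_2,R_3$, followed by a reorganization of the resulting tensor expressions using the structural identities of the PLSA. The tools I will use repeatedly are the left-symmetry of $\cdot$ and of $\succ$ in operator form, $L_{\cdot}([x,y])=L_{\cdot}(x)L_{\cdot}(y)-L_{\cdot}(y)L_{\cdot}(x)$ and $L_{\succ}([x,y])=L_{\succ}(x)L_{\succ}(y)-L_{\succ}(y)L_{\succ}(x)$ (Proposition-Definition~\ref{pp:burde} and Proposition~\ref{pp:pplsa}, the two structures sharing the same sub-adjacent bracket), the commutativity of $\prec$, the compatibility relation Eq.~(\ref{eq:compatible}), and the decompositions $L_{\cdot}=L_{\prec}+L_{\succ}$ and ${\rm ad}=L_{\cdot}-R_{\cdot}$. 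At the level of the flip I will use the conjugation rule $\sigma(P\otimes Q)=(Q\otimes P)\sigma$ for operators $P,Q$ on $A$.

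The identity for $R_1$ is immediate: since $\alpha(x)=({\rm id}\otimes L_{\cdot}(x)+L_{\cdot}(x)\otimes{\rm id})r$, the conjugation rule gives $\sigma\alpha(x)=(L_{\cdot}(x)\otimes{\rm id}+{\rm id}\otimes L_{\cdot}(x))\sigma(r)$, whence $R_1(x)=\alpha(x)-\sigma\alpha(x)=(L_{\cdot}(x)\otimes{\rm id}+{\rm id}\otimes L_{\cdot}(x))(r-\sigma(r))$, which is Eq.~(\ref{eq:tranlsaco1}).

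For $R_2$ and $R_3$ the method is the same but the bookkeeping is heavy. I would expand each composite $({\rm id}\otimes\alpha)\beta(x)$, $((\alpha+\beta)\otimes{\rm id})\alpha(x)$, $(\sigma\otimes{\rm id})({\rm id}\otimes(\alpha+\beta))\alpha(x)$ for $R_2$, and the four $\beta$-composites for $R_3$, into double sums $\sum_{i,j}(\cdots)\otimes(\cdots)\otimes(\cdots)$ over the two copies of $r$, with the operators $L_{\cdot}(x)$, $L_{\succ}(x)$, ${\rm ad}(x)$ and the products $a_i\cdot a_j$, $a_i\succ a_j$, $a_i\prec a_j$, $[a_i,a_j]$ distributed among the three slots. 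The terms then split into two groups. In the first, $x$ acts together with one copy of $r$ while the other copy enters only through $r-\sigma(r)$; using $L_{\cdot}=L_{\prec}+L_{\succ}$ and the conjugation rule these assemble into $\sum_j P_1(x,a_j)(r-\sigma(r))\otimes b_j$ for $R_2$, and into $\sum_j P_2(x,a_j)(r-\sigma(r))\otimes b_j+\sum_j S(a_j)(r-\sigma(r))\otimes[x,b_j]$ for $R_3$, the last summand arising from the ${\rm ad}(x)$-slot of $\beta$. In the second group the $x$-operators factor out to the left as $Q_1(x)$ (respectively $Q_2(x)$), leaving a purely quadratic-in-$r$ expression that the left-symmetry and compatibility identities collapse into $[[r,r]]_1$ (respectively $[[r,r]]_2$); matching signs then yields Eq.~(\ref{eq:tranlsaco2}) and Eq.~(\ref{eq:tranlsaco3}).

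I expect the reorganization of $R_3$ to be the main obstacle. Unlike $R_2$, whose quadratic part is governed by the single associator-type combination $[[r,r]]_1=r_{13}\cdot r_{23}+r_{12}\cdot r_{23}+r_{12}\prec r_{13}$, the expression for $R_3$ mixes $\succ$, $\cdot$ and the Lie bracket across all three slots, so recognizing $[[r,r]]_2=r_{12}\succ r_{13}-r_{12}\succ r_{23}-[r_{13},r_{23}]$ requires several careful applications of the left-symmetry of $\succ$ together with Eq.~(\ref{eq:compatible}), and keeping the common Lie bracket $[x,y]=x\cdot y-y\cdot x$ consistent between the two LSA structures $\cdot$ and $\succ$ is what makes the cancellations close.
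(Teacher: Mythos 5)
Your proposal is correct and follows essentially the same route as the paper: substitute the coboundary formulas into the definitions of $R_1,R_2,R_3$, expand into double sums over the two copies of $r$, and regroup the terms into $-Q_1(x)[[r,r]]_1$ (resp.\ $Q_2(x)[[r,r]]_2$) plus the correction terms in $r-\sigma(r)$, using the commutativity of $\prec$, the compatibility relation Eq.~(\ref{eq:compatible}) and the left-symmetry identities exactly as you list them. The paper likewise treats $R_1$ as immediate, carries out the bookkeeping explicitly only for $R_2$ (splitting it into three blocks according to which tensor slot $L_{\succ}(x)$, $L_{\cdot}(x)$ acts on), and defers $R_3$ to a similar computation, consistent with your remark that $R_3$ is the heaviest case.
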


\begin{proof}
Eq.~(\ref{eq:tranlsaco1}) is obvious. We give an explicit proof of
Eq.~(\ref{eq:tranlsaco2}). The proof of Eq.~(\ref{eq:tranlsaco3}) is
similar (cf.~\cite{Bai2}). In fact, for any $x\in A$, after
rearranging the terms suitably, we can divide $R_2(x)$ into three
parts: $R_2(x)=(R1)+(R2)+(R3)$, where
\begin{eqnarray*}
(R1)&=&\sum_{i,j}-x\succ a_j\otimes b_j\cdot a_i\otimes b_i-x\succ
a_j\otimes a_i\otimes b_j\cdot b_i-(x\cdot a_j)\prec a_i\otimes
b_i\otimes b_j-\\ & &(x\cdot b_j)\prec a_i\otimes a_j\otimes b_i\\
&=&-(L_{\succ}(x)\otimes{\rm id}\otimes{\rm
id})[[r,r]]_1+\sum_j(R_{\prec}(a_j)\otimes{\rm
id})(L_{\cdot}(x)\otimes{\rm id})(r-\sigma(r))\otimes b_j,\\
(R2)&=&\sum_{i,j}-a_j\otimes[x,b_j]\cdot a_i\otimes b_i-a_i\otimes
b_i\cdot(x\cdot a_j)\otimes b_j-b_j\prec a_i\otimes x\cdot
a_j\otimes b_i-\\
& &a_i\otimes x\cdot a_j\otimes b_i\cdot b_j
\\ &=&-({\rm id}\otimes L_{\cdot}(x)\otimes{\rm
id})[[r,r]]_1+\sum_j(R_{\prec}(a_j)\otimes
L_{\cdot}(x))(r-\sigma(r))\otimes b_j,\\
(R3)&=&\sum_{i,j}-a_j\otimes a_i\otimes[x,b_j]\cdot b_i-a_j\prec
a_i\otimes b_i\otimes x\cdot b_j-a_i\otimes b_i\cdot a_j\otimes
x\cdot b_j-\\ & &a_i\otimes a_j\otimes b_i\cdot(x\cdot b_j)
\\&=&-({\rm id}\otimes{\rm id}\otimes
L_{\cdot}(x))[[r,r]]_1.
\end{eqnarray*}
So the conclusion follows.
\end{proof}

By Proposition~\ref{pp:lecolsa} and Proposition~\ref{pp:tranlsaco}
we have the following conclusion.

\begin{theorem}
Let $(A,\prec,\succ)$ be a PLSA and $r\in A\otimes A$. Then the
linear maps $\alpha,\beta$ defined by Eq.~$($\ref{eq:coboundary1}$)$
and Eq.~$($\ref{eq:coboundary2}$)$ respectively induce a
post-left-symmetric coalgebra structure on $A$ such that
$(A,\prec,\succ,\alpha,\beta)$ becomes a post-left-symmetric
bialgebra if and only if $r$ satisfies Eq.~$($\ref{eq:lecolsa1}$)$
and Eq.~$($\ref{eq:lecolsa2}$)$ and $R_1,R_2$ and $R_3$ are zero,
where $R_1,R_2$ and $R_3$ are given by
Eq.~$($\ref{eq:tranlsaco1}$)$, Eq.~$($\ref{eq:tranlsaco2}$)$ and
Eq.~$($\ref{eq:tranlsaco3}$)$ respectively.\label{thm:lsaconclusion}
\end{theorem}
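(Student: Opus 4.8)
The plan is to obtain this theorem directly as the synthesis of the two preceding propositions, since it is essentially a bookkeeping combination of them rather than a fresh computation. First I would unwind the definition: by the definition of a post-left-symmetric bialgebra, $(A,\prec,\succ,\alpha,\beta)$ is a PLSBA precisely when $(A,\prec,\succ)$ is a PLSA (which is assumed here), $(A,\alpha,\beta)$ is a post-left-symmetric coalgebra, and $\alpha,\beta$ satisfy Eq.~(\ref{eq:1colsa1})--Eq.~(\ref{eq:1colsa4}). Thus it suffices to translate each of the latter two requirements into an equivalent condition on $r$.

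For the coalgebra requirement, I would recall that $(A,\alpha,\beta)$ is a PLSCA exactly when the three maps $R_1,R_2,R_3$ defined by Eq.~(\ref{eq:der1})--Eq.~(\ref{eq:der3}) all vanish. Since here $\alpha$ and $\beta$ are prescribed by the coboundary formulas Eq.~(\ref{eq:coboundary1}) and Eq.~(\ref{eq:coboundary2}), Proposition~\ref{pp:tranlsaco} rewrites exactly these $R_1,R_2,R_3$ in the explicit forms Eq.~(\ref{eq:tranlsaco1})--Eq.~(\ref{eq:tranlsaco3}) as functions of $r$. Hence $(A,\alpha,\beta)$ is a PLSCA if and only if the right-hand sides of Eq.~(\ref{eq:tranlsaco1})--Eq.~(\ref{eq:tranlsaco3}) are zero, i.e. $R_1=R_2=R_3=0$ in the sense of the theorem statement.

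For the compatibility requirement, Proposition~\ref{pp:lecolsa} shows that, under the same coboundary ansatz, $\alpha$ and $\beta$ satisfy Eq.~(\ref{eq:1colsa1})--Eq.~(\ref{eq:1colsa4}) if and only if $r$ satisfies Eq.~(\ref{eq:lecolsa1}) and Eq.~(\ref{eq:lecolsa2}). Combining the two equivalences then yields that $(A,\prec,\succ,\alpha,\beta)$ is a PLSBA if and only if $R_1=R_2=R_3=0$ together with Eq.~(\ref{eq:lecolsa1}) and Eq.~(\ref{eq:lecolsa2}), which is precisely the assertion.

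Since all the genuine computation has already been carried out in Propositions~\ref{pp:lecolsa} and \ref{pp:tranlsaco}, I do not expect a substantive obstacle. The only point requiring care is the identification step: one must verify that the maps $R_1,R_2,R_3$ entering the coalgebra axioms via Eq.~(\ref{eq:der1})--Eq.~(\ref{eq:der3}) are literally the same maps that Proposition~\ref{pp:tranlsaco} re-expresses in terms of $r$, so that the vanishing conditions "$R_1,R_2,R_3$ are zero" in the theorem statement coincide with the defining PLSCA conditions. Once this matching is made explicit, the biconditional follows by conjoining the two prior equivalences with no further calculation.
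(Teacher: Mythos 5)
Your proposal is correct and matches the paper's own treatment: the paper proves this theorem simply by invoking Proposition~\ref{pp:lecolsa} and Proposition~\ref{pp:tranlsaco}, exactly the two equivalences you conjoin. Your additional remark about checking that the $R_i$ of Eqs.~(\ref{eq:der1})--(\ref{eq:der3}) are literally the maps re-expressed in Proposition~\ref{pp:tranlsaco} is the right point of care, and it holds by construction there.
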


A direct application of Theorem~\ref{thm:lsaconclusion} is given as
follows, which is an analogue of ``Drinfeld's double"
construction~(\cite{CP}) for a post-left-symmetric bialgebra. It is
the main motivation that we are interested in developing a
(coboundary) bialgebra theory for a PLSA since it allows us to
construct an infinite families of special symplectic Lie algebras
(groups) from a ``double extension of special symplectic Lie
algebras (groups)".

\begin{theorem}
Let $(A,\prec_1,\succ_1,\alpha,\beta)$ be a post-left-symmetric
bialgebra. Then there exists a canonical coboundary
post-left-symmetric bialgebra structure on $A\oplus A^*$. So if
$(\frak{g}\bowtie\frak{g}^*,\omega_p)$ is a double extension of
special symplectic Lie algebras associated to
$(\frak{g},\circ_{\frak{g}})$ and $(\frak{g}^*,\circ_{\frak{g}^*})$,
%(see the discussion at the beginning of Section~\ref{ss:doubleex}),
then we can construct a double extension of special symplectic Lie
algebras on $\frak{g}\bowtie\frak{g}^*$ $($see
Definition~\ref{de:douexon}$)$. \label{thm:double}
\end{theorem}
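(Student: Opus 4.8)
The plan is to realize the claimed structure as the post-left-symmetric analogue of Drinfeld's double: build the coboundary data on $A\oplus A^*$ from the canonical $r$-matrix and then discharge every requirement through Theorem~\ref{thm:lsaconclusion}. Since $(A,\prec_1,\succ_1,\alpha,\beta)$ is a post-left-symmetric bialgebra, $\alpha^*,\beta^*$ make $A^*$ a PLSA and, by Proposition~\ref{pp:mato1co}, $(l(A),l(A^*),L_{\cdot}^*,L_{\prec}^*,L_{\cdot}^*,L_{\prec}^*)$ is a matched pair of LSAs. By Proposition-Definition~\ref{pd:matlsa} and Proposition~\ref{pp:startpo}, the space $A\oplus A^*$ thus acquires a PLSA structure $(\prec,\succ)$ with sub-adjacent Lie algebra $\frak{g}(A)\bowtie\frak{g}(A^*)$, and the form $\omega_p$ of Eq.~(\ref{eq:syform}) is parallel for the associated affine connection; equivalently, $(\frak{g}(A)\bowtie\frak{g}(A^*),\omega_p)$ is already a double extension of special symplectic Lie algebras. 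This is the ambient PLSA on which I would build the coboundary bialgebra.

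Next I would fix a basis $\{e_i\}$ of $A$ with dual basis $\{e^i\}$ of $A^*$ and take the canonical element $r=\sum_i e_i\otimes e^i$ in $(A\oplus A^*)^{\otimes 2}$, using it to define $\alpha_{\mathcal D},\beta_{\mathcal D}$ by Eq.~(\ref{eq:coboundary1}) and Eq.~(\ref{eq:coboundary2}). A preliminary computation I would record is that the skew part $r-\sigma(r)=\sum_i(e_i\otimes e^i-e^i\otimes e_i)$ is exactly the bivector inverse to the symplectic form $\omega_p$ on $A\oplus A^*$, while $A$ and $A^*$ are isotropic for $\omega_p$. By Theorem~\ref{thm:lsaconclusion} it then remains to verify the linear conditions Eq.~(\ref{eq:lecolsa1}) and Eq.~(\ref{eq:lecolsa2}) together with $R_1=R_2=R_3=0$, where $R_1,R_2,R_3$ are given by Eq.~(\ref{eq:tranlsaco1})--Eq.~(\ref{eq:tranlsaco3}).

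I would treat the linear requirements first. By Eq.~(\ref{eq:tranlsaco1}), $R_1(X)=(L_{\cdot}(X)\otimes{\rm id}+{\rm id}\otimes L_{\cdot}(X))(r-\sigma(r))$, which is precisely the infinitesimal invariance of the bivector $r-\sigma(r)$ under left multiplication. Under the identification of $r-\sigma(r)$ with $\omega_p^{-1}$, this invariance is equivalent to $\omega_p(X\cdot Y,Z)+\omega_p(Y,X\cdot Z)=0$, i.e. to the parallelism of $\omega_p$ established in the first step; hence $R_1=0$. The remaining linear identities Eq.~(\ref{eq:lecolsa1}) and Eq.~(\ref{eq:lecolsa2}) are further invariance statements for $r-\sigma(r)$ under products of the operators $L_{\prec},L_{\cdot}$, and I would deduce them from the same parallelism, the commutativity of $\prec$, and the isotropy of $A$ and $A^*$ (which fixes how the mixed actions $L_{\cdot}^*,L_{\prec}^*$ pair against $r$).

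The quadratic requirements $R_2=0$ and $R_3=0$ are the heart of the matter and the step I expect to be the main obstacle. By Proposition~\ref{pp:tranlsaco} they are expressed through the Yang--Baxter-type elements
\begin{equation*}
[[r,r]]_1=r_{13}\cdot r_{23}+r_{12}\cdot r_{23}+r_{12}\prec r_{13},\qquad
[[r,r]]_2=r_{12}\succ r_{13}-r_{12}\succ r_{23}-[r_{13},r_{23}],
\end{equation*}
together with residual terms carrying the factor $r-\sigma(r)$. I would expand everything in the basis $\{e_i,e^i\}$: the mixed products $e_i\cdot e^j$ in $A\oplus A^*$ are exactly the matched-pair actions $L_{\cdot}^*,L_{\prec}^*$, so each of $[[r,r]]_1$ and $[[r,r]]_2$ becomes a sum whose vanishing is forced by the matched-pair compatibilities Eq.~(\ref{eq:lsabi1})--Eq.~(\ref{eq:lsabi4}), i.e. by the hypothesis that $A$ is a PLSBA; the residual $(r-\sigma(r))$-terms are absorbed by grouping them with the invariance already established in the linear step. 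This is the pre-Lie counterpart of the classical fact that the canonical $r$-matrix of a Drinfeld double solves the Yang--Baxter equation, and the bookkeeping is delicate precisely because the linear invariance and the quadratic Yang--Baxter parts must be matched term by term rather than separated cleanly. Once this is done, $(A\oplus A^*,\prec,\succ,\alpha_{\mathcal D},\beta_{\mathcal D})$ is the desired canonical coboundary PLSBA. Finally, for the second assertion: a double extension $(\frak{g}\bowtie\frak{g}^*,\omega_p)$ associated to $(\frak{g},\circ_{\frak{g}})$ and $(\frak{g}^*,\circ_{\frak{g}^*})$ has underlying space $A\oplus A^*$ with $\frak{g}=\frak{g}(A)$, so applying Definition~\ref{de:douexon} to the coboundary PLSBA just constructed yields a double extension of special symplectic Lie algebras on $\frak{g}\bowtie\frak{g}^*$.
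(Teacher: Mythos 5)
Your proposal follows essentially the same route as the paper: take the canonical element $r=\sum_i e_i\otimes e_i^*$ on the PLSA double $A\oplus A^*$ built from Proposition~\ref{pp:startpo} and Lemma~\ref{le:productwri}, and verify the hypotheses of Theorem~\ref{thm:lsaconclusion}. One genuine improvement over the paper's argument is your observation that $r-\sigma(r)$ is the inverse bivector of $\omega_p$, so that $R_1=0$ (and hence Eq.~(\ref{eq:lecolsa2})) drops out of the parallelism $\omega_p(u\cdot v,w)+\omega_p(v,u\cdot w)=0$ rather than from a basis computation. Two cautions on the parts you leave sketched. First, the vanishing of $[[r,r]]_1$ and $[[r,r]]_2$ is not really "forced by the matched-pair compatibilities Eq.~(\ref{eq:lsabi1})--Eq.~(\ref{eq:lsabi4})"; in the paper's computation it is a pure duality cancellation coming from the explicit form of the mixed products $e_i^*\cdot e_j=-\sum_k\bigl(\langle e_j,e_i^*\cdot e_k^*\rangle e_k+\langle e_i^*,e_j\prec e_k\rangle e_k^*\bigr)$ against the canonical $r$ --- the matched-pair axioms only guarantee that the double is an LSA at all. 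Second, your ingredient list for Eq.~(\ref{eq:lecolsa1}) (parallelism, commutativity of $\prec$, isotropy) omits the input that actually closes the paper's coefficient computation, namely the PLSA compatibility Eq.~(\ref{eq:compatible}) on the double, e.g.\ $e_n^*\succ(e_i\prec e_j)=(e_n^*\cdot e_i)\prec e_j+e_i\prec(e_n^*\cdot e_j)$; with that added, your plan goes through and coincides with the paper's proof.
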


To prove this theorem, we shall use the following lemma.

\begin{lemma}
Let $(A,\prec,\succ)$ be a PLSA. Suppose that there exists a PLSA
structure on the dual space $A^*$ which is still denoted by
``$\prec,\succ$". Now we assume that there exists a double extension
of special symplectic Lie algebras
$(\frak{g}(A)\bowtie\frak{g}(A^*),\omega_p)$ associated to
$(\frak{g}(A),\cdot)$ and $(\frak{g}(A^*),\cdot)$. Then the
compatible PLSA structure on $\frak{g}(A)\bowtie\frak{g}(A^*)$
defined by Eq.~$($\ref{eq:complsa}$)$ can be given as follows:
\begin{equation}
x\prec a^*=R_{\cdot}^*(x)a^*+R_{\cdot}^*(a^*)x,\quad x\succ a^*={\rm
ad}^*(x)a^*-R_{\succ}^*(a^*)x,\label{eq:productwri1}
\end{equation}
\begin{equation}
a^*\prec x=R_{\cdot}^*(a^*)x+R_{\cdot}^*(x)a^*,\quad a^*\succ x={\rm
ad}^*(a^*)x-R_{\succ}^*(x)a^*,\label{eq:productwri2}
\end{equation}
for any $x,y\in A,a^*,b^*\in A^*$.\label{le:productwri}
\end{lemma}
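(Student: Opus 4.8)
The plan is to start from the PLSA structure that the double extension carries by Proposition~\ref{pp:complsa}, namely the structure defined by Eq.~(\ref{eq:complsa}) with respect to $\omega_p$ on $\frak{g}(A)\bowtie\frak{g}(A^*)$, and simply compute the four cross products $x\prec a^*$, $x\succ a^*$, $a^*\prec x$, $a^*\succ x$ explicitly. The key input is that the underlying LSA structure $\cdot$ on $\frak{g}(A)\bowtie\frak{g}(A^*)$ is the matched-pair structure from Proposition~\ref{pp:startpo}, given by Eq.~(\ref{eq:lsatruonmat}): $(x,a^*)\cdot(y,b^*)=(x\cdot y+L_{\cdot}^*(a^*)y+L_{\prec}^*(b^*)x,\;a^*\cdot b^*+L_{\cdot}^*(x)b^*+L_{\prec}^*(y)a^*)$. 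Everything reduces to pairing the defining relations $\omega(x\prec y,z)=-\omega(y,z\cdot x)$ and $\omega(x\succ y,z)=\omega(y,[z,x])$ against test elements and reading off the $\frak{g}(A)$- and $\frak{g}(A^*)$-components.

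First I would compute $x\prec a^*$ for $x\in A$, $a^*\in A^*$. Using $\omega_p((x,0)\prec(0,a^*),\,w)=-\omega_p((0,a^*),\,w\cdot(x,0))$ for an arbitrary test element $w=(y,b^*)$, I expand the right-hand product by Eq.~(\ref{eq:lsatruonmat}) and pair against $\omega_p$ using its definition Eq.~(\ref{eq:syform}). The $\frak{g}$-component of $x\prec a^*$ is extracted by testing against elements of $A^*$ (since $\omega_p$ pairs $\frak{g}$ with $\frak{g}^*$) and the $\frak{g}^*$-component by testing against elements of $A$. The relations $\langle R_{\cdot}^*(x)a^*,y\rangle=-\langle a^*,y\cdot x\rangle$ and $\langle R_{\cdot}^*(a^*)x,b^*\rangle=-\langle b^*\cdot a^*,x\rangle$ from the convention Eq.~(\ref{eq:starlrno1}) are exactly what identify the two components as $R_{\cdot}^*(x)a^*$ (the $\frak{g}^*$-part) and $R_{\cdot}^*(a^*)x$ (the $\frak{g}$-part), giving Eq.~(\ref{eq:productwri1}) for $\prec$. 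The computation of $x\succ a^*$ proceeds identically but from $\omega_p((x,0)\succ(0,a^*),w)=\omega_p((0,a^*),[w,(x,0)])$, where the bracket is the sub-adjacent Lie bracket of the matched pair; here the terms reorganize into ${\rm ad}^*(x)a^*$ and $-R_{\succ}^*(a^*)x$. The remaining two formulas Eq.~(\ref{eq:productwri2}) follow either by the same direct computation with the roles of $A$ and $A^*$ interchanged, or more economically by invoking the symmetry of the double extension under swapping $A\leftrightarrow A^*$, which is manifest from the symmetric form of Eq.~(\ref{eq:lsatruonmat}).

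The main obstacle I anticipate is purely bookkeeping: keeping straight the four distinct adjoint-type operators ($R_{\cdot}^*$, $R_{\succ}^*$, ${\rm ad}^*$) and their sign conventions from Eq.~(\ref{eq:starlrno1}), and correctly separating each product into its $\frak{g}$- and $\frak{g}^*$-valued pieces via the two halves of $\omega_p$. In particular one must be careful that $x\succ a^*$ genuinely splits as an element of $A$ plus an element of $A^*$ and that the $\frak{g}^*$-part uses $\succ$ (yielding $R_{\succ}^*$) whereas the $\frak{g}$-part of $\prec$ uses $\cdot$ (yielding $R_{\cdot}^*$); this asymmetry traces back to the definitions $\omega(x\prec y,z)=-\omega(y,z\cdot x)$ and $\omega(x\succ y,z)=\omega(y,[z,x])$ in Eq.~(\ref{eq:complsa}), so I would double-check each of the four expansions against those two defining identities rather than trust a pattern.
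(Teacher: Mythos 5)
Your proposal is correct and follows essentially the same route as the paper's proof: pair each cross product against test elements of $A$ and $A^*$ separately (using that $\omega_p$ vanishes on $A\times A$ and $A^*\times A^*$), unwind the defining relations of Eq.~(\ref{eq:complsa}) via the explicit matched-pair product Eq.~(\ref{eq:lsatruonmat}) and the conventions Eq.~(\ref{eq:starlrno1}), and obtain Eq.~(\ref{eq:productwri2}) from Eq.~(\ref{eq:productwri1}) by the $A\leftrightarrow A^*$ symmetry. No gaps.
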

\begin{proof}
In fact, for any $x,y\in A,a^*,b^*\in A^*$, we have
\begin{eqnarray*}
\langle x\prec a^*,b^*\rangle&=&-\omega_p(x\prec
a^*,b^*)=\omega_p(a^*,b^*\cdot x)=\langle
a^*,L_{\cdot}^*(b^*)x\rangle =\langle R_{\cdot}^*(a^*)x,b^*\rangle,\\
\langle x\prec a^*,y\rangle&=&\omega_p(x\prec
a^*,y)=-\omega_p(a^*,y\cdot x)=\langle R_{\cdot}^*(x)a^*,y\rangle.
\end{eqnarray*}
Hence $x\prec a^*=R_{\cdot}^*(x)a^*+R_{\cdot}^*(a^*)x$. Moreover,
\begin{eqnarray*}
\langle x\succ a^*,b^*\rangle&=&-\omega_p(x\succ
a^*,b^*)=-\omega_p(a^*,[b^*,x])=-\langle
a^*,L_{\succ}^*(b^*)x\rangle=\langle b^*,-R_{\succ}^*(a^*)x\rangle\\
\langle x\succ a^*,y\rangle&=&\omega_p(x\succ
a^*,y)=\omega_p(a^*,[y,x])=\langle{\rm ad}^*(x)a^*,y\rangle.
\end{eqnarray*}
Hence $x\succ a^*={\rm ad}^*(x)a^*-R_{\succ}^*(a^*)x$. By symmetry
of $A$ and $A^*$ we have
$$a^*\prec x=R_{\cdot}^*(a^*)x+R_{\cdot}^*(x)a^*,\quad a^*\succ
x={\rm ad}^*(a^*)x-R_{\succ}^*(x)a^*.$$
\end{proof}

\noindent {\it Proof of Theorem~\ref{thm:double}.} By
Proposition~\ref{pp:startpo}, Proposition~\ref{pp:mato1co} and
Lemma~\ref{le:productwri}, we show that there exists a PLSA
structure on $A\oplus A^*$ (which is denoted by ``$\prec,\succ$")
given by Eq.~(\ref{eq:productwri1}) and Eq.~(\ref{eq:productwri2}).
Let $r\in A\otimes A^*\subset(A\oplus A^*)\otimes(A\oplus A^*)$
correspond
 to the identity map ${\rm id}:A\rightarrow A$. Let
$\{e_1,...,e_s\}$ be a basis of $A$ and $\{e_1^*,...e_s^*\}$ be its
dual basis. Then $r=\sum\limits_ie_i\otimes e_i^*$. Next we prove
$r$ satisfies the conditions of Theorem~\ref{thm:lsaconclusion}. If
so, then
$$\alpha_{\mathcal{D}}(u)=({\rm id}\otimes L_{\cdot}(u)+L_{\cdot}(u)\otimes{\rm
id})r\;\; {\rm and}\;\; \beta_{\mathcal{D}}(u)=(-{\rm id}\otimes{\rm
ad}(u)-L_{\succ}(u)\otimes{\rm id})r
$$ induce a post-left-symmetric bialgebra structure on $(A\oplus A^*,\prec,\succ)$,
where $u\in A\oplus A^*$. In fact, we shall prove that
$[[r,r]]_i=0$, $i=1,2$, and
\begin{equation}
(L_{\cdot}(u)\otimes{\rm id}+{\rm id}\otimes
L_{\cdot}(u))(r-\sigma(r))=0,\label{eq:prodoubl1}
\end{equation}
\begin{equation}
(L_{\prec}(u\prec v)\otimes{\rm id}+{\rm id}\otimes L_{\prec}(u\prec
v)-L_{\prec}(u)\otimes L_{\prec}(v)-L_{\prec}(v)\otimes
L_{\prec}(u))(r-\sigma(r))=0,\label{eq:prodoubl2}
\end{equation}
\begin{equation}
({\rm id}\otimes L_{\succ}(u)+{\rm ad}(u)\otimes{\rm
id})(r-\sigma(r))=0,\label{eq:prodoubl3}
\end{equation}
for all $u,v\in A\oplus A^*$. First, we have
\begin{eqnarray*}
[[r,r]]_1&=&r_{13}\cdot r_{23}+r_{12}\cdot r_{23}+r_{12}\prec
r_{13}\\
&=&\sum_{i,j}e_i\otimes e_j\otimes e_i^*\cdot e_j^*+e_i\otimes
e_i^*\cdot e_j\otimes e_j^*+e_i\prec e_j\otimes e_i^*\otimes e_j^*\\
&=&\sum_{i,j}e_i\otimes e_j\otimes e_i^*\cdot
e_j^*+e_i\otimes\{\langle e_i^*,-e_j\prec e_k\rangle e_k^*+\langle
e_j,-e_i^*\cdot e_k^*\rangle e_k\}\otimes e_j^*+\\
& &e_i\prec e_j\otimes e_i^*\otimes e_j^*=0.
\end{eqnarray*}
Similarly, $[[r,r]]_2=0$. Next we prove that
Eq.~(\ref{eq:prodoubl2}) holds. By a similar proof, we can show that
Eq.~(\ref{eq:prodoubl1}) and Eq.~(\ref{eq:prodoubl3}) hold. In fact,
in this case, Eq.~(\ref{eq:prodoubl2}) is equivalent to
\begin{eqnarray*}
& &\sum_k(u\prec v)\prec e_k\otimes e_k^*-(u\prec v)\prec
e_k^*\otimes
e_k+e_k\otimes(u\prec v)\prec e_k^*-e_k^*\otimes(u\prec v)\prec e_k\\
& &-v\prec e_k\otimes u\prec e_k^*+v\prec e_k^*\otimes u\prec
e_k-u\prec e_k\otimes v\prec e_k^*+u\prec e_k^*\otimes v\prec e_k=0,
\end{eqnarray*}
for any $u,v\in A\oplus A^*$. We can prove the above equation in the
following cases: (I) $u,v\in A$; (II) $u\in A,v\in A^*$; (III) $u\in
A^*,v\in A$; (IV) $u,v\in A^*$. As an example, we give a proof of
the first case (the proof of other cases is similar). Let
$u=e_i,v=e_j$, then coefficient of $e_m\otimes e_n$ (for any $m,n$)
in the above equation is
\begin{eqnarray*}
& &\sum_k-\langle (e_i\prec e_j)\prec
e_n^*,e_m^*\rangle+\langle(e_i\prec e_j)\prec
e_m^*,e_n^*\rangle-\langle e_j\prec e_k,e_m^*\rangle\langle e_i\prec
e_k^*,e_n^*\rangle+\\ & &\langle e_j\prec e_k^*,e_m^*\rangle\langle
e_i\prec e_k,e_n^*\rangle-\langle e_i\prec e_k,e_m^*\rangle \langle
e_j\prec e_k^*,e_n^*\rangle+\langle e_i\prec
e_k^*,e_m^*\rangle\langle
e_j\prec e_k,e_n^*\rangle\\
&=&\sum_k\langle e_i\prec e_j,e_m^*\cdot e_n^*\rangle-\langle
e_i\prec e_j,e_n^*\cdot e_m^*\rangle +\langle e_j\prec
e_k,e_m^*\rangle\langle e_i,e_n^*\cdot e_k^*\rangle\\
& &-\langle e_j,e_m^*\cdot e_k^*\rangle \langle e_i\prec
e_k,e_n^*\rangle+\langle e_i\prec e_k,e_m^*\rangle\langle
e_j,e_n^*\cdot e_k^*\rangle-\langle e_i,e_m^*\cdot e_k^*\rangle
\langle e_j\prec
e_k,e_n^*\rangle\\
&=&\langle e_i\prec e_j,[e_m^*,e_n^*]\rangle-\langle
e_j\prec(L_{\cdot}^*(e_n^*)e_i),e_m^*\rangle-\langle
R_{\cdot}^*(L_{\prec}^*(e_i)e_n^*)e_j,e_m^*\rangle-\langle
e_i\prec(L_{\cdot}^*(e_n^*)e_j),e_m^*\rangle\\ & &-\langle
R_{\cdot}^*(L_{\prec}^*(e_j)e_n^*)e_i,e_m^*\rangle\\
&=&\langle e_i\prec e_j,[e_m^*,e_n^*]\rangle-\langle
e_j\prec(e_n^*\cdot e_i)+e_i\prec(e_n^*\cdot e_j),e_m^*\rangle\\
&=&\langle e_i\prec e_j,[e_m^*,e_n^*]\rangle-\langle
e_n^*\succ(e_i\prec e_j),e_m^*\rangle=0.
\end{eqnarray*}
Similarly, the coefficients of $e_m^*\otimes e_n, e_m\otimes e_n^*$
and $e_m^*\otimes e_n^*$ are all zero, too. The last conclusion
follows from Proposition~\ref{pp:startpo} and
Proposition~\ref{pp:mato1co}.\hfill $\Box$

\section*{Appendix: An example of a special para-K\"ahler
Lie algebra}
In this appendix, we introduce a notion of a very
special para-K\"ahler Lie algebra and investigate its structure.

\medskip

 \noindent{\bf Definition A1}
\quad  A {\it para-K\"ahler structure} on a Lie algebra $\frak g$ is
a symplectic form $\omega:\frak g\otimes \frak g\rightarrow \mathbb
R$ and a paracomplex structure $E:\frak g\rightarrow \frak g$ and
they are {\it compatible} in the sense that
\begin{equation}
\omega(E(x),E(y))=-\omega(x,y),\quad \forall x,y\in\frak g.
\end{equation}
A Lie algebra with a para-K\"ahler structure is called a {\it
para-K\"ahler Lie algebra}.
\medskip

It is obvious that a para-K\"ahler Lie algebra is equivalent to a
symplectic Lie algebra such that it is a direct sum of the
underlying vector spaces of two Lagrangian subalgebras~(\cite{Bai1,
Bai2, Kan}). A  {\it para-K\"ahler manifold} is a symplectic
manifold $(M,\omega)$ with a {\it para-complex structure} $E$ such
that
\begin{equation}
\omega(E(X),E(Y))=-\omega(X,Y),\quad \forall X,Y\in\Gamma(TM),
\end{equation}
where $\omega$ is the symplectic form~(\cite{Lib}). It is denoted by
$(M,\omega,E)$. Recall that a {\it para-complex structure} on a
$2n$-dimensional smooth manifold is an endomorphism field
$E\in\Gamma({\rm End}TM)$ such that
\begin{enumerate}
\item
$E^2={\rm id}$;
\item
$[E(X),E(Y)]+[X,Y]=E([E(X),Y]+[X,E(Y)])$, for any
$X,Y\in\Gamma(TM)$, that is, $E$ is integrable;
\item
let $TM^{+}$ and $TM^{-}$ be the eigendistributions associated to
the eigenvalues $+1$ and $-1$ of $E$, then we have ${\rm
dim}TM_p^{+}={\rm dim}TM_p^{-}$ for any $p\in M$.
\end{enumerate}
A Lie group whose Lie algebra is a para-K\"ahler Lie algebra is a
particular example of a {\it homogeneous para-K\"ahler
manifold}~(\cite{Kan}).
\medskip

 \noindent{\bf Definition A2}\quad
 Let $(\frak{g},\omega,E)$ be a para-K\"ahler Lie algebra.
Suppose that there exists a flat and torsion free connection
$\nabla$ on $\frak{g}$ such that $\nabla E$ is a symmetric
$(1,2)$-tensor field, that is, $(\nabla_xE)y=(\nabla_yE)x$ for any
$x,y\in\frak{g}$, and $\omega$ is parallel with respect to $\nabla$,
that is, $\nabla\omega=0$. Then the quadruple
$(\frak{g},\omega,E,\nabla)$ is called a {\it special para-K\"ahler
Lie algebra}.
\medskip

 Note that a special para-K\"ahler Lie algebra is a
special symplectic Lie algebra.

A  {\it special para-K\"ahler manifold} is defined as a
para-K\"ahler manifold $(M,\omega,E)$ with a flat and torsion free
connection $\nabla$ such that
\begin{enumerate}
\item
$\nabla$ is symplectic, that is, $\nabla\omega=0$ and
\item
$\nabla E$ is a symmetric $(1,2)$-tensor field, i.e.,
$(\nabla_XE)Y=(\nabla_YE)X$ for all $X,Y\in\Gamma(TM)$.
\end{enumerate}
Special para-K\"ahler geometry was introduced in~\cite{CMMS}. It
arises as one of the special geometries of Euclidean super-symmetry.
It is obvious that the Lie group whose Lie algebra is a special
para-K\"ahler Lie algebra is a (homogeneous) special para-K\"ahler
manifold.

Next we investigate when a double extension of special symplectic
Lie algebras becomes a special para-K\"ahler Lie algebra. In fact,
according to Proposition~\ref{pp:startpo}, a double extension of
special symplectic Lie algebras is equivalent to the fact that
$(l(A),l(A^*),L_{\cdot}^*,L_{\prec}^*,L_{\cdot}^*,L_{\prec}^*)$ is a
matched pair of LSAs, where $(A,\prec,\succ)$ is a PLSA. Moreover,
it is easy to show that
$(\frak{g}(l(A)\bowtie_{L_{\cdot}^*,L_{\prec}^*}^{L_{\cdot}^*,L_{\prec}^*}l(A^*)),\omega_p,E)$
is a para-K\"ahler Lie algebra, where $\omega_p$ is defined by
Eq.~(\ref{eq:syform}) and
$E:\frak{g}(l(A)\bowtie_{L_{\cdot}^*,L_{\prec}^*}^{L_{\cdot}^*,L_{\prec}^*}l(A^*))\to
\frak{g}(l(A)\bowtie_{L_{\cdot}^*,L_{\prec}^*}^{L_{\cdot}^*,L_{\prec}^*}l(A^*))$
is the paracomplex structure on
$\frak{g}(l(A)\bowtie_{L_{\cdot}^*,L_{\prec}^*}^{L_{\cdot}^*,L_{\prec}^*}l(A^*))$
defined by
\begin{equation}
E(x+a^*)=x-a^*,\quad \forall x\in A,a^*\in
A^*.\label{eq:deprodcutst}
\end{equation}
Now let $\nabla$ be the connection corresponding to the compatible
LSA structure on
$\frak{g}(l(A)\bowtie_{L_{\cdot}^*,L_{\prec}^*}^{L_{\cdot}^*,L_{\prec}^*}l(A^*))$
defined by Eq.~(\ref{eq:lsatruonmat}). Then we have

\medskip

 \noindent{\bf Proposition A3}\quad\!\!
{\it  With the same conditions and notations as above,\!
$(\frak{g}(l(A)\bowtie_{L_{\cdot}^*,L_{\prec}^*}^{L_{\cdot}^*,L_{\prec}^*}l(A^*)),\omega_p,E,\!\nabla)$
is a special para-K\"ahler Lie algebra if and only if the operation
$\prec$ is trivial.}
\begin{proof}
By the proof of Proposition~\ref{pp:startpo}, we know that
$\omega_p$ is parallel with respect to $\nabla$. Moreover, for any
$x,y\in A,a^*,b^*\in A^*$, we have that
\begin{eqnarray*}
& &\nabla_{(x,a^*)}E((y,b^*))-E(\nabla_{(x,a^*)}(y,b^*))\\
&=&(x\cdot y+L_{\cdot}^*(a^*)y-L_{\prec}^*(b^*)x,-a^*\cdot
b^*-L_{\cdot}^*(x)b^*+L_{\prec}^*(y)a^*)-(x\cdot
y+L_{\cdot}^*(a^*)y+L_{\prec}^*(b^*)x,\\ & &-a^*\cdot
b^*-L_{\cdot}^*(x)b^*-L_{\prec}^*(y)a^*)\\
&=&(-2L_{\prec}^*(b^*)x,2L_{\prec}^*(y)a^*).
\end{eqnarray*}
Thus,
$\nabla_{(x,a^*)}E((y,b^*))-E(\nabla_{(x,a^*)}(y,b^*))=\nabla_{(y,b^*)}E((x,a^*))-E(\nabla_{(y,b^*)}(x,a^*))$
if and only if
$$(-2L_{\prec}^*(b^*)x,2L_{\prec}^*(y)a^*)=(-2L_{\prec}^*(a^*)y,2L_{\prec}^*(x)b^*),$$
for any $x,y\in A,a^*,b^*\in A^*$, if and only if the product
$\prec$ is trivial. So the conclusion follows immediately.
\end{proof}

Therefore we have the following obvious conclusion.

 \noindent{\bf Corollary A4}\quad
{\it  Let $(A,\cdot)$ be an LSA. Suppose that there exists an LSA
structure on $A^*$ $($which we still denote by $\cdot$$)$ such that
$(A,A^*,L_{\cdot}^*,0,L_{\cdot}^*,0)$ is a matched pair of LSAs.
Then
$(\frak{g}(A\bowtie_{L_{\cdot}^*,0}^{L_{\cdot}^*,0}A^*),\omega_p,E,\nabla)$
is a special para-K\"ahler Lie algebra, where $\omega_p$ is defined
by  Eq.~$($\ref{eq:syform}$)$ and $E$ is defined by
Eq.~$($\ref{eq:deprodcutst}$)$ and $\nabla$ is the connection on
$\frak{g}(A\bowtie_{L_{\cdot}^*,0}^{L_{\cdot}^*,0}A^*)$ induced by
the compatible LSA structure
$A\bowtie_{L_{\cdot}^*,0}^{L_{\cdot}^*,0}A^*$.\label{co:forspe} }
\medskip

 \noindent{\bf Definition A5}\quad  We call a special para-K\"ahler Lie algebra formulated in
Corollary A4 a {\it very special para-K\"ahler Lie algebra}.
\medskip

Our next task is to investigate the structure of a very special
para-K\"ahler Lie algebra. In fact, the following conclusion gives a
structure theory of a very special para-K\"ahler Lie algebra in
terms of a kind of bialgebras (cf.~\cite{Bai2}).
\medskip

 \noindent{\bf Theorem A6}\quad
{\it Let $(A,\cdot,\alpha)$ be an LSA endowed with a linear map
$\alpha:A\to A \otimes A$. Suppose that $\alpha^*:A^*\otimes A^*\to
A^*$ induces an LSA structure on $A^*$ which we still denote by
$\cdot$. Then $(A,A^*,L_{\cdot}^*,0,L_{\cdot}^*,0)$ is a matched
pair of LSAs if and only if the following equation holds:
\begin{equation}
\alpha(x\cdot y)=(L_{\cdot}(x)\otimes{\rm id}+{\rm id}\otimes
L_{\cdot}(x))\alpha(y)+({\rm id}\otimes
R_{\cdot}(y))\alpha(x),\label{eq:spelsba}
\end{equation}
for any $x,y\in A$.\label{thm:mato2co} }

\begin{proof}
By Proposition-Definition~\ref{pd:matlsa}, we need to prove that
Eq.~(\ref{eq:lsabi1})-Eq.~(\ref{eq:lsabi4}) are equivalent to
Eq.~(\ref{eq:spelsba})
 in the case that
 $$A_1=A,\;A_2=A^*,\; {\rm and}\;
 l_1=L_{\cdot}^*,\; r_1=0,\; l_2=L_{\cdot}^*,\; r_2=0.$$
In fact, it is easy to see that in this case, Eq.~(\ref{eq:lsabi1})
and Eq.~(\ref{eq:lsabi3}) hold automatically. Next, we prove that
Eq.~(\ref{eq:lsabi2})$\Leftrightarrow$Eq.~(\ref{eq:lsabi4})$\Leftrightarrow$
Eq.~(\ref{eq:spelsba}). As an example we give an explicit proof of
the fact that Eq.~(\ref{eq:lsabi2}) holds if and only if
Eq.~(\ref{eq:spelsba}) holds. The proof of the equivalence between
Eq.~(\ref{eq:lsabi4}) and Eq.~(\ref{eq:spelsba}) is similar. In
fact, in this case, Eq.~(\ref{eq:lsabi2}) becomes
$$L_{\cdot}^*(a^*)(x\cdot y)=-L_{\cdot}^*(L_{\cdot}^*(x)a^*)y+(L_{\cdot}^*(a^*)x)\cdot y+
x\cdot(L_{\cdot}^*(a^*)y),$$ for any $x,y\in A^*,a^*\in A^*$. Let
the both sides of the above equation act on an arbitrary element
$b^*\in A^*$. Then we get $$\langle x\cdot y,-a^*\cdot
b^*\rangle=\langle y,(L_{\cdot}^*(x)a^*)\cdot b^*\rangle+\langle
x,a^*\cdot(R_{\cdot}^*(y)b^*)\rangle+\langle
y,a^*\cdot(L_{\cdot}^*(x)b^*)\rangle,$$ which is equivalent to the
following equation $$\langle -\alpha(x\cdot y),a^*\otimes
b^*\rangle=\langle-(L_{\cdot}(x)\otimes{\rm id})\alpha(y)-({\rm
id}\otimes R_{\cdot}(y))\alpha(x)-({\rm id}\otimes
L_{\cdot}(x))\alpha(y),a^*\otimes b^*\rangle.$$ It exactly gives
Eq.~(\ref{eq:spelsba}).
\end{proof}

 \noindent{\bf Definition A7}\quad
\begin{enumerate}
\item
Let $V$ be a vector space and $\alpha:V\to V\otimes V$ be a linear
map. Then $(V,\alpha)$ is called a {\it left-symmetric coalgebra
$($LSCA$)$} if $T_{\alpha}$ is zero, where $T_{\alpha}:V\to V\otimes
V\otimes V$ is defined as follows (for any $x\in V$):
\begin{equation}
T_{\alpha}(x)=(\alpha\otimes{\rm id})\alpha(x)-(\sigma\otimes{\rm
id})(\alpha\otimes{\rm id})\alpha(x)-({\rm
id}\otimes\alpha)\alpha(x)+(\sigma\otimes{\rm id})({\rm
id}\otimes\alpha)\alpha(x).\label{eq:lsca}
\end{equation}
It is obvious that $(V,\alpha)$ is an LSCA if and only if
$\alpha^*:V^*\otimes V^*\to V^*$ induces an LSA structure on $V^*$.
\item
Let $(A,\cdot,\alpha)$ be an LSA with a linear map $\alpha:A\to
A\otimes A$ such that $(A,\alpha)$ is an LSCA. If in addition,
$\alpha$ satisfies Eq.~(\ref{eq:spelsba}), then $(A,\cdot,\alpha)$
is called a {\it special left-symmetric bialgebra $($SLSBA$)$}.
\end{enumerate}
\medskip

There is a notion of {\it left-symmetric bialgebra} which is
equivalent to the notion of para-K\"ahler Lie algebra~(\cite{Bai2}).
Like the post-left-symmetric bialgebras, both left-symmetric
bialgebras and special left-symmetric bialgebras can be put into the
framework of the {\it generalized bialgebras} in the sense of
Loday~(\cite{Lo}), too.
\medskip

 \noindent{\bf Definition A8}\quad
 A special left-symmetric bialgebra $(A,\cdot,\alpha)$ is called
{\it coboundary} if $\alpha$ is given by the following form:
\begin{equation}
\alpha(x)=({\rm id}\otimes R_{\cdot}(x))r,\label{eq:coslsab}
\end{equation}
where $x\in A,r\in A\otimes A$.
\medskip

 \noindent{\bf Proposition A9}\quad
{\it Let $(A,\cdot)$ be an LSA. Let $\alpha:A\to A\otimes A$ be a
linear map defined by Eq.~$($\ref{eq:coslsab}$)$. Then $\alpha$
satisfies Eq.~$($\ref{eq:spelsba}$)$ if and only if the following
equation is satisfied:
\begin{equation}
({\rm id}\otimes R_{\cdot}(y))(L_{\cdot}(x)\otimes{\rm id}+{\rm
id}\otimes L_{\cdot}(x))r=0,\;\;\forall x,y\in
A.\label{eq:specialco}
\end{equation}
\label{pp:specon11} }
\begin{proof}
Straightforward (cf. Proposition~\ref{pp:lecolsa}).
\end{proof}

 \noindent{\bf Proposition A10}\quad {\it Let $(A,\cdot)$ be an LSA and $r=\sum\limits_ia_i\otimes b_i\in
A\otimes A$. Let $\alpha:A\to A\otimes A$ be a linear map defined by
Eq.~$($\ref{eq:coslsab}$)$. Define a linear map $T_{\alpha}:A\to
A\otimes A$ by Eq.~$($\ref{eq:lsca}$)$. Then for any $x\in A$, we
have
\begin{equation}
T_{\alpha}(x)=({\rm id}\otimes{\rm id}\otimes
R_{\cdot}(x))(r_{12}\cdot r_{23}-r_{21}\cdot
r_{13}+[r_{13},r_{23}]),\label{eq:specit}
\end{equation}
where $r_{21}=\sum_ib_i\otimes a_i\otimes 1$ and $r_{21}\cdot
r_{13}=\sum_{i,j}b_i\cdot a_j\otimes a_i\otimes
b_j$.\label{pp:spelsco11}}

\begin{proof}
In fact, for any $x\in A$, we have

{\small \begin{eqnarray*} T_{\alpha}(x)&=&\sum_{i,j}a_i\otimes
b_i\cdot a_j\otimes b_j\cdot x-b_i\cdot a_j\otimes a_i\otimes
b_j\cdot x-a_j\otimes a_i\otimes
b_i\cdot (b_j\cdot x)+a_i\otimes a_j\otimes b_i\cdot(b_j\cdot x)\\
&=&({\rm id}\otimes{\rm id}\otimes R_{\cdot}(x))(r_{12}\cdot
r_{23}-r_{21}\cdot r_{13}+[r_{13},r_{23}]).
\end{eqnarray*}}
\end{proof}

By Proposition A9 and Proposition A10 we have the following result.
\medskip

 \noindent{\bf Theorem A11}\quad
{\it  Let $(A,\cdot)$ be an LSA and $r\in A\otimes A$. Then the
linear map $\alpha$ defined Eq.~$($\ref{eq:coslsab}$)$ induces an
LSCA structure on $A$ such that $(A,\cdot,\alpha)$ becomes a SLSBA
if and only if $r$ satisfies Eq.~$($\ref{eq:specialco}$)$ and
$T_{\alpha}$ given by Eq.~$($\ref{eq:specit}$)$ vanishes
identically.\label{thm:specconclsi} }
\medskip

 A direct application of
Theorem A11 is given as follows, which is parallel to
Theorem~\ref{thm:double}. It allows us to construct an infinite
family of very special para-K\"ahler Lie algebras from a very
special para-K\"ahler Lie algebra in a natural way. We would like to
point out that it can be regarded as a structure property of a very
special para-K\"ahler Lie algebra. Moreover, since a very special
para-K\"ahler Lie algebra is also a special symplectic Lie algebra,
it also enables us to construct an infinite family of special
symplectic Lie algebras from a very special para-K\"ahler Lie
algebra.
\medskip

 \noindent{\bf Theorem A12}\quad
{\it Let $(A,\cdot,\alpha)$ be a SLSBA. Then there exists a natural
coboundary SLSBA structure on $A\oplus A^*$.}

\begin{proof}
By Theorem A6, we know that $(A,A^*,L_{\cdot}^*,0,L_{\cdot}^*,0)$ is
a matched pair of LSAs. So there exists an LSA structure on $A\oplus
A^*$ given as follows:
$$(x,a^*)\cdot(y,b^*)=(x\cdot y+L_{\cdot}^*(a^*)y,a^*\cdot
b^*+L_{\cdot}^*(x)b^*),\;\;\forall x,y\in A,a^*,b^*\in A^*.$$
 Let $r\in A\otimes A^*\subset(A\oplus
A^*)\otimes(A\oplus A^*)$ correspond to the identity map ${\rm
id}:A\rightarrow A$. Let $\{e_1,...,e_s\}$ be a basis of $A$ and
$\{e_1^*,...e_s^*\}$ be its dual basis. Then
$r=\sum\limits_ie_i\otimes e_i^*$. Next we prove $r$ satisfies the
conditions of Theorem A11. If so, then
$\alpha_{\mathcal{SD}}(u)=({\rm id}\otimes R_{\cdot}(u))r$ induces a
coboundary SLABA structure on $(A\oplus A^*,\cdot)$, where $u\in
A\oplus A^*$. In fact, for any $k\in\{1,...,n\}$, we have
\begin{eqnarray*}
(L_{\cdot}(e_k)\otimes{\rm id}+{\rm id}\otimes
L_{\cdot}(e_k))\sum_ie_i\otimes e_i^*&=&\sum_ie_k\cdot e_i\otimes
e_i^*+e_i\otimes e_k\cdot e_i^*\\
&=&\sum_ie_k\cdot e_i\otimes e_i^*+e_i\otimes\langle e_i^*,-e_k\cdot
e_j\rangle e_j^*=0.
\end{eqnarray*}
Similarly, we have $(L_{\cdot}(e_k^*)\otimes{\rm id}+{\rm id}\otimes
L_{\cdot}(e_k^*))\sum\limits_ie_i\otimes e_i^*=0$. Therefore,
$(L_{\cdot}(u)\otimes{\rm id}+{\rm id}\otimes L_{\cdot}(u))r=0$ for
any $u\in A\oplus A^*$. Furthermore,
\begin{eqnarray*}
& &r_{12}\cdot r_{23}-r_{21}\cdot
r_{13}+[r_{13},r_{23}]\\&=&\sum_{i,j}e_i\otimes e_i^*\cdot
e_j\otimes e_j^*-e_i^*\cdot e_j\otimes e_i\otimes e_j+e_i\otimes
e_j\otimes
[e_i^*,e_j^*]\\
&=&\sum_{i,j}e_i\otimes\langle e_j,-e_i^*\cdot e_k^*\rangle
e_k\otimes e_j^*-\langle e_j,-e_i^*\cdot e_k^*\rangle e_k\otimes
e_i\otimes e_j+e_i\otimes e_j\otimes [e_i^*,e_j^*]=0.
\end{eqnarray*}
So the conclusion follows.
\end{proof}

\section*{Acknowledgements}

The first author thanks Professor N.J. Hitchin for kindly sending
his paper to him. This work was supported in part by the National
Natural Science Foundation of China (10621101, 10921061), NKBRPC
(2006CB805905) and SRFDP (200800550015).

\end{document}